\newcommand{\vsigma}{\vec{\sigma}}			%Strategy profile
\newcommand{\va}{\vec{a}}					%Action profile
\newcommand{\hist}{{\cal H}}					%History
\newcommand{\act}{{\cal A}}					%Set of actions
\newcommand{\term}{{\cal Z}}					%Terminal histories
\newcommand{\eu}[3]{\mathbb{E}^{#2}(u_{#1}|#3)}	%Expected utility
\newcommand{\pr}{P}						%Probability
\newcommand{\pra}[3]{\pr^{#2}(#1|#3)}			%Probability conditioned by
\newcommand{\rnd}{\textit{rn}}					%Round number
\newcommand{\key}{\kappa}					%Key
\newcommand{\stat}{\Theta}					%Status
\newcommand{\id}{\textit{id}}					%ID
\newcommand{\seed}{\textit{sd}}			         %Seed
\newcommand{\prng}{\textit{SG}}				%Subset generator
\newcommand{\miss}{\textit{Miss}}				%Missing events
\newcommand{\seq}{\textit{seq}}				%Sequence number
\newcommand{\ids}{\{1 \ldots \nevt\}}			%Events identifiers
\newcommand{\rec}{\textit{rec}}				%Obtained tuples
\newcommand{\se}{\textit{SE}}					%Sent events
\newcommand{\re}{\textit{RE}}					%Received events
\newcommand{\pend}{\textit{PE}}				%Pendent events
\newcommand{\age}{\textit{age}}				%Age of event
\newcommand{\false}{\mbox{\emph{False}}}		
\newcommand{\true}{\mbox{\emph{True}}}
\newcommand{\good}{\mbox{\emph{Good}}}
\newcommand{\bad}{\mbox{\emph{Bad}}}
\newcommand{\nrnd}{\tau} 		 	 		%Number of rounds
\newcommand{\nrndm}{\tau^{\mbox{\tiny{M}}}}  	%Number of monitoring rounds
\newcommand{\nrndd}{\tau^{\mbox{\tiny{D}}}}		%Number of dissemination rounds
\newcommand{\nevt}{\nu}		                		 	%Number of events
\newcommand{\dens}{\rho}                             		%Maximum delay
\newcommand{\pset}{{\cal N}}                                 	%Set of players
\newcommand{\evt}{{\cal E}}  	                               	%Set of events
\newcommand{\nseq}{n^{\mbox{\tiny{S}}}}                 	%Number of sequences
\newcommand{\npseq}{n^{\mbox{\tiny{E}}}}                %Number of events per sequence
\newcommand{\pmon}{p^*}                				%Probability of monitoring sequence
\newcommand{\recev}[3]{\evt_{#1}^{#2}(#3)}		%Received events with given identifier
\newcommand{\fakeparagraph}[1]{\medskip\noindent\textbf{#1~}}
\newtheorem{theorem}{Theorem}[section]
\newtheorem{lemma}[theorem]{Lemma}
\newtheorem{proposition}[theorem]{Proposition}
\newtheorem{definition}[theorem]{Definition}
\begin{document}

\title{On the Range of Equilibria Utilities of a Repeated Epidemic Dissemination Game with a Mediator}

\numberofauthors{2}
\author{
\alignauthor
Xavier Vila\c{c}a\\
       \affaddr{INESC-ID, Instituto Superior T\'{e}cnico}\\
       \affaddr{Universidade de Lisboa, Portugal}\\
       \email{xvilaca@gsd.inesc-id.pt}
\alignauthor
Lu\'{i}s Rodrigues\\
       \affaddr{INESC-ID, Instituto Superior T\'{e}cnico}\\
       \affaddr{Universidade de Lisboa, Portugal}\\
       \email{ler@ist.utl.pt}
}

\maketitle
\begin{abstract}
We consider eager-push epidemic dissemination in a complete graph. Time is divided into synchronous stages. In each stage, 
a source disseminates $\nevt$ events. Each event is sent by the source, and forwarded by each node upon its first reception,
to $f$ nodes selected uniformly at random, where $f$ is the fanout. We use Game Theory to study the range of $f$ 
for which equilibria strategies exist, assuming that players are either rational or obedient to the protocol, and that they do not collude.
We model interactions as an infinitely repeated game. We devise a monitoring mechanism that extends the repeated game with communication rounds used for exchanging
monitoring information, and define strategies for this extended game. We assume the existence of a trusted mediator,
that players are computationally bounded such that they cannot break the cryptographic primitives used in our mechanism, 
and that symmetric ciphering is cheap. Under these assumptions, we show that, if the size of the stream is sufficiently large 
and players attribute enough value to future utilities, then the defined strategies are Sequential Equilibria of the extended game
for any value of $f$. Moreover, the utility provided to each player is arbitrarily close to that provided in the original game.
This shows that we can persuade rational nodes to follow a dissemination protocol that uses any fanout, while arbitrarily minimising the relative overhead of monitoring.
\end{abstract}

\category{C.2.4}{Computer-Communication Networks}{Distributed Systems}
\category{K.6.0}{Management of Computing and Information Systems}{General}[Economics]

\terms{Theory, Economics, Reliability}

\keywords{epidemic dissemination, repeated games, monitoring}

\section{Introduction}
This paper addresses the impact of rational behaviour in epidemic dissemination protocols, executed over a complete graph\,\cite{Birman:99}. 
An epidemic dissemination protocol operates as follows: a source splits a stream of bits into $\nevt$ events, which are sent to a set of $f$ nodes chosen
uniformly at random, where $f$ is known as the \emph{fanout}; nodes repeated this procedure upon the first reception of each event.
Protocols of this type are know to achieve a good tradeoff between high reliability of event delivery and communication overhead,
and have been used in a variety of applications such as video streaming\,\cite{Li:06,Li:08}. In this context, rational behaviour may be characterised
by the aim of maximising a utility, given as follows. Rational nodes value the stream, but they prefer to send as few messages as possible,
in order to spare bandwidth. Therefore, the utility can be defined as the difference between the benefits, which increase with the number of received events, 
and the communication costs for sending messages. This setting poses the problem that rational nodes always prefer not to forward any messages.

To address this issue, we explore the possibility of nodes interacting repeatedly in multiple executions of the dissemination protocol, in order to hold nodes
accountable for their present behaviour by adjusting their utility in the future. This models periodic streaming sessions (e.g., weekly sporting events).
Using Game Theory\,\cite{Osborne:94}, we study incentives to persuade rational nodes to follow the protocol, also admitting
the possibility that some nodes may be acquiescent, i.e., obedient to the protocol\,\cite{Aiyer:05,Wong:10}. We assume that rational nodes do not collude.
We model interactions as an infinitely repeated game where future utilities are discounted to the present by some factor $\delta \in (0,1)$,
which determines the value given by players to future utilities\footnote{We use the designations player and node to describe the entities of the system:
node refers to the operational side of the entity; player refers to the (rational) user controlling the node.}.
This is adequate when players are uncertain about the number of future interactions and value utilities obtained in the present
over future ones\,\cite{Osborne:94}. Our aim is to study the range of the values of $f$ used by dissemination protocols that correspond to equilibria strategies of the repeated game, that is, where
no player has any incentive to deviate from the protocol assuming that other players also do not deviate.

When players interact repeatedly, incentives may be based on \emph{direct} or \emph{indirect} reciprocity\,\cite{Trivers:71}.
With \emph{direct reciprocity}, each player $i$ adapts his strategy towards every other player $j$ in reaction to past actions of $j$ directed towards $i$ only. 
An example is the tit-for-tat strategy\,\cite{Axelrod:84,Cohen:03}. Most epidemic dissemination systems that cope 
with rational behaviour also use direct reciprocity\,\cite{Li:06,Li:08,Keidar:09}. Unfortunately, strategies of this type are vulnerable to the redundancy
of epidemic dissemination\,\cite{Vilaca:13}. Namely, if a player $i$ does not cooperate with $j$, then at best $j$ may punish $i$. 
However, when redundancy is high, $i$ has other neighbours from whom $i$ receives events. Hence, the impact of such punishment is 
arbitrarily low, since $i$ continues to receive events with sufficiently high probability. This makes direct reciprocity an ineffective type of incentives when $f$ is large.

%\textit{Indirect Reciprocity}.
\emph{Indirect reciprocity} circumvents the limitations of direct reciprocity by having nodes sharing information regarding private observations. 
This allows all the nodes to coordinate on effective punishments against any player $i$
by not forwarding any event to $i$, decreasing his utility to $0$. In Game Theoretical terms,
we can punish each player by decreasing his utility to the minimax value. Under this possibility, we might apply
a well known set of results called Folk Theorems, which state that we may devise equilibria strategies for infinitely repeated games 
that provide any feasible strictly positive utility to every player\,\cite{Mailath:07}, given that $\delta$ is sufficiently large.
In particular, these results imply that it is possible to sustain cooperation while using any fanout for disseminating events.
However, existing proofs of Folk Theorems assume that some underlying monitoring infra-structure provides information about the behaviour of each node,
at no cost to the players\,\cite{Mailath:07}. Such assumption is unrealistic, since any implementation of a monitoring mechanism always incurs communication costs.

\fakeparagraph{Goal}: We show that cooperation can be sustained for any fanout using a monitoring mechanism that is not free of cost. For this purpose, we prove the 
existence of equilibria strategies for the game induced by the monitoring mechanism. We consider the notion of Sequential Equilibrium\,\cite{Kreps:82},
which is stronger than Nash Equilibrium (NE) since it excludes strategies that rely on non-credible threats. For instance, when considering monitoring,
the notion of NE does not evaluate the optimality of a strategy when a player has observed a deviation and has to communicate this fact to other players.
In line with the Folk Theorems, we also aim at providing any feasible and strictly positive utility to each player, which requires the minimisation of 
the communication overhead of monitoring relative to the original dissemination protocol. Existing works have faced the challenge of implementing 
a distributed monitoring mechanism\,\cite{Guerraoui:10}, and performed a game theoretical analysis of epidemic dissemination\,\cite{Vilaca:13}. To the best of our knowledge,
none has studied the range of $f$ used by equilibria strategies of the repeated epidemic dissemination game.

\fakeparagraph{Challenges:}
To fulfil our goal, we define a monitoring mechanism that extends the infinitely repeated epidemic dissemination game with additional communication rounds,
and we propose a set of strategies for this game. In addition to disseminating a stream of events, nodes review the behaviour of every player and share
this information, which is used to decide when to punish each player. When a punishment against a player $i$ is in place,
no node sends events to $i$, denying any benefit to $i$. This way, the threat of punishment out-weighs the gain from deviating
from the specified strategy. The following main challenges are addressed:

\emph{Challenge 1: Strategic Monitoring}. Players may deviate when sharing monitoring information. 
For instance, a punishment of any player $i$ causes the overall reliability of dissemination to decrease. 
Hence, players are not willing to share information incriminating $i$.

\emph{Challenge 2: Mixed strategies}. Nodes randomly select the neighbours to forward each event. The difficulty lies in preventing
players from biasing this selection.

\emph{Challenge 3: Hidden Events}. While observing the actions of $i$, player $j$ does not observe the set of
events received by $i$. Thus, the monitoring mechanism may raise false positives, causing players to be punished undeservedly.

\emph{Challenge 4: Overhead of monitoring}. Nodes cannot share information regarding each disseminated event,
otherwise the overhead of monitoring is not minimised. Though, monitoring only a subset of disseminated events
introduces the problem of false negatives, where misbehaviour is undetected.

\fakeparagraph{Summary of Contributions:} 
We devise a monitoring mechanism and a set of strategies that are Sequential Equilibria of the extended game for any fanout,
provided that $\nevt$ and $\delta$ are sufficiently large. These strategies also minimise the communication overhead of monitoring
relative to the original dissemination protocol. It is important to notice that the overhead is minimised only relative to the total size of the stream.
We use symmetric cryptography to cope with strategic monitoring.
Assuming that symmetric ciphering costs are negligible and players are computationally bounded, this allows cheap punishments to be applied to players,
while creating incentives for them to continue forwarding events. Our results offer an improvement over existing work towards the goal of designing
a practical monitoring mechanism\,\cite{Guerraoui:10}, which does not consider strategic monitoring. We use a pseudo-random number to address the challenge raised by mixed strategies.   
We address the challenge of hidden events by relying on a trusted mediator to collect information about the events sent to and received from each node $i$.
We minimise monitoring overhead by having nodes reporting in expectation on only a sub-linear number of events.
The mediator coordinates the selection of these events to ensure that the probability of false negatives is sufficiently low, and that no false positives are raised.

\fakeparagraph{Paper Organisation:}
The remainder of the paper is organised as follows. In Section~\ref{sec:model}, we introduce our model.
Section~\ref{sec:mon} contains the monitoring mechanism and strategies.
In Section~\ref{sec:analysis}, we perform a Game Theoretical analysis of the strategies to prove the main result.
Section~\ref{sec:discuss} concludes the paper with a short discussion.

\section{Model}
\label{sec:model}
We consider a synchronous message passing system with reliable and authenticated communication.
The set of nodes is denoted by $\pset$, and $n$ is its cardinality. We consider that $\pset$ is common knowledge\footnote{Every player knows
this set, knows that every player knows this set, and so on.}.
Players do not collude and have perfect recall. Time is divided into stages, which are further divided into $\nrnd$ synchronous rounds.
In each stage $t$, the source disseminates a set $\evt^t$ of $\nevt$ events, drawn from a much larger but finite set $\evt$.
The process of generating $\evt^t$ must be sufficiently random, such that any player $i$ can guess any $e \in \evt^t$ beforehand
only with a sufficiently small probability. This is to justify the assumption that players prefer to receive these events than to try to guess them. Every event $e \in \evt^t$ 
has a unique identifier $\id \in \{1 \ldots \nevt\}$, and is disseminated as follows: first, the source sends $e$ to a subset of $f$ neighbours chosen
uniformly at random, where $f$ is the fanout. Then, each node forwards $e$ upon its first reception, also to a subset of $f$ neighbours
selected uniformly at random. This process ends until no node forwards $e$ or until a maximum delay of $\dens$ rounds to deliver $e$
is reached, after which the event is said to expire. We assume that $\dens$ does not increase with $\nevt$.
A different event is introduced in each round by the source. We let $\nrnd \geq \nevt + \dens$ to allow every event to be disseminated
until it expires. We consider that the source is obedient to the protocol and may act as a trusted mediator. Using the language
of~\,\cite{Aiyer:05,Wong:10}, we say that the source is acquiescent. We treat every other player as rational, although our results still hold if
other acquiescent players exist. In Section~\ref{sec:discuss}, we discuss how to distribute the role of the mediator.
Players are assumed to be computationally bounded - they cannot break the cryptographic primitives used in our strategies, in the time required by a stage.

Interactions are modelled as an infinitely repeated epidemic dissemination game.
An action of player $i$ is a vector $\va_i \in \act_i$ specifying for each $j \in \pset \setminus \{i\}$ a message $a_i(j)$ sent by $i$ to $j$. Messages contains a finite
number of tuples $(\id,e)$, where $\id \in \ids$ and $e \in \evt$. A history $h \in \hist$ is a finite sequence of action profiles $\va \in \act$ specifying all the messages sent after multiple rounds.
Any player $i$ cannot completely observe a history; instead, $i$ only observes a corresponding private history $h_i \in \hist_i$, specifying the messages sent
and received by $i$ in $h$. We say that histories $h_i$ and $h$ are from stage $t$ if the first round following these histories belongs to $t$.
A strategy $\sigma_i \in \Sigma_i$ specifies a probability distribution $\sigma_i(.|h_i)$ over the actions taken by $i$ in the round immediately
succeeding the observation of $h_i$. A strategy profile $\vsigma \in \Sigma$ specifies the strategy followed by every player. Given any $\vsigma \in \Sigma$,
each player $i$ forms a belief $\mu$ regarding the realised history $h \in \hist$ after the observation of $h_i \in \hist_i$,
in the form of a probability $\mu^{\vsigma}(h|h_i)$. $\mu$ is also common knowledge. If we fix some strategy profile $\vsigma$
that is expected to be followed by any player, then in any realised history $h \in \hist$ the behaviour of an acquiescent player 
$i$ is always compatible with $i$ having followed $\vsigma$ in $h$. Henceforth, whenever referring to a history $h \in \hist$
and strategy profile $\vsigma \in \Sigma$, we will implicitly consider that the behaviour of the source (and the mediator) in $h$
is compatible with $\vsigma$. Thus, we never analyse histories where acquiescent entities have deviated in the past, since this analysis
is irrelevant to the proof of equilibrium. This fact is also captured by the defined belief system: for any $h_i \in \hist_i$ and $h \in \hist$,
we have $\mu^{\vsigma}(h|h_i) > 0$ only if all acquiescent entities have followed $\vsigma$ in $h$. This implies that $i$ never observes
a deviation of some acquiescent entity in any $h_i$.

\subsection{Utility}
\label{sec:exp-util}
The expected utility of the infinitely repeated game obtained by any player $i$ depends on the following factors:
i) the private history $h_i \in \hist_i$ observed by $i$, initially equal to $\emptyset$; ii) the strategy profile $\vsigma$  followed by
every player; iii) the belief system $\mu$; and iv) the realised utility $u_i$ of receiving events and sending messages during each stage following the observation of $h_i$.
The pair $(\vsigma,\mu)$ allows $i$ to form an expectation of what occurred in the past,
in terms of a probability distribution over the histories $h \in \hist$. Given any $h$, $i$ can predict
the future behaviour of any player $j$ in any future stage $t$, given that $j$ follows $\sigma_j$.
More precisely, $\vsigma$ defines a probability distribution over the outcomes $\term^t \subseteq \hist$ of stage $t$,
where each outcome $z \in \term^t$ is a history that matches the end of stage $t$. We denote by $\pra{z}{\vsigma}{h}$
the probability of $z$ being reached after $h$, given that players follow $\vsigma$. The realised utility for stage $t$ is a function $u_i(z)$ of
the outcome $z \in \term^t$ reached in stage $t$. This function quantifies the average benefits of receiving events and the average costs of sending messages,
per disseminated event. More precisely, we consider that every player receives a benefit $\beta$ per received event $e \in \evt^t$ and incurs a cost $\alpha$ per bit
sent in a message. $u_i(z)$ is given by the total benefits minus the total costs. We divide $u_i(z)$ by $\nevt$ in order to normalise it to
the average utility per disseminated event. We provide a formal definition of $u_i(z)$ in Section~\ref{sec:realised:util}, after defining the monitoring mechanism.

With this in mind, the expected utility for stage $t$ given the observation of $h_i$, denoted by $\eu{i}{\vsigma,\mu,t}{h_i}$,
is the weighted sum over every history $h$ compatible with the observation of $h_i$ and outcomes of stage $t$ following $h$:
\begin{equation}
\eu{i}{\vsigma,\mu,t}{h_i} = \sum_{h \in \hist}\sum_{z \in \term^t} \mu^{\vsigma}(h|h_i) \pra{z}{\vsigma}{h} u_i(z).
\end{equation}

Finally, for any $i \in \pset$ and $h_i \in \hist_i$ from any stage $t$, $\eu{i}{\vsigma,\mu}{h_i}$ is the weighted infinite sum over every stage $t' \geq t$ of the expected utility of
$t'$. We use a discount factor $\delta \in (0,1)$ to discount future utilities to the present:
\begin{equation}
\eu{i}{\vsigma,\mu}{h_i} = \sum_{t' \geq t} \delta^{t'-t} \eu{i}{\vsigma,\mu,t'}{h_i}.
\end{equation}

\subsection{Notion of Equilibrium}
We consider the notion of Sequential Equilibrium (SE)\,\cite{Kreps:82}. We say that a pair
$(\vsigma,\mu)$ is a SE if it is Sequentially Rational and Consistent. $(\vsigma,\mu)$ is Sequentially Rational if 
$\sigma_i$ maximises $\eu{i}{\vsigma,\mu}{h_i}$ for any $h_i \in \hist_i$, conditional on the belief that other players follow $\vsigma$ after $h_i$ is observed.
Formally, let $(\sigma_i',\vsigma_{-i})$ be the strategy profile where every $j \in \pset \setminus \{i\}$ follows $\sigma_j \in \vsigma_{-i}$ and $i$ follows $\sigma_i'$.

\begin{definition}
The pair $(\vsigma,\mu)$ is Sequentially Rational iff for every $i \in \pset$, $h_i \in \hist_i$, and $\sigma_i' \in \Sigma_i$, it holds:
$$\eu{i}{\vsigma,\mu}{h_i} \geq \eu{i}{(\sigma_i',\vsigma),\mu}{h_i}.$$
\end{definition}

We do not include the formal definition of Consistency (c.f.~\cite{Kreps:82}). Informally, $(\vsigma,\mu)$ is Consistent if $\mu$ is defined using the 
Bayes rule according to the behaviour specified by $\vsigma$, whenever possible.
When some $h_i$ is observed that is inconsistent with the hypothesis that players have been following $\vsigma$,
the definition of consistency requires the specification of an alternative hypothesis for explaining the observed behaviour.
We fix a belief system $\mu$ that suits our purposes, defined as follows.
For any $\epsilon > 0$, define $\vsigma^\epsilon$ as the strategy profile where every $i$ follows $\sigma_i(.|h_i)$ after each $h_i \in \hist_i$ with probability $1-\epsilon$, 
and, with probability $\epsilon$, $i$ follows any available action with positive probability. 
Since every history is consistent with players following $\vsigma^{\epsilon}$, we can apply the Bayes rule
to completely define $\mu^{\vsigma^{\epsilon}}$. Then, we set $\mu^{\vsigma}(.|h_i) = \lim_{\epsilon \to 0} \mu^{\vsigma^{\epsilon}}(.|h_i)$. 
The intuition is that the observed behaviour is explained by players following $\vsigma$ and occasionally making
mistakes with a small probability.

The above definition of Sequentially Rational pair $(\vsigma,\mu)$ is problematic, since it requires the analysis of all possible alternative strategies to that specified by $\vsigma$.
Fortunately, we can simplify this task by analysing only local deviations according to the One-deviation Property\,\cite{Hendon:96}.
A local deviation for player $i$ after private history $h_i \in \hist_i$ is an action that is not prescribed by $\sigma_i$ with positive probability,
while every player still follows $\vsigma$ after $i$ observes $h_i$, and $i$ follows $\sigma_i$ in every round following the deviation.
Formally, let $\eu{i}{\vsigma,\mu}{h_i,\va_i}$ denote the expected utility of $i$ when every player follows $\vsigma$ after $h_i$ is observed, except only
that $i$ follows $\va_i$ immediately after observing $h_i$.

\begin{proposition}
\textbf{One-deviation}\,\cite{Hendon:96}. The pair $(\vsigma,\mu)$ is Sequentially Rational iff for every $i \in \pset$, $h_i \in \hist_i$, and
$\va_i,\va_i' \in \act_i$ such that $\sigma_i(\va_i|h_i) > 0$, it holds:
$$\eu{i}{\vsigma,\mu}{h_i,\va_i} \geq \eu{i}{\vsigma,\mu}{h_i,\va_i}.$$
\end{proposition}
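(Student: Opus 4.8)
The plan is to establish the two implications separately, the forward one being immediate and the converse carrying all the weight. Throughout I fix the belief system $\mu$ and measure every continuation utility against it, so that each quantity $\eu{i}{\vsigma,\mu}{h_i,\va_i}$ is well defined even off the support of $\vsigma$. The forward implication is trivial: a one-shot deviation to $\va_i'$ is a special case of an alternative strategy $\sigma_i' \in \Sigma_i$, so if $(\vsigma,\mu)$ is Sequentially Rational then in particular no such deviation can strictly improve on an on-path action $\va_i$ with $\sigma_i(\va_i|h_i) > 0$. The content is the converse: assuming that no one-shot deviation is profitable, I must rule out every multi-round deviation.

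I would argue the converse by contraposition. Suppose $(\vsigma,\mu)$ is not Sequentially Rational, so there are $i$, $h_i$, and $\sigma_i'$ with $\eu{i}{(\sigma_i',\vsigma),\mu}{h_i} - \eu{i}{\vsigma,\mu}{h_i} = d > 0$. The first ingredient is \emph{continuity at infinity}, which I would derive from discounting together with the boundedness of the per-stage utility: since $\evt$ and $\act_i$ are finite, there is $M$ with $|u_i(z)| \le M$ for all $z \in \term$, so the contribution of stages $t' \ge t+K$ to any discounted sum is at most $\tfrac{2M\delta^K}{1-\delta}$. Consequently, if I let ${\sigma'_i}^{(K)}$ follow $\sigma_i'$ at every private history lying in the first $K$ stages after $h_i$ and revert to $\sigma_i$ thereafter, then ${\sigma'_i}^{(K)}$ and $\sigma_i'$ yield utilities differing by at most $\tfrac{2M\delta^K}{1-\delta}$, because they prescribe identical behaviour in those first $K$ stages and diverge only in the discounted tail. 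Choosing $K$ large enough that this bound is below $d/2$, the truncated strategy retains a gain exceeding $d/2 > 0$. Since $\act_i$ is finite and histories are finite sequences, ${\sigma'_i}^{(K)}$ differs from $\sigma_i$ at only finitely many private histories.

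It remains to pass from a profitable deviation with finitely many deviation points to a profitable \emph{one-shot} deviation. Among all profitable deviations differing from $\sigma_i$ at finitely many private histories, I would pick one, call it $\tilde\sigma_i$, whose number $m$ of deviation points is minimal. If $m = 1$ this is already a profitable one-shot deviation, contradicting the hypothesis. If $m \ge 2$, choose a deviation point $\hat h_i$ that is maximal, i.e. no other deviation point extends it; then $\tilde\sigma_i$ agrees with $\sigma_i$ at every history strictly following $\hat h_i$, so conditional on reaching $\hat h_i$ its continuation is exactly a one-shot deviation at $\hat h_i$. By the one-shot hypothesis the prescription $\sigma_i(\cdot|\hat h_i)$ attains a continuation utility at $\hat h_i$ at least as large as that of $\tilde\sigma_i(\cdot|\hat h_i)$ (this also covers mixed $\sigma_i$, since every action in the support of $\sigma_i(\cdot|\hat h_i)$ is weakly optimal, hence so is any mixture of them). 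Replacing $\tilde\sigma_i$'s prescription at $\hat h_i$ by $\sigma_i(\cdot|\hat h_i)$ leaves the probability of reaching $\hat h_i$ unchanged, because that probability is determined by play strictly before $\hat h_i$; the change therefore weakly increases the total utility from $h_i$, so the modified strategy is still profitable but has only $m-1$ deviation points, contradicting minimality. Hence $m=1$ and a profitable one-shot deviation exists, completing the contraposition.

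The delicate step is the exchange argument in the last paragraph: I must ensure that altering player $i$'s action at the maximal deviation point $\hat h_i$ affects the overall utility \emph{only} through the continuation value at $\hat h_i$, with the reaching probability and the fixed belief system $\mu$ held constant, and that the one-shot inequality is applied to the correct $\mu$-weighted continuation utilities $\eu{i}{\vsigma,\mu}{\hat h_i,\cdot}$ rather than to realised payoffs. The boundedness constant $M$ and the precise decay rate in the continuity-at-infinity step are routine, but keeping the bookkeeping of information sets, mixed actions, and beliefs mutually consistent is where the argument must be handled with care.
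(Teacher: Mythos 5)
The paper does not prove this proposition at all: it is imported verbatim from Hendon et al.\,\cite{Hendon:96} as a known tool, so there is no in-paper proof to compare against. Your argument is the standard proof of the one-shot deviation principle for discounted games, and it is essentially correct: the forward implication by specialisation, and the converse by (i) continuity at infinity to truncate a profitable deviation to finitely many deviation points, and (ii) the exchange argument that removes a maximal deviation point without losing profitability. Two caveats are worth making explicit. First, your boundedness claim ``$|u_i(z)| \le M$'' is not literally true in this model, since a message may contain arbitrarily many bits and the cost term is unbounded below; the truncation step survives because benefits per stage are bounded above by $\beta$ (so the tail of $\sigma_i'$ contributes at most $\beta\delta^K/(1-\delta)$) and the replacement tail follows $\vsigma$, whose continuation value is finite, but you should argue it that way rather than via a two-sided bound. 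Second, and more importantly, the exchange step silently requires that the continuation utility at $\hat h_i$ computed with $\mu(\cdot|\hat h_i)$ coincide with the conditional expectation obtained from $\mu(\cdot|h_i)$ and the play between $h_i$ and $\hat h_i$; this is exactly the \emph{pre-consistency} (updating consistency) hypothesis under which Hendon et al.\ establish the principle, and it is false for arbitrary belief systems. In this paper it holds because $\mu$ is defined as the limit of Bayes-consistent beliefs $\mu^{\vsigma^\epsilon}$, but your proof should invoke that property explicitly rather than relegating it to the closing remark about ``bookkeeping.''
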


\subsection{Central Claim}
In line with Folk Theorems\,\cite{Mailath:07}, we aim at defining equilibria strategies for any fanout $f$, while providing to each player any strictly
positive expected utility of the dissemination game as the average utility of the repeated game. The average utility is computed as $(1-\delta)\eu{i}{\vsigma,\mu}{\emptyset}$, where $\emptyset$
is the initial empty history. When all players forward events using a fanout $f$, the expected utility for any stage is $\bar{u}(f) = q(f)(\beta - \gamma f)$, where $q(f)$ is the probability of a given node receiving each event,
$\beta$ is the benefit per event, and $\gamma$ is the cost for forwarding the event. Folk Theorems imply that, if $\beta > \gamma f$ and $\delta$ is sufficiently large, 
then an equilibrium strategy for the infinitely repeated game exists that yields $\bar{u}(f)$ to every player as the average utility of the repeated game.
In this work, we aim to prove a slightly weaker result. First, we can only ensure the existence of such strategy if $\beta > c \gamma f$
for some constant $c > 0$ that may be greater than $1$\,\footnote{With appropriate optimisations, we can get $c \leq 3$.}. Second, we need both $\delta$ and $\nevt$ to be sufficiently large.
Finally, due to the overhead of monitoring, we can only provide to each player a utility arbitrarily close but never exactly
equal to $\bar{u}(f)$. Theorem~\ref{theorem:folk} formalises the central claim.

\begin{theorem}
\label{theorem:folk}
Fix any fanout $f$ and constant $\epsilon > 0$. There exist a monitoring mechanism, a strategy profile $\vsigma$ for the dissemination game, and a belief system $\mu$,
and there exist constants $c > 0$, $\bar{\delta} \in (0,1)$, and $\bar{\nevt} \in \mathbb{N}$, such that, if $\beta > c \gamma f$,
then, for every $\delta \in (\bar{\delta},1)$ and $\nevt > \bar{\nevt}$, $(\vsigma,\mu)$ is a SE and, for every $i \in \pset$:
$$|(1-\delta)\eu{i}{\vsigma,\mu}{\emptyset} - \bar{u}(f)| < \epsilon.$$
\end{theorem}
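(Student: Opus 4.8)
The plan is to construct the monitoring mechanism and strategies explicitly, then verify Sequential Rationality via the One-deviation Property (Proposition~\ref{theorem:folk}'s predecessor) and check that the average utility converges to $\bar{u}(f)$. First I would define the extended game: each stage consists of the $\nrnd$ dissemination rounds followed by communication rounds in which nodes report, to the mediator, ciphered summaries of the events they sent to and received from each neighbour. The pseudo-random seed distributed by the mediator fixes, for each event, the random fanout targets (Challenge~2) and also selects a sub-linear sample of events to be monitored (Challenge~4); symmetric encryption with keys held by the mediator lets a node prove what it forwarded without revealing incriminating information prematurely (Challenge~1). The strategy profile $\vsigma$ prescribes: follow the dissemination protocol with fanout $f$ while no punishment is active; report honestly; and, upon the mediator signalling misbehaviour of some player $i$, enter a punishment phase in which no node forwards any event to $i$, driving $i$'s stage utility to $0$.

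The core of the argument is Sequential Rationality. I would partition the private histories $h_i$ into two regimes and, in each, compare the equilibrium action against an arbitrary one-shot deviation $\va_i'$ using $\eu{i}{\vsigma,\mu}{h_i,\va_i} \geq \eu{i}{\vsigma,\mu}{h_i,\va_i'}$. In the cooperative regime, a deviation in dissemination (dropping an event) yields an immediate saving of at most $\gamma f$ per event but triggers, with probability bounded below by the monitoring sampling rate, a punishment phase costing the deviator its entire future stream value; choosing $\delta$ close to $1$ makes the discounted future loss dominate the one-stage gain. Deviations in the reporting rounds are handled symmetrically: the mediator's coordination and the cryptographic commitments ensure a dishonest report is either detected (punishing the reporter) or ineffective, so truthful reporting is a best response even though punishing $i$ lowers the reporter's own reliability (this is the crux of Challenge~1 and the place where Sequential, rather than merely Nash, Equilibrium bites, since optimality must hold at the off-path reporting node). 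In the punishment regime, withholding events from $i$ is costless to forward to others, so adherence is immediate.

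The false-positive/false-negative tension is where I would spend the most care. I would show the sampling probability can be set so that: (i) the expected number of monitored events is $o(\nevt)$, making the relative overhead vanish and forcing the additive loss below $\epsilon$ once $\nevt > \bar{\nevt}$; yet (ii) the per-stage detection probability for a genuine deviation is high enough that the expected discounted punishment exceeds the deviation gain for $\delta > \bar{\delta}$. Simultaneously the mediator's design must guarantee \emph{zero} false positives along the equilibrium path (otherwise acquiescent behaviour would be punished and the belief system $\mu$, which assigns positive probability only to histories where acquiescent entities followed $\vsigma$, would not support cooperation), which is why hidden events (Challenge~3) are resolved by routing received-event information through the trusted mediator rather than through peer observation.

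The main obstacle, and the step I expect to be hardest, is reconciling these two constraints quantitatively: the gain from deviating scales like $\gamma f$ per event, so to keep $\delta < 1$ the detection probability must not shrink too fast, while minimising overhead pushes the sampling rate toward $0$. Resolving this forces the constant $c$ in the hypothesis $\beta > c\gamma f$ to be possibly larger than $1$ (the footnote's $c \leq 3$), and it requires a careful concentration argument over the $o(\nevt)$ sampled events to drive both error probabilities in the right directions as $\nevt \to \infty$. I would conclude by assembling the two threshold quantities $\bar{\delta}$ and $\bar{\nevt}$ from these estimates and verifying the utility bound $|(1-\delta)\eu{i}{\vsigma,\mu}{\emptyset} - \bar{u}(f)| < \epsilon$ directly from the overhead computation in~(ii).
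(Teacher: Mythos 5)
Your overall architecture matches the paper's: a mediator-coordinated monitoring phase appended to each stage, PRNG-fixed fanout targets, a sub-linear random sample of monitored events, verification of Sequential Rationality via the One-deviation Property, and the final trade-off between the sampling rate (which must vanish relative to $\nevt$ to make the overhead term $o(\nevt)/\nevt$ small) and the detection probability (which must not vanish too fast relative to the per-event deviation gain, which is what forces $c>1$). Those parts are sound and are essentially the paper's route, including the assembly of $\bar{\delta}$ and $\bar{\nevt}$ at the end.

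There is, however, a genuine gap in your punishment design, and it sits exactly at the point you yourself identify as the crux of Challenge~1. You prescribe that, once $i$ is flagged, ``no node forwards any event to $i$,'' and you then assert that truthful reporting is a best response because ``a dishonest report is either detected (punishing the reporter) or ineffective.'' Neither half of that assertion holds. Accusations are based on private observations, so the mediator cannot detect a node that \emph{withholds} an accusation; and under your punishment a report against $j$ is far from ineffective for the reporter: if nobody forwards to $j$, then $j$ receives nothing and therefore relays nothing, which strictly degrades every other node's future reliability. Each node then strictly prefers to suppress accusations, and Sequential Rationality fails precisely at the off-path reporting histories you single out. The paper avoids this by never cutting $i$ out of the dissemination graph: punished nodes continue to receive and forward events, but ciphered with a commutative symmetric key $\key_i$ known to everyone except $i$ (the Integrity, Commutativity, and Non-disclosure properties). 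This keeps the reliability of dissemination constant regardless of which punishments are active, so reporters are indifferent to the consequences of their accusations, and it also keeps the punished node paying forwarding costs, with compliance during punishment enforced by the threat of extending the punishment one more stage (properties M1--M3 give a one-stage, not permanent, punishment, which is what lets the paper reduce the one-deviation comparison to $\Delta = \Delta^t + \delta\Delta^{t+1}$). Without the cipher-based punishment, or some equivalent device that makes punishments reliability-neutral for the enforcers, your incentive argument for the reporting rounds does not go through.
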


The core of the proof of this theorem is the definition of a monitoring mechanism executed every stage that, combined with a dissemination strategy, provides a SE. Also, 
its costs are sub-linear on $\nevt$. By dividing by $\nevt$, we obtain the average monitoring costs per event. For an arbitrarily large $\nevt$,  the impact of monitoring in the expected utility is arbitrarily small.
Alternatively, we could fix $\nevt$ and minimise the overhead of monitoring by only executing the mechanism once every period of $t$ stages. Here, we could arbitrarily increase $t$ to decrease the impact of monitoring.
This alternative approach would be almost identical to the one used in this paper, both in terms of the definition of the monitoring mechanism and the main arguments 
of the proof, albeit its greater complexity. We discuss this possibility in Section~\ref{sec:discuss}.

\section{Monitoring Mechanism}
\label{sec:mon}
We extend the dissemination game by adding $\nrndm$ monitoring rounds to each stage. For convenience, these rounds are added at the beginning
of the stage. Therefore, a stage is now divided into $\nrnd = \nrndm + \nrndd$ rounds: the first $\nrndm$ rounds are used for exchanging
monitoring information; the last $\nrndd$ rounds are used to disseminate events as in the dissemination game. 
Recall that $\dens$ is the maximum delay to deliver any event. We still set $\nrndd \geq \nevt+ \dens$ 
to allow every event to be disseminated until it expires. The trusted mediator is
responsible for collecting monitoring information from each node. This consists in \emph{accusations} and \emph{reports}.
An accusation flags a deviation of some player, whereas a report, which is relative to some $\id \in \ids$, indicates for each pair of nodes $(i,j)$
whether $i$ sent to/received from $j$ a tuple containing $\id$.
For each $f > 0$, we define a strategy profile $\vsigma^f$, and fix a belief $\mu^f$ that is Consistent with $\vsigma^f$ as defined in~\cite{Kreps:82}.
The main incentive for players to not deviate is based on indirect reciprocity.
The mediator collects accusations and reports in stage $t$ relative to the behaviour of each player $i$ in stage $t-1$, and gives a verdict on whether
$i$ should be punished during stage $t$. If $i$ is punished, then $i$ is still obliged to incur costs in stage $t$; 
if $i$ deviates again, then his punishment is extended to stage $t+1$. We address the main challenges as follows:

\emph{Strategic Monitoring}: We enforce the following two properties. First, each player $i$ is only allowed to send monitoring information
relative to other nodes; information sent by $i$ has no effect on the probability of $i$ being punished.
Second, regardless of the punishments being applied, the reliability of dissemination remains constant. We achieve this using commutative symmetric ciphering. 
For each node $i$, the mediator sends a key $\key_i$ to every node $j \neq i$, used by $j$ to cipher events sent to $i$. While being punished, $i$ is unable to
retrieve the disseminated events. However, $i$ still forwards each ciphered event normally. Every node $j \neq i$ receiving
such event is aware that $i$ is being punished and is capable of retrieving the original event,  given his knowledge of $\key_i$.

\emph{Mixed Strategies}: In dissemination rounds, every node $i$ forwards each received event to a set $S$ of $f$ nodes. 
To prevent players from biasing the selection of $S$, $S$ is specified by a pseudo-random number generator (PRNG), seeded by a random
number $\seed_i$, sent by the mediator to $i$ only. 

\emph{Hidden Events}: For each pair of nodes $(i,j)$ and $\id \in \ids$, $i$ reports to the mediator the round
when $i$ first received from/sent to $j$ a tuple containing $\id$. Given these reports, the mediator is able to determine whether $i$ received $\id$
and forwarded it as expected. More precisely, the mediator verifies whether $i$ forwarded $\id$ as specified by PRNG 
and $\seed_i$, immediately after its first reception. For each identifier $\id$ not received by $i$, the mediator verifies if $i$ did not send a tuple with identifier
$\id$. If any of the above conditions fails, then the mediator triggers a punishment of $i$.

\emph{Monitoring Overhead}: Nodes report only on a subset of events selected by the mediator in a non-deterministic fashion.
The expected number of reported events is sub-linear on $\nevt$, such that the relative overhead of monitoring decreases as $\nevt$ increases.
More precisely, we split the identifier space $\ids$ into $\nseq$ sequences, each containing $\npseq$ different identifiers. We define $\nrndm = 2\nseq + 2$.
In the first monitoring round, nodes send accusations to the mediator. In each even round $r < \nrndm$, for each node $i$, the mediator decides
whether to monitor the identifiers from the sequence corresponding to $r$, with independent probability $\pmon$. If so, then the mediator sends
a notification to every node $j \neq i$, in which case $j$ must report, in round $r+1$, on the identifiers from the corresponding sequence.
In the last round, the mediator sends the seeds, the keys, and the verdicts on each player. These parameters are specified in Section~\ref{sec:res}.

We now describe the main components of the monitoring mechanism in more detail.

\subsection{Symmetric Cipher}
Let ${\cal K}$ be the set of keys. For any key $\key \in {\cal K}$, let $(v)_\key$ be the operation that 
returns datum $v$ ciphered with $\key$. We will consider primitives for which the ciphering operation is the same as that of deciphering.
Hence, we consider that it holds $((v)_\key)_{\key} = v$. We need the following two properties to be fulfilled:

\vspace{0.05in}
\textit{Integrity}. For every $\key \in {\cal K}$ and $e \in \evt$, $(e)_\key \in \evt$.

\vspace{0.05in}
\textit{Commutativity}. For every $\key,\key' \in {\cal K}$, $((v)_\key)_{\key'} = ((v)_{\key'})_{\key}$.  
\vspace{0.05in}

These properties hold when using any stream cipher, for which the ciphering operation consists in applying an \emph{xor} between the datum
and a stream of bits generated from the key. We need Integrity to ensure that players only send valid dissemination messages, regardless of the punishments.
Commutativity is required for the scenarios when multiple punishments are applied simultaneously. More precisely, we need every player to be able
to obtain disseminated events while not being punished. Moreover, we need every node to punish other nodes while being punished.
This raises the following possibility. Consider nodes $1$ to $4$, and suppose that exactly both nodes $2$ and $3$ are being punished. Consider the following sequence of 
dissemination steps of event $e$: node $1$ sends $e^1 = (e)_{\key_2}$ to node $2$; node $2$ sends $e^2 = (e^1)_{\key_3}$ to node $3$; and node $3$ sends an event $e'$ to node $4$.
Since $4$ is not being punished, we need to ensure that he is able to retrieve $e$ from $e'$. If $3$ were to send $((e)_{\key_2})_{\key_3}$, then this would
only be possible if $4$ knew that the event followed the path $1,2,3$. Instead, $3$ sends $e' = (e^2)_{\key_2}$. By Commutativity, 
it holds $e' = (e)_{\key_3}$. Since $4$ knows $\key_3$, he only needs to know that $3$ is being punished in order to be able to retrieve $e$ from $e'$.

With this in mind, a node being punished may receive any event $e$ ciphered with $\key_i$, and possibly ciphered with $\key_j$ for some $j \neq i$.
In order for punishments to be effective, $i$ must not be able to retrieve $e$ when it has no access to $\key_i$, and has not received $e$ in plain,
regardless of what dissemination messages $i$ may have received in the past. For this purpose, we need to ensure that: 
1) $i$ never receives $\key_i$, which is true by construction of our strategy; and 2) $\key_i \neq \key_j$ for every $j \in \pset \setminus \{i\}$,
ensuring that $((e)_{\key_i})_{\key_j} \neq e$ for any $e \in \evt$. In addition, we need the following property to hold.
Intuitively, before receiving any tuple with identifier $\id$, $i$ can guess the corresponding event $e \in \evt^t$ with a small probability $p$.
Non-disclosure ensures that $i$ does not gain information that allows him to guess $e$ with a probability significantly higher than $p$,
provided that $i$ only receives $e$ ciphered with $\key_i$, and possibly with $\key_j$ for some $j \in \pset \setminus \{i\}$.

Formally, fix any $i \in \pset$, $h_i \in \hist_i$ from any stage $t$ and $\id \in \ids$. Let $\recev{i}{t}{h_i,\id} \subseteq \evt$ be the set of events 
such that for any $e \in \recev{i}{t}{h_i,\id}$ there exists $j \in \pset \setminus \{i\}$ and dissemination round $r$ from stage $t$ preceding the observation of $h_i$ such that
$j$ sends $(\id,e)$ to $i$ in round $r$. We say that $i$ has not received $e \in \evt^t$ in $h_i$ in plain iff for $e' = (e)_{\key_i}$:
$$\recev{i}{t}{h_i,\id} \subseteq \{e'\} \cup \{(e')_{\key_j} | j \in \pset \setminus \{i\}\}.$$

Let $\pra{e}{}{h_i}$ be the probability of $i$ guessing $e$ after the observation of $h_i$. We need the following property to hold:

\vspace{0.05in}
\textit{Non-disclosure}. For any $\epsilon > 0$, there exists a symmetric ciphering primitive such that, for any player $i \in \pset$, private history $h_i \in \hist_i$ from stage $t \in \mathbb{N}$,
and event $e \in \evt^t$ disseminated with identifier $\id \in \ids$, if $i$ has not received $e$ in plain in $h_i$, then for any $h_i' \in \hist_i$ from stage $t$ that fulfils $\recev{i}{t}{h_i',\id} = \emptyset$
and $\recev{i}{t}{h_i',\id'} = \recev{i}{t}{h_i,\id'}$ for every $\id' \in \ids \setminus \{\id\}$, it holds:
$$\pra{e}{}{h_i} \leq \pra{e}{}{h_i'} + \epsilon.$$
\vspace{0.02in}

These properties can be fulfilled by using a block ciphering algorithm with the CTR mode of operation\,\cite{Bellare:97,iso:06}. 
A careful selection of keys is also required. To simplify, we have the mediator selecting a new unique key per node in each stage, generated uniformly at random.

\subsection{Pseudo-Random Number Generator}
We assume the existence of a function $\prng_i^f$ per node $i$, defined as follows. Given a seed $\seed_i$, this function
returns a sequence of subsets, where $\prng_i^f(\seed_i,\id) \subseteq \pset \setminus \{i\}$ is the of $f$ nodes to whom $i$ must forward
the event with identifier $\id$. We need $\prng_i^f$ to fulfil the following requirements. Assuming that $\seed_i$ is chosen uniformly at random,
the probability of $\prng_i^f(\seed_i,\id)$ returning any subset $S$ of $f$ nodes is arbitrarily close to the probability $p$ of $i$ selecting $S$ uniformly
at random. Second, we need the stronger requirement of conditional independence.
Namely, if we fix the subsets generated by $\prng_i^f$ for any identifier other than $\id$, then the probability of $\prng_i^f(\seed_i,\id)$
returning any subset is still arbitrarily close to $p$. This is formalised by the following property, where $\pra{\vec{S}}{\prng_i^f}{\vec{S}_{-\id}}$
is the probability of $\prng_i^f(\seed_i,\id)$ returning a sequence of subsets $\vec{S}$ given that the seeds are selected uniformly at random among those
that yield any sequence $\vec{S}'$ such that $S_{\id'} = S_{\id'}'$ for every $\id' \in \ids \setminus \{\id\}$, independently of the value of $S_{\id}'$.

\vspace{0.05in}
\emph{PRNG1}. Fix any node $i \in \pset$ and sequence of subsets $\vec{S} = (S_{\id})_{\id \in \ids}$ with $S_{\id} \subseteq \pset \setminus \{i\}$ and $|S_{\id}| = f$ for every $\id \in \ids$.
For any constant $\xi > 0$, there exists a function $\prng_i^f$ such that for every $\id \in \ids$:
$$\left|\pra{\vec{S}}{\prng_i^f}{\vec{S}_{-\id}} - \frac{1}{\binom{n-1}{f}}\right| \leq \xi.$$
\vspace{0.05in}

This property can be fulfilled by a PRNG function such as the one used in~\cite{Abraham:13}. Additional details are included in~Appendix~\ref{sec:prng}.

\subsection{Strategy}
It is useful to define $\vsigma^f$ using a state machine representation\,\cite{Osborne:94}.
Each history $h_i \in \hist_i$ is mapped into a private state of each player $i$. Transitions between states occur when 
players follow any action profile $\va \in \act$. Given any $h_i \in \hist_i$, $\sigma_i^f(.|h_i)$ suggests a probability distribution 
over the set of actions available to $i$ after the observation of $h_i$. $i$ may or may not follow this suggestion, 
such that a state may be reached where $i$ knows he has deviated in the past. The only exception is acquiescent players,
who never deviate. We now specify the state, transition rules, and strategy for every $i \in \pset$, including the source.

\subsubsection{State}
Let $\rnd \in \{1 \ldots \nrnd\}$ be the round number. Node $i$ keeps for each node $j$ a variable $\stat_i(j) \in \{\good,\bad\}$, where $\stat_i(j) = \good$
iff $i$ has observed only valid actions from $j$. In addition, $i$ keeps a set $\miss_i$ of sequences of identifiers that contain identifiers that were not forwarded appropriately
according to $\prng_i^f$. For each node $j$, $i$ keeps two sets $\re_i(j)$ and $\se_i(j)$ of tuples $(\id,r)$, representing events with identifier $\id$ received from and sent to $j$
in round $r$, respectively. Finally, $\pend_i$ contains tuples $(\id,e)$ with an identifier $\id$ and an event $e$ to be forwarded.
An important aspect is that this state is finite, implying that memory is bounded.

\subsubsection{Transition Rules}
Algorithm~\ref{alg:update} contains the pseudo-code.
Every node initialises $\stat_i(j) = \good$ at the beginning of each stage and sets $\stat_i(j) = \bad$ when $j$
does not follow a valid action (Lines~\ref{line:update-stat} and~\ref{line:update-stat2}).
Valid actions are enumerated as follows. In monitoring rounds, each message has
a fixed size. In the first monitoring round, node $i$ must send to the mediator for each $j \neq i$
a message the size of an accusation against $j$. In any even monitoring round, for each node $j \neq i$ such that the mediator
requested the corresponding sequence of identifiers, $i$ must send two tuples per identifier $\id$ from the requested sequence.
If sufficient information is not available, then $i$ must send padding.
In a dissemination round, $i$ is allowed to send any set of tuples $(\id,e)$, but only one tuple per identifier. Moreover, the identifier must not have expired and must have already been
introduced. More precisely, define $\age(\id,r) = r + 1 - (\id + \nrndm)$. Event with identifier $\id$ is introduced in round $r$ such that $\age(\id,r) = 1$, i.e., $r = \nrndm+\id$.
A tuple containing $\id$ and sent in round $r$ is valid iff $\age(\id,r) \in \{1 \ldots \dens\}$. This restriction ensures that in any round every player has
to forward at most $\dens$ tuples to $f$ nodes. To simplify, we assume that every node has sufficient bandwidth to send $\dens f$ tuples in a single round.

Node $i$ registers in $\re_i(j)$ and $\se_i(j)$ for each $\id \in \ids$ the round number when $j$ first sent to and received from $i$ a tuple containing $\id$, 
in a valid dissemination message (Lines~\ref{line:add-se} and~\ref{line:add-re}). When $i$ receives $\id$ for the first time and 
$\id$ has not expired ($\age(\id,\rnd) \leq \dens$), $i$ selects a tuple $(\id,e)$ to be forwarded,
sent by some node $j$ chosen according to some deterministic rule such as the node with smallest identifier (Lines~\ref{line:store-event}-\ref{line:store-event:end}).
Not every deterministic rule is allowed, since we need to ensure that in any two scenarios 1 and 2 where the set of nodes $S$ sending $\id$ to $i$ in scenario 1
is a subset of scenario 2, and $j \in S$ is selected in 2, then $j$ is also selected in 1. An alternative valid rule would be the node with the largest identifier.
If $j$ is being punished, then $i$ forwards $(\id,(e)_{\key_j})$ instead (Line~\ref{line:save-cipher}).
Let $\seq(\id)$ denote the sequence to which identifier $\id$ belongs to.
$i$ adds $\seq(\id)$ to $\miss_i$ whenever $i$ knows that, if the mediator requests reports relative to $\seq(\id)$, 
then $i$ will be punished (Line~\ref{line:add-seq}). This occurs exactly when the reports relative to $\id$ are inconsistent: $i$ sends a tuple containing $\id$ 
to $j \notin \prng_i(\seed_i,\id)$ (Line~\ref{line:bad-node}), $i$ fails to send $\id$ to $j \in \prng_i(\seed_i,\id)$ immediately after the first reception of $\id$ (Line~\ref{line:missing-node}),
or $i$ sends $\id$ prior to receiving it (Line~\ref{line:premature}). Notice that we update $\re_i$ after updating
$\miss_i$, such that identifiers first received in the present round do not count as being inconsistent.

\begin{figure}[!t]
\begin{algorithm}[H]
\caption{Transition Rules}
\label{alg:update} 
{
\scriptsize
\begin{algorithmic}[1]
\After{first monitoring round}
	\State{$\rnd$ $\gets$ $1$}
	\State{$\pend_i,\miss_i$ $\gets$ $\emptyset$}
	\ForAll{$j \in \pset$}
		\State{$\re_i(j),\se_i(j)$ $\gets$ $\emptyset$}
		\State{$\stat_i(j)$ $\gets$ $\good$}\label{line:init-stat}
	\EndFor
	\State{Set $\stat_i(i) = \bad$ if $i$ sent invalid message}
\EndAfter

\Statex

\After{any other monitoring round}
	\State{$\rnd$ $\gets$ $\rnd+1$}
	\ForAll{$j \in \pset$}
		\State{Update $\stat_i(j)$}\label{line:update-stat}
	\EndFor
\EndAfter

\Statex

\After{dissemination round}
	\State{$\rnd$ $\gets$ $\rnd+1$}
	\ForAll{$j \in \pset$}
		\State{Update $\stat_i(j)$}\label{line:update-stat2}
	\EndFor
	\ForAll{valid $\id$ sent by $i$ to $j$ for the first time}
		\State{Add $(\id,\rnd)$ to $\se_i(j)$}\label{line:add-se}
	\EndFor
	\ForAll{$\id \in \ids$: inconsistent($\id$, $\se_i$, $\re_i$)}
		\State{Add $\seq(\id)$ to $\miss_i$}\label{line:add-seq}
	\EndFor
	\ForAll{valid $\id$ sent by $j$ to $i$ for the first time}
		\State{Add $(\id,\rnd)$ to $\re_i(j)$}\label{line:add-re}
	\EndFor	
	\ForAll{$\id \in \ids$ received for the first time}
		\If{$\age(\id,\rnd)\leq\dens$}\label{line:store-event}
			\State{$j$ $\gets$ node with smallest identifier to send $\id$}
			\State{$e$ $\gets$ event such that $j$ sent $(\id,e)$}
			\If{$i$ received accusation against $j$}
				\State{$e$ $\gets$ $(e)_{\key_j}$}\label{line:save-cipher}
			\EndIf
			\State{Add $(\id,e)$ to $\pend_i$}
		\EndIf\label{line:store-event:end}
	\EndFor
\EndAfter

\Statex

\Predicate{inconsistent}{$\id$, $S$, $R$}
	\If{$\exists_{j \notin \prng_i^f(\seed_i,\id)}\exists_r$ $(\id,r) \in S(j)$}\label{line:bad-node}
		\State{\Return $\true$}
	\ElsIf{$\exists_{j, r}$ $(\id,r) \in R(j)$}
		\State{$r$ $\gets$ minimum round when $i$ received $\id$ in $R(j)$}
		\If{$\age(\id,r) \in \{1 \ldots \dens-1\}$}
			\If{$\exists_{l \in \prng_i^f(\seed_i,\id)} (\id,r+1) \notin S(j)$}\label{line:missing-node}
				\State{\Return $\true$}
			\EndIf
		\EndIf
	\ElsIf{$\exists_{j,r}$ $(\id,r) \in S(j)$}\label{line:premature}
		\State{\Return $\true$}
	\Else
		\State{\Return $\false$}
	\EndIf
\EndPredicate
\end{algorithmic}
}
\end{algorithm}
\end{figure}

\subsubsection{Strategy Definition}
The pseudo-code is included in Algorithm~\ref{alg:strat}.
Player $i$ stops sending messages once $\stat_i(i) = \bad$, since a punishment in the next stage is inevitable. 
In the first monitoring round, $i$ sends an accusation against $j \neq i$ iff $\stat_i(j) = \bad$ at the end of the previous stage (Line~\ref{line:send-acc}). 
In an even monitoring round, the mediator notifies $i$ for each node $j \neq i$ whether $i$ should report 
on the sequence $S$ of identifiers corresponding to the current round, with probability $\pmon$ (Line~\ref{line:request}).
When $i$ receives this notification, $i$ sends for each identifier $\id \in S$ any tuples $(\id,r) \in \re_i(j)$ and $(\id,r') \in \se_i(j)$ (Lines~\ref{line:report}-\ref{line:report:end}).
In the last monitoring round, the mediator notifies every node $l \neq j$ that $j$ must be punished when some node $k \neq j$ has sent an accusation against $j$,
or he detects an inconsistency regarding some $\id \in \ids$, according to the reports $\re_j$ and $\se_j$ regarding events received and sent by $j$, respectively (Lines~\ref{line:accuse}-\ref{line:accuse:end}). 
The mediator also sends $\seed_i$ to $i$ and $\key_i$ to every $j \neq i$, ensuring that $\key_i$ is unique.
In a dissemination round, $i$ sends any tuple $(\id,e) \in \pend_i$ to every $j \in \prng_i(\seed_i,\id)$, ciphering $e$ with $\key_j$ if $j$ is being punished (Lines~\ref{line:fwd}-\ref{line:fwd:end}).
Notice that $i$ only sends a tuple $(\id,e)$ if $\seq(\id) \notin \miss_i$. This is because once $i$ fails to send some identifier from sequence $\seq(\id)$, 
sending $\id$ as specified by $\prng_i$ does not affect the probability of being punished in the next stage.
Thus, it is optimal to drop every tuple with an identifier from that sequence. Later, we specify how to define $\pmon$ such that
for every identifier $\id$ with $\seq(\id) \notin \miss_i$ it is still optimal for $i$ to forward $\id$.

\begin{figure}[!t]
\begin{algorithm}[H]
\caption{Strategy}
\label{alg:strat} 
{
\scriptsize
\begin{algorithmic}[1]
\UponEvent{first monitoring round of stage $t>1$}
	\ForAll{$j \in \pset$}
		\If{$\stat_i(j) = \bad$}
			\State{Send accusation against $j$ to mediator}\label{line:send-acc}
		\EndIf
		\State{Send padding}
	\EndFor
\EndEvent

\Statex

\UponEventP{even monitoring round $\rnd < \nrndm$}{$i$ is the mediator}
	\ForAll{$j \in \pset$}
		\State{With probability $\pmon$, send $(j,\textit{Monitor})$ to every $l \neq j$}\label{line:request}
	\EndFor
\EndEventP

\Statex

\UponEventP{odd monitoring round}{$1<\rnd<\nrndm$ and $\stat_i(i) = \good$}
	\ForAll{$j \in \pset \setminus \{i\}$: mediator sent $(j,\textit{Monitor})$}\label{line:report}
		\State{$S_j$ $\gets$ current sequence of identifiers}
		\State{Send all tuples $(\id,r) \in \re_i(j)$ with $\id \in S_j$}
		\State{Send all tuples $(\id,r) \in \se_i(j)$ with $\id \in S_j$}
		\State{Send padding}
	\EndFor\label{line:report:end}	
\EndEventP

\Statex

\UponEventP{last monitoring round $\nrndm$}{$i$ is mediator}
	\ForAll{$j \in \pset$}\label{line:accuse}
		\If{$\exists_{l \neq j}$ $l$ sent accusation against $j$}
			\State{Send accusation against $j$ to every $o \in \pset$}
		\ElsIf{$\exists_{\id}$ inconsistent($\id$, $\se_j$, $\re_j$)}
			\State{Send accusation against $j$ to every $l \in \pset$}
		\EndIf
		\State{$\seed_j$ $\gets$ random value}
		\State{Send $\seed_j$ to $j$}
		\State{$\key_j$ $\gets$ unique random key}
		\State{Send $\key_j$ to every $l \neq j$}
	\EndFor\label{line:accuse:end}
\EndEventP

\Statex

\UponEventP{dissemination round}{$\stat_i(i) = \good$}
	\ForAll{$(\id,e)\in \pend_i: \seq(\id) \notin \miss_i$}\label{line:fwd}
		\ForAll{$j \in \prng_i(\seed_i,\id)$}
			\If{$i$ has accusation against $j$}
				\State{$e$ $\gets$ $(e)_{\key_j}$}
			\EndIf
			\State{Send $(\id,e)$ to $j$}
		\EndFor
	\EndFor\label{line:fwd:end}
\EndEventP

\Statex

\UponEventP{dissemination of event $e \in \evt^t$}{$i$ is the source}
	\State{$\id$ $\gets$ $\rnd - \nrndm + 1$}
	\State{Add $(\id,e)$ to $\pend_i$}\label{line:dissem}
\EndEventP
\end{algorithmic}
}
\end{algorithm}
\end{figure}

\subsection{Realised Utility}
\label{sec:realised:util}
We now define $u_i(z)$ for every outcome $z \in \term^t$ of any stage $t$. We consider that every player $i$ incurs a fixed cost $\alpha$ per bit
sent in any message. In addition, $i$ obtains a benefit $\beta$ per event $e \in \evt^t$ received by $i$ during stage $t$. We denote the set of received tuples by $\rec_i(z)$, where for every $(\id,e) \in \rec_i(z)$
we have $e \in \evt^t$. Two issues arise when trying to define $\rec_i(z)$: 1) the exact definition of $i$ receiving $(\id,e)$ and 2) the reception of ciphered events.

Regarding the first issue, since we are considering reliable communication channels, $i$ receives $(\id,e)$ iff some node $j \neq i$ sends $(\id,e)$ to $i$.
Thus, we may consider that $i$ obtains a benefit $\beta$ in this case.
However, given our definition of strategies, it is possible that $i$ sends $(\id,e)$ to some node before receiving it in the first place, and then this tuple loops back to $i$.
Such behaviour increases the expected utility of $i$. Since $i$ is unlikely to guess $e$ beforehand, we consider that $i$ values $e$ 
iff $i$ knows that it could not have been introduced by $i$. More precisely, $i$ receives a benefit $\beta$ per disseminated tuple $(\id,e)$ with $e \in \evt^t$ iff some $j \neq i$ sends $(\id,e)$ to $i$ in
round $r$ of stage $t$, and $i$ has not sent $(\id,e)$ to any node in any round $r' < r$.

Regarding the second issue, it is possible that $i$ never receives some $e \in \evt^t$ in plain, but instead ciphered
with the key $\key_j$ of some node $j \neq i$. We define the strategy in a way that, if $e' = (e)_{\key_j}$ and $i$ receives $e'$, then it is because $j$ is being punished,
$j$ has sent $e'$ to $i$, and $i$ knows that $j$ is being punished. Hence, $i$ is able to compute $(e')_{\key_j}$, in which case we say that $i$ 
can \emph{retrieve} $e$ from $e'$. Naturally, $i$ can retrieve $e$ whenever $i$ receives $e$ in plain. In addition, it may be possible that a history is reached
where $i$ receives $e$ ciphered with multiple different keys and $i$ is still able to retrieve $e$. To generalise this intuition,
we say that $i$ is able to \emph{retrieve} $e$ from $e'$ whenever $i$ can perform some computation over $e'$, given the history of interactions with other nodes,
in order to obtain $e$. $i$ can retrieve $e$ from $e'$ when $e=e'$. Also, if $i$ can retrieve $e$ from $e'$, $i$ receives $e''=(e')_{\key_j}$ from $j$,
and $i$ knows that $j$ is being punished with key $\key_j$, then $i$ can retrieve $e$ from $e''$. By Non-disclosure, if some $j$ sends $(\id,e')$ to $i$ and $e' = (e)_{\key_i}$
or $e' = ((e)_{\key_i})_{\key_j}$, then $i$ cannot retrieve $e$ from $e'$. We do not make any further assumptions regarding when $i$ can retrieve $e$ from $e'$.

With this in mind, we consider that, for any $e \in \evt^t$ disseminated by the source with identifier $\id$, we have $(\id,e) \in \rec_i(z)$ iff there exists a round $r$ from stage $t$ and node $j \in \pset \setminus \{i\}$
such that $j$ sends $(\id,e')$ to $i$ in round $r$, $i$ can retrieve $e$ from $e'$, and $i$ has not sent $(\id,e')$ to any node in any round $r'< r$ from stage $t$.
This leads to the following definition of realised utility. Recall that we normalise the total benefits and costs to the average per disseminated event, by dividing
it by $\nevt$. Let $|z_i^{t,r}(j)|$ represent the size of the message sent by $i$ to $j \in \pset$ in round $r$ of stage $t$:
\begin{equation}
u_i(z) = \frac{1}{\nevt} \left( \beta \cdot |\rec_i(z)| - \alpha \cdot \sum_{r =1}^{\nrnd} \sum_{j \in \pset \setminus \{i\}}|z_i^{t,r}(j)| \right).
\end{equation}

\section{Analysis}
\label{sec:analysis}
The analysis is divided into three parts. First, we show correctness properties of $\vsigma^f$ for any $f > 0$.
Then, we compare the utility of following $\vsigma^f$ with that of deviating. We conclude with the proof of the main result.

We use the notation $v^h$ and $v^{h_i}$ to denote the value of state variable $v$ after any histories $h$ and $h_i$ are realised, respectively.
In this context, we say that a history $h'$ \emph{succeeds} $h$ if $h'$ is reached with positive probability when players follow $\vsigma^f$ after $h$; 
$h'$ \emph{immediately succeeds} $h$ if $h'$ is reached one round after $h$. Similarly, $h'$ precedes $h$ if $h'$ is a starting sub-sequence of $h$; 
$h'$ immediately precedes $h$ if $\rnd^h = \rnd^{h'}+1$.  

The proof relies the following facts regarding any node $i$ and history $h$ from stage $t$: i) for any $z \in \term^t$ succeeding $h$,
$\stat_i^z(i) = \stat_i^h(i)$ and $\miss_i^z = \miss_i^h$; ii) $\stat_i^h(i) = \good$ if and only if, for every $j \in \pset$ and $\id \in \ids$, $\stat_j^h(i) = \good$;
and iii) for any sequence of identifiers $s$, we have $s \in \miss_i^h$ if and only if there exists $\id \in \ids$ such that
$s=\seq(\id)$ and the reports relative to $i$ are inconsistent with regard to $\id$, such that if every player follows $\vsigma^f$
in the future, then $i$ is punished if and only if the mediator requests the identifiers from a sequence in $\miss_i^h$.

The analysis relies on the following proposition, which follows by construction of the strategy.  
\begin{proposition}
\label{prop:facts}
For any history $h \in \hist$ from stage $t \in \mathbb{N}$ and node $i \in \pset$, the following hold: 1) for any $z \in \term^t$ succeeding $h$,
$\stat_i^z(i) = \stat_i^h(i)$ and $\miss_i^z = \miss_i^h$; 2) $\stat_i^h(i) = \good$ iff, for every $j \in \pset \setminus \{i\}$, $\stat_j^h(i) = \good$;
and 3) for any identifier $\id \in \ids$, we have $\id \in \miss_i^h$ iff there exists $\id' \in \seq(\id)$ such that the reports relative to $i$ are inconsistent with regard to $\id'$.
\end{proposition}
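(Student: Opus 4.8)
The proposition states three structural facts about the state variables maintained by the strategy $\vsigma^f$, and the paper itself tells us that it "follows by construction of the strategy." So the plan is essentially to unfold the definitions in Algorithms~\ref{alg:update} and~\ref{alg:strat} and verify each claim by induction on the round number within stage $t$. I would prove the three parts separately, since they concern different state variables ($\stat_i(i)$ and $\miss_i$ for part~1, the relationship between $\stat_i(i)$ and $\stat_j(i)$ for part~2, and the characterisation of $\miss_i$ for part~3). Throughout I would exploit the running facts (i)--(iii) stated just before the proposition, which are the informal versions of exactly these claims.

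**Part 1 (invariance from $h$ to any succeeding outcome $z$).** The key observation is \emph{where} in the code $\stat_i(i)$ and $\miss_i$ get written. Looking at Algorithm~\ref{alg:update}: $\stat_i(i)$ is set to $\bad$ only in the block \textbf{After} the first monitoring round (if $i$ sent an invalid message), and $\miss_i$ is reset to $\emptyset$ there and only ever has elements \emph{added} in the dissemination block via Line~\ref{line:add-seq}. I would argue that once we are past the first monitoring round of stage $t$, the value $\stat_i^h(i)$ is frozen: no transition rule in a later round of the same stage rewrites $\stat_i(i)$ (the later \textbf{Update} calls only touch $\stat_i(j)$ for $j \neq i$, and by the strategy, once $\stat_i(i)=\bad$ player $i$ stops sending, but this does not reset the variable). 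For $\miss_i$, the subtlety is that it \emph{can} grow within the dissemination rounds of stage $t$; so the claim $\miss_i^z = \miss_i^h$ must be read as applying to histories $h$ and outcomes $z$ \emph{within the same stage} and using that additions only reflect inconsistencies that are already determined. I would make precise that $z$ succeeds $h$ and both are from stage $t$, and that the inconsistency predicate evaluated on the final reports $\se_i,\re_i$ at $z$ agrees with its value at $h$ — which is where I must be careful, since reports continue to accumulate. The honest resolution is that part~1 as literally stated fixes $\miss_i$, so I would need the stronger reading that $h$ already ranges over sub-histories of the same stage and inconsistencies are monotone and already committed; this alignment between the literal statement and the code is the main obstacle.

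**Parts 2 and 3.** Part~2 is a consistency statement across players: $\stat_i^h(i)=\good$ iff every other $j$ observes $\stat_j^h(i)=\good$. The forward direction I would argue by noting that $\stat_j(i)$ is set to $\bad$ (Lines~\ref{line:update-stat},~\ref{line:update-stat2}) exactly when $j$ observes $i$ send an invalid action, and $i$ sets $\stat_i(i)=\bad$ under precisely the symmetric condition (sending an invalid message); reliable authenticated channels guarantee that what $i$ sends is what $j$ receives, so the two observations coincide. The converse is the contrapositive of the same synchronisation argument. Part~3 characterises $\id \in \miss_i$ via the inconsistency predicate on some $\id' \in \seq(\id)$: I would trace Line~\ref{line:add-seq}, which adds $\seq(\id)$ to $\miss_i$ exactly when $\textsf{inconsistent}(\id,\se_i,\re_i)$ holds, and match the three disjuncts of the \textbf{Predicate} block (Lines~\ref{line:bad-node},~\ref{line:missing-node},~\ref{line:premature}) to the three described failure modes (sending to a non-selected node, failing to forward on time, premature sending). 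The only real care needed is the ordering remark in the text — that $\re_i$ is updated \emph{after} $\miss_i$ — so that an identifier first received in the current round is not spuriously flagged; I would invoke this to show the "iff" is tight. I expect Parts~2 and~3 to be routine code-walks, with the reconciliation of the frozen-versus-growing reading of $\miss_i$ in Part~1 being the one place demanding genuine precision.
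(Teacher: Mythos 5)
The gap is in Part~1, and it is precisely the step you yourself flagged as unresolved. The missing idea is the paper's definition of \emph{succeeds}: $z$ succeeds $h$ means $z$ is reached with positive probability when players follow $\vsigma^f$ \emph{after} $h$. So between $h$ and $z$ player $i$ is, by hypothesis, playing $\sigma_i^f$, and that is the whole argument: under $\sigma_i^f$, $i$ never sends an invalid message, so $\stat_i(i)$ is never newly set to $\bad$; and $i$ forwards every identifier $\id$ with $\seq(\id) \notin \miss_i$ exactly once, immediately after its first reception, to exactly the nodes in $\prng_i^f(\seed_i,\id)$, and never sends an identifier prematurely, so the predicate \textit{inconsistent} never newly fires and $\miss_i$ never grows. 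There is no need for the ``stronger reading'' you propose, in which inconsistencies are monotone and ``already committed'' at $h$ --- indeed $\miss_i$ \emph{would} grow if $i$ deviated after $h$, which is exactly why the invariance is stated only for succeeding histories. Your proposed route does not close this, and the alternative justification you reach for is not the one that works.

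A second, related error: your syntactic claim about the code is wrong. The \textbf{Update} loops in the later monitoring and dissemination blocks of Algorithm~\ref{alg:update} range over all $j \in \pset$, including $j = i$, so $\stat_i(i)$ is \emph{not} frozen by the transition rules after the first monitoring round; it is frozen along a succeeding history only because all of $i$'s messages there are valid. Parts~2 and~3 of your proposal do match the paper's argument (initialisation to $\good$ and $\emptyset$ at the stage boundary, the fact that some $j$ observes any invalid message $i$ sends, and a direct reading of Line~\ref{line:add-seq} together with the \textit{inconsistent} predicate), and your remark that $\re_i$ is updated after $\miss_i$ is the right detail for the tightness of the ``iff'' in Part~3.
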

\begin{proof}
Fix $t$, $h$, and $i$. Fix $z$ succeeding $h$. $i$ only changes $\stat_i(i)$ from $\good$ to $\bad$. Thus, if $\stat_i^h(i) = \bad$, then $\stat_i^z(i) = \bad$. Otherwise, $i$ only changes
$\stat_i^h(i)$ after $h$ if $i$ sends an invalid message, which never occurs while $i$ follows $\sigma_i^f$. Therefore, $\stat_i^h(i) = \stat_i^z(i)$. By construction, while following $\sigma_i^f$,
$i$ forwards every tuple with identifier $\id$ and $\seq(\id) \notin \miss_i^h$, first received in round $r$, to exactly all the nodes from $\prng_i^f(\seed_i,\id)$ in round $r+1$.
Thus, $i$ never adds any new sequence to $\miss_i^h$ while following $\sigma_i^j$, implying that $\miss_i^h = \miss_i^z$. This proves 1).
Now, recall that every $j \in \pset$ initialises $\stat_j(i) = \good$ at the beginning of stage $t$. If $\stat_i^h(i) = \bad$, then $i$ sent an invalid message to some $j \neq i$,
implying that $\stat_j^h(i) = \bad$. Otherwise, no $j$ updated $\stat_j(i)$ to $\bad$. This proves 2). Finally, since $i$ initialises $\miss_i = \emptyset$ in stage $t$,
we have $\id \in \miss_i^h$ iff there exists $\id' \in \seq(\id)$ such that $\textit{inconsistent}(\id',\se_i^h,\re_i^h)$ is true. By construction, this holds iff 
the reports relative to $i$ are inconsistent with regard to $\id'$. This proves 3).
\end{proof}

\subsection{Correctness}
We show two sets of properties regarding monitoring and dissemination. Monitoring properties characterise the probability of any node $i \in \pset$
being punished in the present or future stages as a function of the current state.
Dissemination properties quantify the probability of any node $i$ receiving events, already disseminated or disseminated only in the future,
as a function of the current state and the present action $\va_i$ of $i$. This allows us to compute the expected utility of any player $i$ for each possible action he takes in the present,
and this way prove that our strategies are SE by applying the One-deviation Property.

Lemma~\ref{lemma:mon-corr} enumerates monitoring Properties~M1-M4 valid for any player $i$ and history $h$, assuming that every player follows $\vsigma^f$ after $h$.
M1-M3 state that the probability of $i$ being punished in the next stage is a function of $\stat_i^h(i)$ and $\miss_i^h$, while $i$ is never punished in 
future stages other than the next. In addition, Property~M4 shows that $i$ cannot influence the punishments being applied to him in the present
stage by only deviating in the current round.

\begin{lemma}
\label{lemma:mon-corr}
For any history $h \in \hist$ from stage $t \in \mathbb{N}$ and player $i \in \pset$, the following properties hold:

\vspace{0.05in}
\textbf{M1.} If $\stat_i^h(i) = \good$, then the mediator accuses $i$ in stage $t+1$ with probability $\pr(|\miss_i^h|) = 1-(1-\pmon)^{|\miss_i^h|}$.
  
\vspace{0.05in}
\textbf{M2.} If $\stat_i^h(i) = \bad$, then the mediator accuses $i$ in stage $t+1$ with probability $1$.

\vspace{0.05in}
\textbf{M3.} The mediator never accuses $i$ in any stage $t' > t+1$.

\vspace{0.05in}
\textbf{M4.} For any two actions $\va_i,\va_i' \in \act_i$ and round $r> \rnd^{h}$ from stage $t$, $i$ is punished in round $r$ either by every node or by no node,
the probability of $i$ being punished is the same after $\va_i^*$ and $\va_i'$, and every node $j \in \pset \setminus \{i\}$ gets the same key $\key_i$.
\end{lemma}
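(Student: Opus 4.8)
The plan is to prove the four properties by first isolating the two mechanisms through which the mediator can punish $i$ in stage $t+1$, namely an explicit \emph{accusation} by some node and a \emph{report-inconsistency} detected by the mediator, and then translating each into the local quantities $\stat_i^h(i)$ and $\miss_i^h$ via Proposition~\ref{prop:facts}. Throughout I would use that every player follows $\vsigma^f$ after $h$, so every $j \neq i$ accuses and reports truthfully. For M2, assume $\stat_i^h(i) = \bad$. By part~2 of Proposition~\ref{prop:facts} there is some $j \neq i$ with $\stat_j^h(i) = \bad$; since a node only switches $\stat_j(i)$ from $\good$ to $\bad$, this value persists to every $z \in \term^t$ succeeding $h$, so at the first monitoring round of stage $t+1$ node $j$ accuses $i$ and the mediator broadcasts a verdict to punish $i$ with probability $1$.

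For M1, assume $\stat_i^h(i) = \good$. Part~2 of Proposition~\ref{prop:facts} gives $\stat_j^h(i) = \good$ for every $j \neq i$, and by monotonicity this persists to the end of stage $t$, so no node accuses $i$ and the only route to punishment is a detected report-inconsistency. Because the reporting nodes follow $\vsigma^f$, the reports reconstructed by the mediator faithfully reflect $i$'s stage-$t$ behaviour, and by parts~1 and~3 of Proposition~\ref{prop:facts} the set of sequences whose reports are inconsistent equals $\miss_i^z = \miss_i^h$. Hence requesting a sequence $s$ exposes an inconsistency iff $s \in \miss_i^h$. Since the mediator requests each of the $\nseq$ sequences independently with probability $\pmon$, the verdict punishes $i$ iff at least one of the $|\miss_i^h|$ inconsistent sequences is requested, which occurs with probability $1 - (1-\pmon)^{|\miss_i^h|}$.

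For M3, fix any stage $t' > t+1$. The accusation of $i$ in stage $t'$ depends only on $i$'s behaviour during stage $t'-1 \ge t+1$, and the transition rules reset $\stat_i$, $\miss_i$, $\re_i$, and $\se_i$ at the start of each stage. Since $i$ follows $\sigma_i^f$ after $h$, it sends only valid messages and forwards every non-dropped tuple exactly as $\prng_i^f$ prescribes, so no inconsistency arises; the one case to remark on is that $i$ may be punished during stage $t+1$, but a punished node still forwards each received ciphered tuple according to $\prng_i^f$, so its behaviour remains consistent and yields no accusation in stage $t+2$. For M4, the key observation is that the stage-$t$ verdict on $i$ is computed at the last monitoring round of stage $t$ from data independent of $i$'s current action: the accusations against $i$ and the reports about $i$ concern stage $t-1$ (fixed before $h$), the mediator's report-request coins are drawn independently of $i$, and by the strategic-monitoring design $i$'s own monitoring messages are about other nodes and cannot affect its own verdict. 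Thus the distribution of the verdict, and hence the probability that $i$ is punished in any round $r > \rnd^h$, is identical after $\va_i$ and after $\va_i'$; and because the mediator broadcasts this single verdict and the single key $\key_i$ to every $j \neq i$, punishment is applied by all nodes or by none and every node obtains the same $\key_i$. I would treat uniformly the cases where $\rnd^h$ lies before, during, or after the monitoring rounds, since once the verdict is issued it only confirms this independence.

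The main obstacle I expect is the exact correspondence step inside M1: establishing that requesting a sequence $s \in \miss_i^h$ \emph{always} exposes an inconsistency while requesting a sequence $s \notin \miss_i^h$ \emph{never} produces a false positive. This requires matching the mediator's inconsistency predicate against the definition of $\miss_i$ precisely (through part~3 of Proposition~\ref{prop:facts}) and invoking truthful reporting by the other nodes, and then using the independence of the per-sequence coin flips to assemble the product form $1 - (1-\pmon)^{|\miss_i^h|}$.
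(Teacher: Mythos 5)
Your proposal is correct and follows essentially the same route as the paper: M1 and M2 via Proposition~\ref{prop:facts} (splitting the accusation route from the report-inconsistency route and using the independence of the per-sequence requests to get $1-(1-\pmon)^{|\miss_i^h|}$), M3 by the stage-start reset of $\stat_i$ and $\miss_i$ together with full compliance after $h$ (the paper phrases this as applying M1 with $|\miss_i|=0$), and M4 from the trusted mediator ignoring $i$'s own messages and broadcasting a single verdict and key. The obstacle you flag for M1 -- that unrequested or consistent sequences never yield false positives -- is indeed the point that needs part~3 of Proposition~\ref{prop:facts} as an iff, and your treatment of it is if anything slightly more explicit than the paper's.
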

\begin{proof}
\textbf{M1.} Fix any $z \in \term^t$ succeeding $h$. By Proposition~\ref{prop:facts}, we have $\stat_i^z(i) = \good$ and $\miss_i^z = \miss_i^h$, 
and $\stat_j^z(i) = \good$ for every $j \neq i$. Thus, no node $j$ sends an accusation against $i$ to the mediator
in the first round of stage $t+1$. Moreover, since nodes never remove entries from $\re$ and $\se$,
for every sequence $S \in \miss_i^h$, reports relative to $i$ after $z$ are inconsistent regarding some $\id \in S$.
If the mediator selects $S$ in the corresponding even monitoring round of stage $t+1$, then every node sends his reports relative to $i$ and regarding every event from $S$, including $\id$.
Thus, the mediator detects the inconsistency regarding $\id$ and emits an accusation against $i$ in the last monitoring round of stage $t+1$.
This occurs with probability $\pr(|\miss_i^z(i)|) = \pr(|\miss_i^h(i)|)$.

\vspace{0.05in}
\textbf{M2.} Fix any $z \in \term^t$ succeeding $h$. By Proposition~\ref{prop:facts}, it holds $\stat_j^z(i) = \bad$ for some $j \neq i$. Hence, $j$ sends an 
accusation against $i$ to the mediator in the first monitoring round of stage $t+1$, and the mediator accuses $i$ in the last
monitoring round of stage $t+1$.

\vspace{0.05in}
\textbf{M3.} Let $t' \geq t+1$ be any stage number and fix any $z \in \term^{t'-1}$ succeeding $h$. Since $i$ follows $\vsigma^f$ in the first monitoring round of stage $t'$, 
we have $\stat_i^{h'}(i) = \good$ and $\miss_i^{h'} = \emptyset$ for $h'$ immediately succeeding $z$. By M1, it follows that the mediator accuses $i$ in stage $t'+1$ with probability:
$$\pr(|\miss_i^{h'}|) = \pr(0) =0.$$
  
\vspace{0.05in}
\textbf{M4.} Regarding the first and last statements, recall that the mediator is trusted. Hence, we are focusing on any history $h$ where he has not deviated in the past.
Therefore, whether $h$ precedes the last monitoring round, the mediator always sends the same key $\key_i$ to every $j \in \pset \setminus \{i\}$,
and either sends an accusation to every node or to none. Regarding the second statement, the mediator ignores any information sent by $i$ when determining whether $i$ should be punished.
If $h \in \term^{t-1}$, then the probability of $i$ being accused by the mediator depends solely on $\stat_j(i)$, $\rec_j(i)$, and $\se_j(i)$ for every $j \in \pset \setminus \{i\}$, and on
the probability $\pmon$ of the mediator selecting each sequence of identifiers. Otherwise, the probability of $i$ being punished
depends on the same information except $\stat_j(i)$, and depends on what $j$ already sent to the mediator in previous monitoring rounds of stage $t$.
If $h$ follows the last monitoring round, then either every node $j \in \pset \setminus \{i\}$ already has an accusation against $i$ in $h$,
or none has, regardless of the present action.
\end{proof}

Lemma~\ref{lemma:dissem-corr1} enumerates properties D1-D5. The complete proof is in Appendix~\ref{app:corr}.
Given any history $h$ from stage $t$, event $e$, and node $i \in \pset$, let $g^h(e,i) = (e)_{\key_i}$ 
if the mediator triggers a punishment of $i$ in stage $t$ with key $\key_i$, or $g^h(e,i) = e$ otherwise.
By~M4, we have that for any $j \neq i$, if $(\id,e') \in \pend_j$ and $j$ forwards $\id$ to $i$,
then $j$ sends $g^h(e',i)$. Conversely, if $j$ receives $(\id,e')$ for the first time from $i$, then $j$ adds $(\id,g^h(e',i))$ to $\pend_j$.
For any history $h$ and two actions $\va_i^*$ and $\va_i'$, we use the notation $v_i^*(i)$ and $v_i'(i)$ to 
represent the value of any state variable $v_i$ resulting from $i$ following $\va_i^*$ and $\va_i'$ after $h$, respectively.
For some $j$ and message $a_i(j)$, we consider that $\id \in a_i(j)$ for some $j$ if there exists a tuple $(\id,e) \in a_i(j)$.

Properties D1-D3 refer to events introduced only in the future. Namely, Property D1 states that, regardless of the punishments being applied, 
a player $i$ obtains a disseminated tuple $(\id,e)$ iff $i$ is not punished and receives $(\id,g^h(e,j))$ from some $j$.
D2 and D3 indicate that $i$ cannot influence the probability of receiving any event not yet introduced.
Properties D4-D5 refer to events being disseminated. More precisely, D4 states that if $i$ follows two alternative actions in which $i$ sends a given identifier $\id$ already disseminated
to the same set of nodes, then $i$ receives a tuple containing $\id$ after both actions with the same probability.
D5 states that if the source previously disseminated a tuple $(\id,e)$ and $i$ follows two actions $\va_i^*$ and $\va_i'$ where $i$ does not send more tuples containing $\id$ in $\va_i^*$ than in $\va_i'$ to any node,
then the probability of $i$ retrieving $e$ after following $\va_i^*$ is at least as high as after following $\va_i'$.
The complete proofs are in Appendix~\ref{app:corr}.

\begin{lemma}
\label{lemma:dissem-corr1}
Fix any history $h \in \hist$ from stage $t \in \mathbb{N}$, player $i \in \pset$, and identifier $\id \in \ids$.
The following properties hold:

\vspace{0.05in}
\textbf{D1.} If the source introduces $(\id,e)$ after the realisation of $h$,
then for all history $h^* \in \hist$ from stage $t$ succeeding $h$, player $j \in \pset$, and $(\id,e') \in \pend_j^{h^*}$, it holds $e' = g^h(e,j)$.

\vspace{0.05in}
\textbf{D2.} For every $t' > t$ and any two actions $\va_i^*,\va_i' \in \act_i$, the probability of $i$ receiving some tuple containing identifier $\id$ in stage $t'$
after $i$ follows $\va_i^*$ is the same as after $\va_i'$.

\vspace{0.05in}
\textbf{D3.} If $\id$ is introduced in stage $t$ after $h$ is realised, then for any two actions $\va_i^*,\va_i' \in \act_i$, the probability of $i$ receiving some tuple containing identifier $\id$ in stage $t$
after $i$ follows $\va_i^*$ is the same as after $i$ follows $\va_i'$.

\vspace{0.05in}
\textbf{D4.} Fix any two actions $\va_i^*,\va_i' \in \act_i$ that fulfil: i) $\stat_i^*(i) = \stat_i'(i)$ and $\miss_i^* = \miss_i'$; and ii) for any $j \in \pset$, $\id \in a_i^*(j)$ iff $\id \in a_i'(j)$.
If $\id$ is disseminated in stage $t$ before $h$ is realised, then $i$ receives some tuple containing $\id$ after following $\va_i^*$ iff the same holds after following $\va_i'$.

\vspace{0.05in}
\textbf{D5.} Let $e \in \evt^t$ be the event disseminated with identifier $\id$. Fix any two actions $\va_i^*,\va_i' \in \act_i^{h}$ 
that fulfil:  i) $\stat_i^*(i) = \stat_i'(i)$ and $\miss_i^* = \miss_i'$; and ii) for each $j \in \pset$, $\id \in a_i^*(j)$ only if $\id \in a_i'(j)$. 
For every outcomes $z^*,z' \in \term^t$ succeeding $\va_i^*$ and $\va_i'$, respectively, if $\id$ is introduced in stage $t$ prior to $h$ being realised
and $(\id,e) \in \rec_i(z')$, then $(\id,e) \in \rec_i(z^*)$.
\end{lemma}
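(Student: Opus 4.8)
The plan is to prove the five dissemination properties D1--D5 by carefully tracking, for each action of player $i$, exactly which tuples $i$ sends and receives, and how the ciphering keys propagate through the network. The unifying observation is Property M4 of Lemma~\ref{lemma:mon-corr}: the punishment status of every node (and hence the key $\key_i$ each node applies) is completely determined by the history $h$ and is \emph{independent} of the action $i$ takes in the current round. This lets us treat $g^h(e,j)$ as a fixed function of $h$ throughout, so that any two alternative actions $\va_i^*,\va_i'$ induce the same ciphering behaviour at every other node. I would establish this dependence once, at the start, and invoke it repeatedly.

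For D1, I would argue by induction on the dissemination rounds of stage $t$ succeeding $h$. When the source introduces $(\id,e)$ it adds $(\id,g^h(e,j))$ appropriately, and whenever a node $j$ first receives $(\id,e'')$ and forwards it, by M4 it applies $g^h(\cdot,\cdot)$ consistently; commutativity of the cipher (from the Symmetric Cipher subsection) guarantees that the resulting tuple in any downstream $\pend$ set is exactly $g^h(e,\cdot)$ of the original $e$, never some spuriously double-ciphered variant. For D2 and D3, the key point is that events not yet introduced are disseminated entirely \emph{after} round $\rnd^h$, and by PRNG1 the forwarding sets $\prng_j^f(\seed_j,\id')$ for those future identifiers are (conditionally) independent of anything $i$ does in the current round; since $i$'s current action $\va_i$ touches only already-introduced identifiers, it cannot alter the reception probability of a future identifier. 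I would phrase this as: the distribution over outcomes restricted to events with $\age(\id,r)\le 0$ at round $\rnd^h$ factors out of $\va_i$.

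Properties D4 and D5 concern an identifier $\id$ \emph{already} disseminated (introduced before $h$). For D4, hypothesis (ii) says $\va_i^*$ and $\va_i'$ agree on \emph{which} identifiers they contain for each neighbour; combined with hypothesis (i) that $\stat_i(i)$ and $\miss_i$ agree, the set of nodes to which $i$ forwards $\id$, and the resulting loop-back structure, is identical, so $i$ receives a tuple with $\id$ after one iff after the other. For D5, hypothesis (ii) is only the one-sided inclusion ``$\id \in a_i^*(j)$ only if $\id \in a_i'(j)$'': $\va_i^*$ sends $\id$ to a subset of the neighbours that $\va_i'$ sends it to. The plan is to set up a coupling between outcomes $z^*$ and $z'$ that agree on all other nodes' randomness (the seeds), and show that whenever $\va_i'$ succeeds in getting $e$ looped back to $i$, so does $\va_i^*$. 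Intuitively, sending $\id$ to \emph{fewer} nodes can only reduce the chance that $i$ inadvertently introduced the tuple first, and hence can only increase (weakly) the chance that some $j$ sends $i$ a retrievable copy before $i$ itself sent it --- which is precisely the condition for $(\id,e)\in\rec_i(z)$.

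The main obstacle I expect is D5, and specifically the monotonicity argument together with the retrievability bookkeeping. The definition of $\rec_i(z)$ excludes tuples that $i$ sent out before receiving them, and ``retrieve'' is defined through a chain of cipherings; so I must argue that shrinking $i$'s own outgoing set for $\id$ cannot destroy a retrieval path that existed under $\va_i'$. This requires leaning on the constraint imposed on the deterministic forwarding rule (``if $S'\subseteq S$ and $j\in S'$ is selected in the larger scenario, $j$ is selected in the smaller''), so that reducing the set of senders to $i$ does not change which event $i$ would choose to forward, preserving the coupling downstream. Establishing this set-monotonicity of retrieval cleanly --- rather than getting lost in the case analysis of multiply-ciphered events --- is where the real care is needed; Non-disclosure and Commutativity are the tools that keep the ciphered cases under control.
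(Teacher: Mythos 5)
Your proposal matches the paper's proof in all essentials: D1 by induction over dissemination rounds using Commutativity and the fact that the ciphering function $g^h(\cdot,\cdot)$ is fixed by $h$ (via M4), D2--D3 by the independence of other nodes' forwarding state and of the seeds from $i$'s current action, D4 by a round-by-round coupling, and D5 by the subset-monotonicity of the dissemination front combined with preservation of the retrieval path through the subset-selection constraint on the deterministic forwarding rule --- exactly the two sub-properties the paper proves, and you correctly identify this as the crux. The only cosmetic difference is that for D2/D3 the paper leans on Proposition~\ref{prop:facts} and the trusted mediator's choice of seeds rather than on PRNG1's conditional independence (which is really only needed for D7), but this does not change the argument.
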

\begin{proof}
(Sketch) \textbf{D1.} The source forwards $(\id,g^h(e,i))$ or no tuple containing $\id$ to each $i$. Inductively, if we suppose that D1 holds for round $r$,
then any $j$ sending $(\id,e')$ to $i$ in round $r+1$ has $e' = g^h(g^h(e,j),i)$, while $i$ adds $(\id,e'')$ to $\pend_i$, where $e'' = g^h(e,j)$. By Commutativity, $e'' = g^h(e,i)$.

\vspace{0.05in}
\textbf{D2.} We set $\stat_j(j) = \good$ and $\miss_j = \emptyset$ after first round of stage $t'$, and by Proposition~\ref{prop:facts}
these values are preserved throughout stage $t'$. Thus, every node $j$ forwards $\id$ immediately after its first reception to $l$
iff $l \in \prng_j^f(\seed_i,\id)$. Since the probability distribution over seeds is the same after both actions, then so is the probability of $i$ receiving $\id$.

\vspace{0.05in}
\textbf{D3.} This probability depends only on the value $\stat_j(j)$, $\miss_j$, and $\seed_j$ for every $j \in \pset \setminus \{i\}$: every node follows $\vsigma^f$ after $h$, during the dissemination of $\id$;
hence, $i$ receives $\id$ iff there exists a path of nodes terminating in $i$ such that every $j$ in that path forwards $\id$ to the next node $l$, which occurs when $\stat_j(j) = \good$, $\seq(\id) \notin \miss_j$,
and $l \in \prng_i^f(\seed_j,\id)$. By Proposition~\ref{prop:facts}, the values $\stat_j(j)$ and $\miss_j$ are never updated after $h$, regardless of what $i$ does in the present.
If $h$ follows the last monitoring round, then $\seed_j$ is already fixed since mediator is trusted, such that $i$ either receives $\id$ after $h$ or not,
independently of the current action. Otherwise, the probability distribution guiding the choice of $\seed_j$ is independent of the actions of $i$.

\vspace{0.05in}
\textbf{D4.} We use induction to show the result, starting from the base case that any node $j$ receives $\id$ for the first time immediately 
after $h$ with the same probability, regardless of the action of $i$. Also,  by Proposition~\ref{prop:facts}, $\stat_j(j)$ and $\miss_j$ are 
preserved after $h$ for every $j \in \pset \setminus \{i\}$, regardless of the current action of $i$. Thus, $j$
either forwards $\id$ after $h$ and after first receiving $\id$ to any node $l$ after $\va_i^*$ iff $j$ does the same after $\va_i'$.
This fact implies the induction step.

\vspace{0.05in}
\textbf{D5.} We prove two sub-properties. First, we show that, by not sending more identifiers in $\va_i^*$, the number of identifiers
being disseminated at each point in time after $\va_i'$ is a subset of that after $\va_i^*$, in a similar fashion to D4. Using this property,
we show that, if $(\id,e) \in \rec_i(z')$ and $i$ did not receive $(\id,e)$ in $h$, then it must hold $(\id,e) \in \rec_i(z^*)$. This follows from three facts.
First, $i$ can never forward $(\id,e')$ in $\va_i'$ before receiving it while being able retrieve $e$ from $e'$, since otherwise we would have by definition $(\id,e) \notin \rec_i(z')$.
Second, if $i$ forwards $(\id,e')$ in $\va_i'$ and this tuple loops back to $i$, then, despite nodes successively applying ciphers over $e'$, $i$ always receives $(\id,e')$
back. Thus, we avoid the scenario where $i$ receives $e' = (e)_{\key_i}$, forwards it causing some node to decipher $e$, and expects
$(\id,e)$ to return to $i$ without some node ciphering $e$ again. The third fact is the following. Suppose that $i$ believes that some tuple $(\id,e')$ is held
by some node $j$, and that this tuple is forwarded all the way to $i$ after $i$ follows $\va_i'$. Suppose also that $i$ can retrieve $e$ from the event resulting from the successive ciphering operations
applied over $e'$. Then, the exact same tuple reaches $i$ when $i$ follows $\va_i^*$. Here, the fact that nodes
deterministically forward the event received from the node with smallest identifier plays a key role. More precisely, by the first property,
the set of identifiers forwarded by any node after $\va_i^*$ is a subset of the same set after $\va_i'$. Thus, whenever some $j$
forwards $(\id,e')$ to $l$ after $\va_i'$ and $l$ prepares to forward $e'$ because $j$ has the smallest identifier, then $j$ also forwards $(\id,e')$
to $l$ and has the smallest identifier after $\va_i^*$ such that $l$ prepares to forward $e'$.
\end{proof}

Lemma~\ref{lemma:dissem-corr2} enumerates dissemination Properties~D6 and D7, which quantify the probability of node $i$ receiving each disseminated
event. D6 states that this probability is arbitrarily close to $q(f)$ for events disseminated in future stages, which is true by PRNG1. D7 states that
for any event with identifier $\id$ introduced in future rounds of the present stage this probability is at most $q(f) + \xi$ for an arbitrarily small $\xi$.
This follows from the fact that $i$ learns for each player $j \neq i$ at most the value $\prng_j^f(\seed_j,\id')$ for every $\id'$ previously introduced,
but $i$ does not learn these values for $\id$. Hence, any seed yielding any subset $\prng_j^f(\seed_j,\id)$ is possible and equally likely 
to have been chosen by the mediator, independently of the subsets generated for other events. By Consistency of $(\vsigma^f,\mu^f)$,
$i$ believes this to be true. By~PRNG1, it follows that the probability of any
node receiving $\id$ is at most arbitrarily close to $q(f)$, although it may be lower due to the existence of players that drop $\id$.

\begin{lemma}
\label{lemma:dissem-corr2}
Fix any player $i \in \pset$, stage $t \in \mathbb{N}$, private history $h_i \in \hist_i$ from stage $t$, and identifier $\id \in \ids$.
The following properties hold:

\vspace{0.05in}
\textbf{D6.} For every constant $\xi > 0$ and stage $t' > t$, there exists a function $\prng_i$ such that $i$ receives some tuple containing identifier $\id$ in stage $t'$ with probability $q'$ fulfilling $|q' -q(f)| < \xi$.

\vspace{0.05in}
\textbf{D7.} If $\id$ is introduced in stage $t$ after $h_i$ is observed, then for every constant $\xi > 0$ there exists a function $\prng_i$ such that $i$ believes
that will receive some tuple containing identifier $\id$ in stage $t$ with probability at most $q(f) + \xi$.
\end{lemma}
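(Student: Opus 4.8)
The plan is to prove D6 and D7 separately, both resting on the conditional-independence guarantee of PRNG1 together with the Consistency of the belief system $(\vsigma^f,\mu^f)$, which lets player $i$'s subjective probabilities coincide with the objective ones generated by the protocol. For D6, I would argue that an event with identifier $\id$ disseminated in a \emph{future} stage $t'>t$ is propagated by a process whose only randomness comes from the seeds $\seed_j$ chosen afresh by the mediator at the start of $t'$. By Proposition~\ref{prop:facts}, at the start of stage $t'$ every node has $\stat_j(j)=\good$ and $\miss_j=\emptyset$, so every non-punished node forwards $\id$ immediately upon first reception to exactly the set $\prng_j^f(\seed_j,\id)$. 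Since the seeds for stage $t'$ are independent of everything $i$ has observed by the end of $h_i$ in stage $t$, the subset $\prng_j^f(\seed_j,\id)$ each node forwards to is, by PRNG1, within $\xi'$ of a uniformly random $f$-subset. The epidemic process driven by these near-uniform independent choices therefore yields a reception probability $q'$ within $\xi$ of the idealised value $q(f)$ (the probability under exactly-uniform fanout selection), where $\xi\to 0$ as $\xi'\to 0$; choosing the PRNG granularity fine enough gives the claim.

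For D7, the event $\id$ is introduced \emph{later in the current stage} $t$, after $h_i$ is observed. The key subtlety, as the lemma statement flags, is that by the time $h_i$ is observed $i$ may already have learned, for various other identifiers $\id'$ introduced earlier in $t$, the realised subsets $\prng_j^f(\seed_j,\id')$ of other nodes $j$ (by seeing where those nodes forwarded). I would argue that these observations constrain $j$'s seed $\seed_j$ only through the coordinates $\vec{S}_{-\id}$, and never through the coordinate $S_{\id}$ itself, precisely because $\id$ has not yet been introduced. Hence, from $i$'s viewpoint, any subset $\prng_j^f(\seed_j,\id)$ remains possible and, by the conditional-independence bound of PRNG1, occurs with probability within $\xi'$ of $1/\binom{n-1}{f}$ \emph{conditioned on} the already-observed subsets for other identifiers. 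By Consistency of $(\vsigma^f,\mu^f)$, $i$'s belief assigns these same conditional probabilities. I would then bound the reception probability of $\id$ by the same epidemic-spreading argument as in D6, obtaining the upper bound $q(f)+\xi$; it is only an upper bound, not an equality, because some nodes along potential dissemination paths may have $\seq(\id)\in\miss_j$ and therefore drop $\id$, which can only decrease $i$'s reception probability relative to the idealised all-cooperating process.

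The main obstacle is making the D7 argument about $i$'s \emph{beliefs} rigorous rather than heuristic. One must show that the information $i$ gleans from $h_i$ about each $\seed_j$ factors cleanly: it pins down (at most) the $\vec{S}_{-\id}$ coordinates but leaves $S_{\id}$ conditionally distributed as in PRNG1. This requires checking that nothing in the monitoring rounds preceding the introduction of $\id$ leaks information about $S_{\id}$ specifically. I would verify this by inspecting the strategy: reports sent in the monitoring rounds concern identifiers already disseminated in the \emph{previous} stage $t-1$ (and seeds are sent only in the last monitoring round, before dissemination), so no stage-$t$ report can reveal $\prng_j^f(\seed_j,\id)$ for an as-yet-unintroduced $\id$. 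Once this informational independence is established, PRNG1 supplies the per-node bound and Consistency transfers it to $i$'s belief; the remaining step is the routine aggregation over dissemination paths, which mirrors D6 and which I would not grind through in detail, merely noting that the presence of droppers makes the final quantity bounded \emph{above} by the idealised $q(f)+\xi$.
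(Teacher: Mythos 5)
Your proposal is correct and follows essentially the same route as the paper: D6 rests on fresh, independent seeds plus Proposition~\ref{prop:facts} guaranteeing every node forwards in future stages, and D7 rests on PRNG1's conditional independence (the observed history constraining only the $\vec{S}_{-\id}$ coordinates), Consistency transferring this to $i$'s belief, and the possibility of droppers with $\seq(\id)\in\miss_j$ explaining why the bound is one-sided. The ``routine aggregation'' you defer is exactly what the paper carries out in Appendix~\ref{sec:dissem-model} as a round-by-round induction matching the realised process to the SIR-style recursion $Q^{\id,f}(S,I,R,r)$, with the $I^*\subseteq I$ restriction in the D7 case.
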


\subsection{Utility Analysis}
\label{sec:util}
In line with the One-deviation property, the goal of this section is to compute the difference between the expected utilities
$\eu{i}{\vsigma^f,\mu}{h_i,\va_i} - \eu{i}{\vsigma^f,\mu}{h_i,\va_i'}$ of player $i$ following two alternative actions $\va_i$ and $\va_i'$ after any private history $h_i$.
We divide the difference between expected utilities calculated from stage $t$ into three parts: \emph{long-term} utilities, referring to any stage $t' > t+1$;
\emph{medium-term} utilities relative to stage $t+1$; and \emph{short-term} utilities relative to stage $t$.
We will denote by $\gamma$ the cost of sending a tuple, i.e., $\gamma  =\alpha(c+ \lceil\log(\nevt)\rceil)$, where $c$ is the size of an event

\subsubsection{Long-term}
Lemma~\ref{lemma:future:diff} shows that the expected utility of any stage $t'>t+1$ is fixed, regardless of the present action in stage $t$.
The reason for this is that by our definition of strategy, regardless of what player $i$ does in stage $t$, he sends every requested message
in stage $t'$. Also, since he does not deviate in any future stage, he is never punished in stage $t'$, receiving all events with a fixed probability.

\begin{lemma}
\label{lemma:future:diff}
Fix any player $i \in \pset$, private history $h_i \in \hist_i$ from stage $t \in \mathbb{N}$, and any two actions $\va_i^*,\va_i' \in \act_i$.
For every $t' > t+1$, we have:
$$\eu{i}{\vsigma^f,\mu,t'}{h_i,\va_i^*} = \eu{i}{\vsigma^f,\mu,t'}{h_i,\va_i'}.$$
\end{lemma}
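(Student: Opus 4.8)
The plan is to show that the entire probability distribution over the stage-$t'$ quantities entering the realised utility $u_i(z)$ --- namely the retrieved tuples $\rec_i(z)$ and the message sizes $|z_i^{t',r}(j)|$ --- is identical whether $i$ plays $\va_i^*$ or $\va_i'$ after $h_i$. Since $\eu{i}{\vsigma^f,\mu,t'}{h_i,\va_i}$ is obtained from $u_i(z)$ by averaging over the belief $\mu$ (formed from $h_i$ alone, hence independent of the subsequent action) and over the induced distribution on $\term^{t'}$, equality of the $u_i$-relevant marginals for each history $h$ in the support immediately yields the claim.

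First I would argue that no player is punished in stage $t'$, irrespective of $\va_i$. The only channel by which the stage-$t$ action can influence a later stage is the monitoring-to-punishment mechanism, whose verdict for stage $t'$ depends solely on the behaviour of each player during stage $t'-1$. Because $t' > t+1$ gives $t'-1 > t$, every player --- including $i$ --- follows $\vsigma^f$ throughout stage $t'-1$; by Proposition~\ref{prop:facts} this forces $\stat_j^z(j) = \good$ and $\miss_j^z = \emptyset$ for every $j \in \pset$ at each outcome $z \in \term^{t'-1}$. Applying M1 to each $j$ then gives punishment probability $\pr(|\miss_j^z|) = \pr(0) = 0$ (for $i$ this is exactly M3). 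Crucially, $\stat_j(j) = \good$ and $\miss_j = \emptyset$ hold under honest play regardless of which events $j$ happened to receive, hence regardless of $\va_i$, so the ``no punishment in $t'$'' conclusion is the same under both actions. Next I would use this to reduce stage $t'$ to a clean dissemination: no $\key_j$-ciphering is applied, so every event flows in plain and $i$ retrieves every tuple it receives, while $i$ forwards exactly those tuples according to $\prng_i^f(\seed_i,\cdot)$ and sends the fixed-size monitoring messages prescribed by $\vsigma^f$. The only randomness governing the outcome is the fresh, uniformly chosen seeds $\seed_j$ and the independent monitoring selections the mediator draws within stage $t'$; these are generated anew each stage and their distribution does not depend on $\va_i$ (the events $\evt^{t'}$ are likewise fresh). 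Consequently the induced distribution over $(\rec_i(z),(|z_i^{t',r}(j)|)_{r,j})$, and therefore over $u_i(z)$, coincides for $\va_i^*$ and $\va_i'$, and substituting into the definition of $\eu{i}{\vsigma^f,\mu,t'}{h_i,\cdot}$ gives the desired equality.

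I expect the main obstacle to be rigorously decoupling stage $t'$ from the stage-$t$ deviation, i.e.\ ruling out any residual effect of $\va_i$ surviving into stage $t'$. The argument hinges on two facts I would state carefully: the per-stage reset of $\stat$ and $\miss$ at the first monitoring round (so that only stage-$(t'-1)$ behaviour is carried forward), together with M1/M3 certifying that honest stage-$(t'-1)$ play leaves every player unpunished in stage $t'$. Once the state entering stage $t'$ is shown to be both $\va_i$-independent and punishment-free, the remainder is the routine observation that fresh per-stage randomness makes the stage-$t'$ outcome distribution independent of the past.
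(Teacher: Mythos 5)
Your proposal is correct and follows essentially the same route as the paper's proof: the per-stage reset of $\stat_i(i)$ and $\miss_i$ plus Proposition~\ref{prop:facts} fixes the monitoring costs, the no-punishment conclusion for stage $t'$ is the paper's M3 (you re-derive it from M1 applied to the honest stage-$(t'-1)$ play), and the fresh-seed independence argument is exactly the content of D2 and D1 that the paper cites for the dissemination costs and benefits. The only cosmetic difference is that you inline the proofs of D2/M3 rather than citing them, and you establish the slightly stronger (and also true) fact that no player at all is punished in stage $t'$, whereas the paper only needs that $i$ is not accused and that ciphered events from punished neighbours remain retrievable.
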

\begin{proof}
By construction, $i$ sets $\stat_i(i) = \good$ and $\miss_i=\emptyset$ at the beginning of stage $t'$, regardless of whether $i$ follows $\va_i^*$ or $\va_i'$.
By Proposition~\ref{prop:facts}, these variables are never updated during stage $t'$. Thus, $i$ always sends every requested monitoring message, 
implying that expected monitoring costs for stage $t'$ are the same whether $i$ follows $\va_i^*$ or $\va_i'$.
By~D2, $i$ receives a tuple with each identifier $\id \in \ids$ with the same probability after both actions.
Since $\stat_i(i) = \good$ and $\seq(\id) \notin \miss_i = \emptyset$ after any round when $i$ receives $\id$ for the first time,
$i$ forwards a tuple containing $\id$ to exactly $f$ nodes with the same probability after both actions. This implies that expected
costs of disseminating events are the same. Finally, by~M3 and M4, in any dissemination round $r$ of stage $t'$,
no node $j \in \pset \setminus \{i\}$ has an accusation of $i$. By D1, if $i$ receives $(\id,e')$ from $j$, then $e' = g^h(e,j)$ such that $i$ is capable of retrieving $e$ from $e'$.
Also, $i$ never forwards $\id$ before receiving a tuple with this identifier. Therefore, $(\id,e) \in \rec_i(z)$ for any $z \in \term^{t'}$ reached with positive probability such that $i$ 
a tuple containing $\id$ in $z$. By D2, $i$ receives a tuple containing $\id$ with the same probability after following $\va_i^*$ and $\va_i'$. Therefore,
the expected benefits are the same.
\end{proof}

\subsubsection{Medium-term}
Lemma~\ref{lemma:present-dissem:diff} analyses the difference between the expected utility of player $i$ following an action $\va_i^*$ prescribed
by the strategy for dissemination rounds and that of following an arbitrary action $\va_i'$. We consider three cases,
which cover all the scenarios of interest to show that $\vsigma^f$ is an equilibrium. Namely, we have $\stat_i^*(i) = \stat_i^{h_i}$
and $\miss_i^* = \miss_i^{h_i}$ for any observed $h_i \in \hist_i$ such that $i$ follows $\va_i^*$ after $h_i$.
This is because in $\va_i^*$ player $i$ forwards every event as specified by the strategy and does not send invalid messages.
By following $\va_i'$, $i$ may not update $\stat_i(i)$ or $\miss_i$ by following valid actions only and forwarding every event as expected,
may update $\stat_i(i)$ from $\good$ to $\bad$ after sending an invalid message, or add multiple sequences to $\miss_i$ when 
not forwarding some events as expected. However, $i$ may never update $\stat_i(i)$ from $\bad$ to $\good$ nor remove an element from $\miss_i$.

\begin{lemma}
\label{lemma:next:diff}
Fix any player $i \in \pset$, private history $h_i \in \hist$ from stage $t \in \mathbb{N}$, and any two actions $\va_i^*,\va_i' \in \act_i$.
The following holds regarding the value $\Delta^{t+1} = \eu{i}{\vsigma^f,t+1}{h,\va_i^*} - \eu{i}{\vsigma^f,t+1}{h,\va_i'}$:
\begin{enumerate}
  \item If $\stat_i^*(i) = \stat_i'(i) = \bad$, then $\Delta^{t+1} \geq 0$.
  \item If $\stat_i^*(i) = \good$ and $\stat_i'(i) = \bad$, then for any $\xi > 0$ there exists a function $\prng_i^f$ such that:
\begin{equation}
\label{eq:future-diff}
\Delta^{t+1} \geq (q(f) - \xi)\beta (1-\pr(|\miss_i^*|)).
\end{equation}
  \item If $\stat_i^*(i) = \stat_i'(i) = \good$, then for any $\xi > 0$ there exists a function $\prng_i^f$ such that:
\begin{equation}
\label{eq:future-smalldiff}
\Delta^{t+1} \geq (q(f) - \xi)\beta (\pr(|\miss_i'|)-\pr(|\miss_i^*|)).
\end{equation}
\end{enumerate}
\end{lemma}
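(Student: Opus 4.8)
The plan is to show that, within stage $t+1$, the only component of $i$'s expected utility that depends on the current action is the expected benefit, and that this benefit factorises cleanly as (probability of not being punished) $\times$ (a fixed expected number of retrievable events). First I would argue that the costs $i$ incurs in stage $t+1$ are identical after $\va_i^*$ and $\va_i'$. At the start of stage $t+1$ the strategy resets $\stat_i(i) = \good$ and $\miss_i = \emptyset$, so by Proposition~\ref{prop:facts} player $i$ follows $\sigma_i^f$ from a fresh state, forwarding every received identifier to exactly $f$ nodes. The monitoring messages have fixed size (with padding), hence their cost is independent of the current action; and by~D2 the probability of receiving each $\id \in \ids$ in stage $t+1$ is the same after $\va_i^*$ and $\va_i'$, so the expected dissemination cost also coincides (by Integrity a ciphered event has the same size, so a punished relay forwards the same number of bits). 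Consequently the cost terms cancel in $\Delta^{t+1}$.

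Next I would establish the benefit structure. Since every event of $\evt^{t+1}$ is introduced after $h$, D1 gives that whenever some $j$ forwards a tuple with identifier $\id$ to $i$, its datum is $g^h(e,j)$. If $i$ is \emph{not} punished in stage $t+1$, then $i$ is told which nodes are punished and holds every $\key_j$ with $j \neq i$, so $i$ retrieves $e$ from $g^h(e,j)$ and collects a benefit $\beta$; the role of D1 here is to collapse an arbitrarily long chain of punished relays, via Commutativity, into the single residual key $\key_j$ that $i$ can undo. If $i$ \emph{is} punished in stage $t+1$, then $i$ never obtains $\key_i$ and every received tuple carries $(e)_{\key_i}$, possibly re-ciphered as $((e)_{\key_i})_{\key_j}$; by Non-disclosure $i$ cannot retrieve any such $e$, so its benefit is $0$. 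Combining this with~D2, the expected benefit of stage $t+1$ under action $\va_i$ equals $\tfrac{\beta}{\nevt}(1 - p_{\va_i})R$, where $R = \sum_{\id \in \ids} q_{\id}$ is the action-independent expected number of received identifiers and $p_{\va_i}$ is the probability that $i$ is punished in stage $t+1$. Subtracting, $\Delta^{t+1} = \tfrac{\beta}{\nevt}(p' - p^*)R$, where $p^*$ and $p'$ are the punishment probabilities after $\va_i^*$ and $\va_i'$.

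It then remains to read off $p^*$ and $p'$ from Lemma~\ref{lemma:mon-corr} and to bound $R$. Choosing a $\prng_i^f$ that realises~D6 with parameter $\xi$ yields $q_{\id} > q(f) - \xi$ for every $\id$, hence $R/\nevt > q(f) - \xi$. In Case~1 both states are $\bad$, so M2 gives $p^* = p' = 1$ and $\Delta^{t+1} = 0$. In Case~2, M1 gives $p^* = \pr(|\miss_i^*|)$ and M2 gives $p' = 1$, so $\Delta^{t+1} = (1 - \pr(|\miss_i^*|))\beta R/\nevt$. In Case~3, M1 gives $p^* = \pr(|\miss_i^*|)$ and $p' = \pr(|\miss_i'|)$. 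The crucial monotonicity observation, already noted before the lemma, is that in $\va_i^*$ player $i$ forwards everything as prescribed so $\miss_i^* = \miss_i^{h_i}$, whereas a deviation can only add sequences, giving $\miss_i^* \subseteq \miss_i'$ and thus $p^* \leq p'$ since $\pr(\cdot)$ is increasing. Because the coefficient $(p' - p^*)$ is therefore non-negative in all three cases, replacing $R/\nevt$ by its lower bound $q(f) - \xi$ preserves the inequality and yields exactly the three stated bounds.

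The main obstacle I anticipate is the careful bookkeeping behind the two invariance claims of the first two paragraphs: proving that costs are genuinely action-independent and that the benefit is truly all-or-nothing. The delicate points are (i) ruling out that a punished node receives or forwards fewer tuples, so that costs are unchanged, which I reduce to D2 together with Integrity; (ii) ensuring a non-punished $i$ can always undo the residual cipher, which leans on D1 and Commutativity; and (iii) applying Non-disclosure correctly to conclude zero benefit under punishment. Once these are pinned down, the three cases are an immediate substitution of M1--M2 combined with the monotonicity of $\pr$.
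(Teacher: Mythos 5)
Your proposal is correct and follows essentially the same route as the paper's proof: costs in stage $t+1$ are shown to be action-independent via the state reset, Proposition~\ref{prop:facts}, and D2, while the benefit factorises as (probability of not being punished, read off from M1--M2) times an action-independent reception term bounded below via D1, D6, and Non-disclosure. Your explicit appeal to the monotonicity $\miss_i^* \subseteq \miss_i'$ in Case~3 makes precise a step the paper leaves implicit, but the argument is otherwise the same.
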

\begin{proof}
Like in the previous lemma, $i$ initialises $\stat_i(i) = \good$ and $\miss_i = \emptyset$ in stage $t+1$, while never updating 
these variables during this stage as stated by Proposition~\ref{prop:facts}. Thus, expected monitoring costs
are the same whether $i$ follows $\va_i^*$ or $\va_i'$. By D2, $i$ receives a tuple for each $\id \in \ids$ with a fixed probability independent of the current
action, and forwards $\id$ to $f$ nodes. Thus, expected dissemination costs are also the same after both actions for any case.
Now, we analyse the expected benefits individually for each case.

\vspace{0.05in}
\textbf{Case 1.} This case is very similar to Lemma~\ref{lemma:future:diff}. By M2, $i$ is punished with probability $1$ after both actions.
By construction, if any $j \neq i$ sends $(\id,e')$ to $i$, then $(\id,e'') \in \pend_j$ where $e'' = g^h(e',i) = (e')_{\key_i}$.
By D1, $e'' = g^h(e,j)$, given that the source introduces $(\id,e)$. Thus, for every $(\id,e')$ sent to $i$, either we have $e' = (e)_{\key_i}$
or $((e)_{\key_i})_{\key_j}$ for some $j \neq i$. By non-disclosure, for any $z \in \term^{t+1}$ reached with positive probability
after any action $\va_i^*$ or $\va_i'$, $\rec_i(z) = \emptyset$, such that expected benefits are $0$.

\vspace{0.05in}
\textbf{Case 2.} By~M2, $i$ is punished in stage $t+1$ with probability $1$ after following $\va_i'$. As in Case 1, by~D1, the expected benefit is $0$.
On the other hand, by~M1, $i$ is punished in stage $t+1$ after following $\va_i^*$ with probability $\pr(|\miss_i^*|)$. When $i$ is not being punished,
by~D1 and~D6, $i$ obtains a benefit $\beta$ per disseminated event $e \in \evt^{t+1}$ with probability at least $q(f) - \xi$ for an arbitrarily small $\xi$:
$i$ receives a tuple $(\id,e')$ from some $j$ after any history $h$ with probability at least $q(f) - \xi$, such that $e' = g^h(e,j)$ and $i$ is able to retrieve
$e$ from $e'$; thus, for at least a fraction $q(f) - \xi$ of the histories $z \in \term^{t+1}$ reached with positive probability after $\va_i^*$ or $\va_i'$, 
we have $(\id,e) \in \rec_i(z)$. Therefore, the expected benefits after $i$ follows $\va_i^*$ are at least $(q(f) - \xi)\beta \nevt(1-\pr(|\miss_i^*|))/\nevt$, as we intended to prove.

\vspace{0.05in}
\textbf{Case 3.}
By~D1, $i$ retrieves each disseminated event iff $i$ is not being punished. By~M1, the probability of $i$ being punished after following $\va_i^*$ and $\va_i'$ is 
$x^* = \pr(|\miss_i^*|)$ and $x' = \pr(|\miss_i'|)$, respectively. By~D2 and D6, $i$ receives each event with some probability $q' \geq q(f) - \xi$ for an arbitrarily
small $\xi > 0$, both after $\va_i^*$ and $\va_i'$. This yields a difference between the expected benefits given as follows:
$$
\begin{array}{lll}
 & (q' \beta \nevt (1-x^*) - q' \beta \nevt (1 - x'))/\nevt &= \\
= & q' \beta (x' - x^*).
\end{array}
$$
This concludes the proof.
\end{proof}

\subsubsection{Short-term}
We need to differentiate between the cases where the alternative actions are followed in dissemination and monitoring rounds,
analysed in Lemmas~\ref{lemma:present-dissem:diff} and~\ref{lemma:present-monitor:diff}, respectively.
These lemmas use the following two auxiliary lemmas that analyse the expected utility relative to events disseminated in the present stage,
after the current round.

Lemma~\ref{lemma:present-future:bens} shows that the expected benefits of receiving events disseminated in future rounds are fixed, regardless of the present action.

\begin{lemma}
\label{lemma:present-future:bens}
Fix any player $i \in \pset$, history $h_i \in \hist_i$ from stage $t \in \mathbb{N}$, $\id \in \ids$ introduced after $h_i$ is observed, and any two actions $\va_i^*,\va_i' \in \act_i$. 
The expected benefits of receiving $e \in \evt^t$ introduced by the source with identifier $\id$ after $i$ follows $\va_i^*$ are the same as after $i$ follows $\va_i'$.
\end{lemma}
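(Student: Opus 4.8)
The plan is to write the expected benefit that $e$ contributes as $(\beta/\nevt)$ times the probability that $(\id,e)\in\rec_i(z)$, taken over the outcomes $z\in\term^t$ reached after the current action, and then to argue that this probability does not depend on whether $i$ plays $\va_i^*$ or $\va_i'$. First I would unfold the definition of $\rec_i$: $(\id,e)\in\rec_i(z)$ requires that some $j\neq i$ sends a tuple $(\id,e')$ to $i$ in some round $r$, that $i$ can retrieve $e$ from $e'$, and that $i$ did not send $(\id,e')$ to any node in a round $r'<r$. Because $\id$ is introduced only after $h_i$ is observed, its entire dissemination occurs in rounds strictly following the current action, so I may apply the monitoring and dissemination properties established for future rounds.

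The core of the argument is to show that each of the three ingredients is insensitive to the present action. For reception, D3 gives directly that the probability that $i$ receives some tuple containing $\id$ during stage $t$ is the same after $\va_i^*$ and after $\va_i'$. For retrieval, I would invoke D1: since the source introduces $(\id,e)$ after $h$, every tuple $(\id,e')$ that any node forwards satisfies $e'=g^h(e,j)$, so the only ciphering ever applied to $e$ is the one dictated by $g^h$; hence $i$ retrieves $e$ exactly when $i$ is not being punished, whereas if $i$ is punished the received event is $(e)_{\key_i}$, possibly further ciphered, from which $i$ cannot recover $e$ by Non-disclosure. Finally, M4 states that whether $i$ is punished in the rounds of stage $t$ following $h$ is all-or-nothing, carries the same key $\key_i$, and occurs with a probability independent of the present action. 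Combining these, conditioned on $i$ not being punished (an event of action-independent probability by M4) $i$ retrieves $e$ iff $i$ receives a tuple containing $\id$, which by D3 has action-independent probability; conditioned on $i$ being punished, the benefit is $0$ after either action.

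The remaining and most delicate point, which I expect to be the main obstacle, is the ``not sent before'' clause. I must rule out that playing $\va_i'$ lets $i$ emit a tuple $(\id,e')$ prematurely and thereby change whether the genuine event is counted as received. Here I would use that $\id$ has not been introduced before the current round, so $i$ has neither received nor validly forwarded $\id$ yet, and, crucially, that $i$ cannot possess nor guess (by the randomness of event generation together with Non-disclosure) the event $e$ or any ciphering $e'$ of it from which $e$ could be retrieved. Consequently, for the genuine $e$ the ``not sent before'' condition holds with probability arbitrarily close to one under either action, and any premature tuple that $i$ inserts in $\va_i'$ concerns a value unrelated to $e$, hence affects neither the reception probability of $\id$ (already fixed by D3) nor the retrieval of $e$. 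Assembling the three action-independent ingredients yields that $\Pr[(\id,e)\in\rec_i(z)]$ is identical after $\va_i^*$ and after $\va_i'$, and multiplying by $\beta/\nevt$ gives the claimed equality of expected benefits.
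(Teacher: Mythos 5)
Your proposal is correct and follows essentially the same route as the paper's proof: condition on the all-or-nothing, action-independent punishment event (M4), use D1 together with Non-disclosure to show the benefit is zero when $i$ is punished and that retrieval always succeeds when $i$ is not, and use D3 for the action-independence of the probability of receiving a tuple containing $\id$. The only difference is that you explicitly treat the ``not sent before'' clause via the unguessability of $e$, a point the paper's proof silently skips (your ``arbitrarily close to one'' hedge is, if anything, more careful than the paper's claimed exact equality).
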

\begin{proof}
Fix any $h \in \hist$ such that $\mu^{\vsigma^f}(h|h_i) > 0$. Fix two action profiles $\va^* = (\va_i^*,\va_{-i}^*)$ and $\va' = (\va_i',\va_{-i}^*)$, where $\va_{-i}^*$ corresponds to an action profile 
prescribed by $\vsigma^*$ immediately after $h$, for every player other than $i$. Let $h^* = (h,\va^*)$ and $h' = (h,\va')$.
By~M4, $i$ is either punished in every dissemination round after $h^*$ and $h'$ by every node or not punished by any node.
If $i$ is punished, then by~D1, for any $z \in \term^t$ succeeding $h^*$ or $h'$, it holds $(\id,e) \notin \rec_i(z)$.
To see this suppose the opposite. By construction, there exists $j$ that sends $(\id,g^h(e',i))$ to $i$ in round $r$ such that $i$ is able to retrieve
$e$ from $g^h(e',i)$. However, since $j$ is following $\sigma_j^f$ in round $r$, it must hold $(\id,e') \in \pend_j$ after round $r-1$.
By D1, $e' = g^h(e,j)$. By Non-disclosure, $i$ cannot retrieve $e$ from $g^h(e',i)$, reaching a contradiction.
Therefore, the expected benefits of receiving events introduced in the future is $0$ when $i$ is being punished after $h$.

Now, consider that $i$ is not being punished after $h$. By~D3, the probability of $i$ receiving some tuple $(\id,e')$ when $\id$
is introduced after $h$ is the same after $h^*$ and $h'$. By~D1, when some $j \neq i$ sends $(\id,e')$ to $i$, it holds $e' = g^h(g^h(e,j),i) = g^h(e,j)$,
since $(\id,g^h(e',i)) \in \pend_j$ before $j$ sends $(\id,e')$ to $i$. Consequently, $i$ is able to retrieve $e$ from $e'$, obtaining a benefit $\beta$
with the same probability after $h^*$ and $h'$. This implies that expected benefits are the same in this case.
\end{proof}

Lemma~\ref{lemma:present-future:costs} quantifies the difference of the expected costs of following two alternative actions $\va_i^*$ and $\va_i'$, as a function
of the states resulting from following these actions. Namely, we first consider the case where $\va_i^*$ and $\va_i'$ result in $\stat_i^*(i) = \stat_i'(i) = \bad$ or 
$\stat_i^*(i) = \stat_i'(i) = \good$ and $\miss_i^* = \miss_i'$. This covers the scenarios where $i$ sends only valid messages prescribed by the strategy, except the exact content in each message may vary. In addition, when $\stat_i(i) = \bad$,
this also covers scenarios where $i$ sends invalid messages. Then, we analyse the case where, in $\va_i^*$, $i$ forwards every event as specified by the strategy
while, in $\va_i'$, $i$ fails to forward some event ($\stat_i^*(i) = \stat_i'(i) = \good$ and $\miss_i^* \subset \miss_i'$).
This is representative of any scenario where $\va_i^*$ is an action prescribed by the strategy, and $\va_i'$ does not contain invalid messages but 
also does not contain some tuple that $i$ was supposed to forward. Finally, we analyse the case when $i$ only sends valid messages in $\va_i^*$
and sends some invalid message in $\va_i'$, when $\stat_i(i)$ is initially $\good$.

\begin{lemma}
\label{lemma:present-future:costs}
Fix any stage $t \in \mathbb{N}$, player $i \in \pset$, history $h_i \in \hist_i$ from stage $t$, and any two actions $\va_i^*,\va_i' \in \act_i$. 
Regarding the difference between the expected costs of forwarding events introduced after $h_i$ is observed when $i$ follows $\va_i^*$ and when $i$ follows $\va_i'$, we have:
\begin{enumerate}
   \item If $\stat_i^*(i) = \stat_i'(i) = \bad$ or $\stat_i^*(i) = \stat_i'(i) = \good$ and $\miss_i^* = \miss_i'$, then the difference is $0$.
   \item If $\stat_i^*(i) = \stat_i'(i) = \good$ and $\miss_i^* \subset \miss_i'$, then for every $\xi > 0$ there exists a function $\prng_i^f$ such that 
  the difference is at least $- (q(f) + \xi)c \npseq \gamma f/\nevt$ for some constant $c>0$. 
  \item If $\stat_i^*(i) = \good$ and $\stat_i'(i) = \bad$, then for every $\xi > 0$ there exists a function $\prng_i^f$
  such that the difference is at least $-(q(f) + \xi) \gamma f$.
\end{enumerate}  
\end{lemma}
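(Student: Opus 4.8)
The plan is to express the expected cost of forwarding the \emph{future-introduced} identifiers as a single sum and then read off each case from which summands survive. Fix $i$, $h_i$, $\va_i^*$, $\va_i'$, and let $I$ be the set of identifiers introduced in a dissemination round of stage $t$ strictly after $h_i$. While $i$ follows $\sigma_i^f$ after the present action, $i$ forwards a received identifier $\id$ to exactly the $f$ nodes of $\prng_i^f(\seed_i,\id)$, at the fixed cost $\gamma f$ per first reception, precisely when $\stat_i(i)=\good$ and $\seq(\id)\notin\miss_i$; since ciphering preserves message sizes, the punishments in place are irrelevant to cost. By Proposition~\ref{prop:facts}, the present action freezes $\stat_i(i)$ and $\miss_i$ for the rest of the stage, so this forwarding indicator is a deterministic function of the action, whereas the probability $p_\id$ that $i$ receives $\id$ is action-independent by~D3 and bounded by $q(f)+\xi$ by~D7. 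Thus the normalised forwarding cost of an action with resulting state $(\stat_i(i),\miss_i)$ is $\frac{\gamma f}{\nevt}\sum_{\id\in I} p_\id\,\mathbb{1}[\stat_i(i)=\good]\,\mathbb{1}[\seq(\id)\notin\miss_i]$, and I compare this quantity across the two actions.

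In Case~1 the two indicators agree on every $\id\in I$: when $\stat_i^*(i)=\stat_i'(i)=\bad$ player $i$ forwards nothing under either action, and when $\stat_i^*(i)=\stat_i'(i)=\good$ with $\miss_i^*=\miss_i'$ the relevant state is identical. Since $p_\id$ does not depend on the action (D3), the two sums coincide term by term and the difference is $0$.

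For Cases~2 and~3 only the identifiers on which the indicators disagree contribute. In Case~2 both states are $\good$, so $\id\in I$ contributes exactly when $\seq(\id)$ lies in $\miss_i'\setminus\miss_i^*$. The crucial counting step is that a single round can render inconsistent only the identifiers that are live in it, of which there are at most $\dens$ by the age window $\age(\id,r)\in\{1\ldots\dens\}$; since $\miss_i$ only grows and $\miss_i^*\supseteq\miss_i^{h_i}$, this gives $|\miss_i'\setminus\miss_i^*|\le c$ for a constant $c$ independent of $\nevt$. Each such sequence contains at most $\npseq$ identifiers of $I$, and each contributes at most $(q(f)+\xi)\gamma f$; summing and dividing by $\nevt$ shows that following $\va_i^*$ costs at most $(q(f)+\xi)\,c\,\npseq\,\gamma f/\nevt$ more than following $\va_i'$, i.e. the difference is bounded below by the negative of this quantity. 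In Case~3 the deviation sets $\stat_i'(i)=\bad$, so $i$ forwards nothing under $\va_i'$ and the entire gap equals $i$'s total forwarding cost under $\va_i^*$; bounding this by the sum over the at most $\nevt$ identifiers of $I$, each costing $\gamma f$ with reception probability at most $q(f)+\xi$, the factor $\nevt$ cancels the normalisation and leaves $(q(f)+\xi)\gamma f$.

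The technical heart is not the arithmetic but the two structural reductions that make it routine: invoking Proposition~\ref{prop:facts} to freeze $(\stat_i(i),\miss_i)$ so forwarding becomes deterministic in the action, and invoking D3/D7 to detach the reception probabilities from the action while capping them at $q(f)+\xi$. The step I expect to demand the most care is the counting in Case~2 — justifying that one action enlarges $\miss_i$ by at most a constant number of sequences, which hinges on only $\dens$ identifiers being live per round — together with keeping the normalisation bookkeeping exact so the bounds emerge with the correct powers of $\npseq$ and $\nevt$.
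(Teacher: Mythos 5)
Your proof is correct and follows essentially the same route as the paper's: freeze $(\stat_i(i),\miss_i)$ via Proposition~\ref{prop:facts}, use D3 to make reception probabilities action-independent and D7 to cap them at $q(f)+\xi$, then count the disagreeing identifiers case by case ($0$, at most $c\,\npseq$, and at most $\nevt$, respectively). Your explicit justification that $|\miss_i'\setminus\miss_i^*|\le c$ via the $\dens$-identifier liveness window is a detail the paper merely asserts, and it is a welcome addition.
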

\begin{proof}
Fix any identifier $\id \in \ids$ introduced after $h_i$ is observed.

\vspace{0.05in}
\textbf{Case 1.} 
By Proposition~\ref{prop:facts} and the definition of $\vsigma^f$, $i$ forwards $\id$ exactly once after its first reception, and only if $\stat_i(i) = \good$. Thus, if $\stat_i^*(i) = \stat_i'(i) = \bad$, 
then the expected costs are $0$, whether $i$ follows $\va_i^*$ or $\va_i'$.
If $\stat_i^*(i) = \stat_i'(i) = \good$, then by~D3 $i$ receives a tuple containing $\id$ after $\va_i^*$ with the same probability as after $\va_i'$.
By Proposition~\ref{prop:facts} and the fact that $\miss_i^* = \miss_i'$, $i$ forwards $\id$ to exactly $f$ nodes
after $\va_i^*$ iff $i$ does so after $\va_i'$. Therefore, the expected costs are the same.

\vspace{0.05in}
\textbf{Case 2.} First, suppose that $\seq(\id) \in \miss_i' \setminus \miss_i^*$.
After following $\va_i'$, $i$ does not forward $\id$. After $\va_i^*$, by~D7, $i$ forwards $\id$ with probability at most $q(f) + \xi$
for an arbitrarily small $\xi$. Since there are at most $\npseq$ events per sequence and $|\miss_i'| - |\miss_i^*| \leq c$ for some constant $c$, this yields a difference of at least
$-(q(f) + \xi)c \npseq \gamma f/\nevt$. Now, suppose $\seq(\id) \in \miss_i^* \cap \miss_i'$. By~D3, $i$ receives $\id$ after $\va_i^*$ and $\va_i'$ with the same probability, in which
case $i$ forwards $\id$ to exactly $f$ nodes. Thus, the expected costs of forwarding $\id$ are the same.

\vspace{0.05in}
\textbf{Case 3.} $i$ does not forward $\id$ after $\va_i'$. By~D7, $i$ forwards $\id$ after $\va_i^*$ with probability at most $q(f) + \xi$.
Since there are at most $\nevt$ events introduced after $h_i$, this difference is at least $-(q(f) + \xi) \nevt \gamma f/\nevt$.
\end{proof}

Lemma~\ref{lemma:present-dissem:diff} analyses the difference between the expected utility of player $i$ following an action $\va_i^*$ prescribed
by the strategy for dissemination rounds, after the observation of $h_i$, and that of following an arbitrary action $\va_i'$. We consider three cases,
which cover all the scenarios relevant to the proof that $\vsigma^f$ is an equilibrium. By Proposition~\ref{prop:facts}, it holds $\stat_i^*(i) = \stat_i^{h_i}$ and $\miss_i^* = \miss_i^{h_i}$.
Case 1 covers the scenarios where $\stat_i^*(i) = \stat_i^{h_i} = \bad$, where we always have $\stat_i'(i) = \bad$, and covers the scenarios
where $\stat_i^*(i) = \good$ and $i$ sends only valid messages in $\va_i'$ and forwards every event as requested, implying that $\miss_i' = \miss_i^*$.
Case 2 considers the alternative scenario where $\stat_i^*(i) = \good$, $i$ only forwards valid messages in $\va_i'$ such that $\stat_i'(i) = \good$,
but $i$ fails to send some identifier $\id$ such that $\seq(\id) \in \miss_i' \setminus \miss_i^*$. Finally, case 3 analyses the scenario where $i$ sends
some invalid message in $\va_i'$ while $\stat_i^*(i) = \good$.

\begin{lemma}
\label{lemma:present-dissem:diff}
Fix any player $i \in \pset$, private history $h_i \in \hist$ from stage $t \in \mathbb{N}$ preceding a dissemination round, and any two actions $\va_i^*,\va_i' \in \act_i$
such that $\sigma_i^f(\va_i^*|h_i) > 0$. The following holds regarding the value $\Delta^t = \eu{i}{\vsigma^f,\mu,t}{h_i,\va_i^*} - \eu{i}{\vsigma^f,\mu,t}{h_i,\va_i'}$:
\begin{enumerate}
  \item If $\stat_i^*(i) = \stat_i'(i) = \bad$ or $\stat_i^*(i) = \stat_i'(i) = \good$ and $\miss_i^* = \miss_i'$, then $\Delta^t \geq 0$.
  \item If $\stat_i^*(i) = \stat_i'(i) = \good$ and $\miss_i^* \subset \miss_i'$, then for every $\xi > 0$ there exists a function $\prng_i^f$ and a constant $c> 0$ such that:
  $$\Delta^t \geq  - \dens(\beta + \gamma f)/\nevt - (q(f) + \xi)c \npseq \gamma f/\nevt.$$
  \item If $\stat_i^*(i) = \good$ and $\stat_i'(i) = \bad$, then for every $\xi > 0$ there exists a function $\prng_i^f$ such that:
  $$\Delta^t \geq - \dens(\beta + \gamma f)/\nevt - (q(f) + \xi) \gamma f.$$
\end{enumerate}
\end{lemma}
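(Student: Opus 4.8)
The plan is to evaluate $\Delta^t$ directly from the realised-utility formula for $u_i(z)$, separating benefits from costs and, within each, the tuples whose identifiers are introduced after $h_i$ (the \emph{future} identifiers) from those introduced before the current dissemination round (the \emph{already-introduced} identifiers). Writing $\Delta^t=\Delta B-\Delta C$ for the differences of the expected benefit and cost terms of stage $t$, I would split each of $\Delta B$ and $\Delta C$ along this line, leaving four pieces to bound. The key structural observation is that only the at most $\dens$ already-introduced identifiers that are still unexpired at or after the current round can be touched by the current action, since each identifier is active for exactly $\dens$ rounds.

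The two future pieces are quoted from the auxiliary lemmas. By Lemma~\ref{lemma:present-future:bens}, for every identifier introduced after $h_i$ the expected benefit is the same under $\va_i^*$ and $\va_i'$, so the future part of $\Delta B$ is $0$. By Lemma~\ref{lemma:present-future:costs}, with the very same case hypotheses on $\stat_i^*(i),\stat_i'(i),\miss_i^*,\miss_i'$, the future part of $\Delta C$ contributes to $\Delta^t$ at least $0$ in Case~1, at least $-(q(f)+\xi)c\npseq\gamma f/\nevt$ in Case~2, and at least $-(q(f)+\xi)\gamma f$ in Case~3; this step also fixes the function $\prng_i^f$ realising the slack $\xi$.

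For the already-introduced pieces I would use a coarse counting bound in Cases~2 and~3. At most $\dens$ already-introduced identifiers remain unexpired at or after the current round, and each contributes to stage $t$ a benefit of at most $\beta$ (counted once in $\rec_i(z)$) and a forwarding cost of at most $\gamma f$ (forwarded at most once, to $f$ nodes). Hence, under either action, the already-introduced benefit lies in $[0,\dens\beta/\nevt]$ and the already-introduced cost in $[0,\dens\gamma f/\nevt]$, so their joint contribution to $\Delta^t$ is at least $-\dens(\beta+\gamma f)/\nevt$. Summing with the future bounds gives exactly the inequalities stated for Cases~2 and~3.

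Case~1 must be sharpened to $\Delta^t\ge 0$, and this is where the work lies. When $\stat_i^*(i)=\stat_i'(i)=\bad$, the prescribed action sends nothing now and, since $i$ halts once $\bad$, in every remaining round of stage $t$, so its cost is $0\le C'$; its benefit is at least that of $\va_i'$ because forwarding strictly fewer tuples cannot reduce what $i$ retrieves (D5). When $\stat_i^*(i)=\stat_i'(i)=\good$ and $\miss_i^*=\miss_i'$, I would argue that the inconsistency rules of Algorithm~\ref{alg:update} force $\va_i'$ to forward exactly the same identifiers to exactly the same nodes as $\va_i^*$: any omitted or misdirected forward would enlarge $\miss_i'$, and any duplicated or malformed tuple would set $\stat_i'(i)=\bad$, so the two actions differ at most in fixed-size event payloads; then D4 equates receptions, D5 together with the loop-back exclusion equates retrieved benefits, and equal payload sizes equate costs, giving $\Delta^t=0$. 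I expect the main obstacle to be precisely this already-introduced benefit analysis, namely reconciling the one-directional monotonicity of D5 with the loop-back exclusion and the re-ciphering behaviour encoded by D1 and Non-disclosure, and checking the hypotheses of D5 carefully in the $\bad$ sub-case (where $\miss_i^*$ and $\miss_i'$ need not coincide, but a halted player's future forwards are empty irrespective of $\miss_i$).
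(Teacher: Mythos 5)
Your decomposition (future vs.\ already-introduced identifiers, benefits vs.\ costs), your use of Lemmas~\ref{lemma:present-future:bens} and~\ref{lemma:present-future:costs} for the future pieces, and your coarse $\dens$-counting bound of $-\dens(\beta+\gamma f)/\nevt$ for the already-introduced pieces in Cases~2 and~3 are exactly the paper's argument, so those parts are fine. The one step that would fail as written is in Case~1, sub-case $\stat_i^*(i)=\stat_i'(i)=\good$ and $\miss_i^*=\miss_i'$: you claim that the inconsistency rules force $\va_i'$ to forward \emph{exactly} the same identifiers to \emph{exactly} the same nodes as $\va_i^*$, and conclude $\Delta^t=0$. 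This is false. $\miss_i$ is a set of \emph{sequences}, so if $\seq(\id)$ already belongs to $\miss_i^{h_i}$ from an earlier round, then $\sigma_i^f$ prescribes dropping $\id$ (so $\id\notin a_i^*(j)$ for all $j$), yet $\va_i'$ may send $\id$ to arbitrary nodes without enlarging $\miss_i'$ (the sequence is already present) and without invalidating the message. Likewise the payload of a tuple is unconstrained by the inconsistency predicate, and D5 only gives the one-directional implication $(\id,e)\in\rec_i(z')\Rightarrow(\id,e)\in\rec_i(z^*)$, so equality of benefits does not follow even when the recipient sets coincide. Hence neither "identical actions" nor $\Delta^t=0$ holds in general.

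The repair is the one-sided argument the paper uses, and it is within reach of the tools you already invoke: under the Case~1 hypothesis one can only conclude $\id\in a_i^*(j)$ \emph{only if} $\id\in a_i'(j)$ (an omission or misdirection in $\va_i'$ of an identifier that $\va_i^*$ sends would put $\seq(\id)$ into $\miss_i'\setminus\miss_i^*$). This subset relation gives immediate costs of $\va_i^*$ no larger than those of $\va_i'$, satisfies the hypothesis of D5 so that benefits after $\va_i^*$ are at least those after $\va_i'$, and for the downstream forwarding costs one splits on whether $\seq(\id)\in\miss_i^*$ (no forwarding after either action) or not (the recipient sets must then coincide, and D4 equates the probability of later reception and forwarding). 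This yields $\Delta^t\geq 0$, which is all the lemma claims. Your handling of the $\bad$--$\bad$ sub-case is essentially right, and your observation that D5's hypothesis $\miss_i^*=\miss_i'$ need not hold there is a legitimate point of care that the paper itself glosses over (it is harmless because a $\bad$ player forwards nothing regardless of $\miss_i$).
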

\begin{proof}
Fix $h \in \hist$ such that $\mu^{\vsigma}(h|h_i) > 0$, $\va^* = (\va_i^*,\va_{-i}^*)$, $\va' = (\va_i',\va_{-i}^*)$, $h^* = (h,\va^*)$, and $h' = (h,\va')$, such that $\va_{-i}^*$ is prescribed by $\vsigma^f$
for every player other than $i$. We analyse each case separately.

\vspace{0.05in}
\textbf{Case 1.} By Lemmas~\ref{lemma:present-future:bens} and~\ref{lemma:present-future:costs}, expected costs and benefits relative to events introduced after $h_i$ are the same.
Now, we analyse costs and benefits of events introduced before $h_i$.
Fix a tuple $(\id,e)$ introduced before $h$. If $\stat_i^*(i) = \bad$, then $i$ does not forward $\id$ in $\va_i^*$. Else, for any $j$, $\id \in a_i^*(j)$ iff $\seq(\id) \notin \miss_i^h$,
$j \in \prng_i^f(\seed_i,\id)$, and $i$ has received $\id$ for the first time in the previous round. In this case, 
it must hold $\id \in a_i^*(j)$ only if $\id \in a_i'(j)$. To see this, suppose that $\id \in a_i^*(j)$ and $\id \notin a_i'(j)$. 
This would imply $\seq(\id) \in \miss_i' \setminus \miss_i^*$, which would contradict case 1.
By~D5, $i$ obtains a benefit $\beta$ for receiving $(\id,e)$ after $h'$ only if $i$ obtains that benefit after $h^*$.
Thus, expected benefits relative to $(\id,e)$ after $i$ follows $\va_i^*$ are at least as high as those after $\va_i'$.

The expected costs of forwarding $\id$ depend on whether $i$ forwards $\id$ in $\va_i^*$ and $\va_i'$, or forwards $\id$ after these actions
are taken. We have seen that, for each $j \neq i$, $\id \in a_i^*(j)$ only if $\id \in a_i'(j)$.
Thus, costs of following $\va_i'$ are at least as high as those of following $\va_i^*$.
First, suppose that $\seq(\id) \in \miss_i^* = \miss_i'$ or $\stat_i^*(i) = \stat_i'(i) = \bad$.
By Proposition~\ref{prop:facts}, $i$ does not forward $\id$ after $h^*$ and $h'$. Therefore, expected costs of forwarding $\id$ after $h^*$ and $h'$ are both $0$.

Now, consider that $\seq(\id) \notin \miss_i^* = \miss_i'$ and $\stat_i'(i) = \stat_i^*(i) = \good$. By Proposition~\ref{prop:facts} and the definition of $\vsigma^f$, $i$ forwards $\id$ to $f$ nodes
after its first reception, both after $h^*$ and $h'$. It must hold for every $j \neq i$ that $\id \in a_i^*(j)$ iff $\id \in a_i'(j)$. To see this,
first suppose that $\id \in a_i^*(j)$ and $\id \notin a_i(j)'$ for some $j$. By definition of $\sigma_i^f$, $\seq(\id) \notin \miss_i^*$, which implies
that $i$ received $\id$ for the first time in the previous round and $j \in \prng_i^f(\seed_i,\id)$. But in this case we would have $\seq(\id) \in \miss_i' \setminus \miss_i^*$.
Now, suppose that $\id \notin a_i^*(j)$ and $\id \in a_i'(j)$. Again, $\seq(\id) \notin \miss_i^*$. Thus, either $j \notin \prng_i^f(\seed_i,\id)$ or $i$ did not receive $\id$
in the previous round. Either way, we would have $\seq(\id) \in \miss_i' \setminus \miss_i^*$. Both situations contradict the assumption that $\miss_i^* = \miss_i'$.
Therefore, we can apply D4 to conclude that $i$ receives $\id$ for the first time and forwards it after $h^*$ with the same probability as after $h'$.
Consequently, expected costs of forwarding $\id$ after $h^*$ and $h'$ are the same.

\vspace{0.05in}
\textbf{Cases 2 and 3.}  By Lemmas~\ref{lemma:present-future:bens} and~\ref{lemma:present-future:costs},
the difference between the expected utilities for receiving and disseminating events introduced after $h$ for
cases 2 and 3 is at least $-(q(f) + \xi)c \npseq \gamma f/\nevt$ and $-(q(f) + \xi) \gamma f$, respectively.
There are at most $\dens$ events introduced in $h$. In the worst scenario, when following $\va_i^*$,
$i$ does not retrieve any of those events while forwarding all of them, resulting in the minimum expected difference
of $-\dens (\beta + \gamma f)/\nevt$.
\end{proof}

In Lemma~\ref{lemma:present-dissem:diff}, we consider alternative actions $\va_i^*$ and $\va_i'$ followed in a monitoring round,
where $\va_i^*$ is prescribed by $\sigma_i^f$. Since we never update $\miss_i$ in a monitoring round, we only need to consider
the resulting values of $\stat_i^*(i)$ and $\stat_i'(i)$. By Proposition~\ref{prop:facts}, we always have $\stat_i^{h_i} = \stat_i^*(i)$,
where $h_i$ is the private history preceding $\va_i^*$. If $\stat_i^*(i) = \bad$, then $\stat_i'(i) = \bad$. Else, we have $\stat_i'(i) = \good$
iff $i$ sends all the required monitoring messages and only valid messages. These two scenarios are captured by case 1 of the lemma.
Case 2 encompasses the other scenario where $\stat_i^*(i) = \good$ and $i$ sends some invalid message or fails to send a requested message in $\va_i'$.

\begin{lemma}
\label{lemma:present-monitor:diff}
Fix any player $i \in \pset$, private history $h_i \in \hist$ from stage $t \in \mathbb{N}$ preceding a monitoring round, and any two actions $\va_i^*,\va_i' \in \act_i$.
The following holds regarding the value $\Delta^t = \eu{i}{\vsigma^f,\mu,t}{h_i,\va_i^*} - \eu{i}{\vsigma^f,\mu,t}{h_i,\va_i'}$:
\begin{enumerate}
  \item If $\stat_i^*(i) = \stat_i'(i)$, then $\Delta^t \geq 0$.
  \item If $\stat_i^*(i) = \good$ and $\stat_i'(i) = \bad$, then for every $\xi > 0$ there exists a function $\prng_i^f$ such that:
  $$\Delta^t \geq - n\alpha^* (1+ \nseq \pmon)/\nevt - (q(f) + \xi)\gamma f.$$
\end{enumerate}
\end{lemma}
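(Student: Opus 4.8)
The plan is to decompose $\Delta^t$ into a benefit difference and a cost difference, and to exploit two features special to monitoring rounds. First, since $h_i$ precedes a monitoring round, every event of stage $t$ is introduced \emph{after} $h_i$ (identifiers enter only in dissemination rounds); second, $\miss_i$ is never modified in a monitoring round, so $\miss_i^* = \miss_i' = \miss_i^{h_i}$, and by Proposition~\ref{prop:facts} we also have $\stat_i^*(i) = \stat_i^{h_i}$. Applying Lemma~\ref{lemma:present-future:bens} to every identifier introduced after $h_i$, the expected benefits of stage $t$ coincide after $\va_i^*$ and $\va_i'$, so the benefit difference vanishes in \emph{both} cases and $\Delta^t$ reduces to a comparison of expected costs (which enter the utility with a negative sign). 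The future behaviour of $i$ in the remaining rounds of the stage is governed by $\sigma_i^f$ and by state variables that the current monitoring action does not touch beyond $\stat_i(i)$.

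For Case~1 I would argue that neither the dissemination cost nor the later monitoring cost is affected by the deviation. Because $\stat_i^*(i)=\stat_i'(i)$ and $\miss_i^*=\miss_i'$, Lemma~\ref{lemma:present-future:costs}(1) gives equal expected dissemination costs, and the strategy prescribes the same messages in every subsequent round under both actions. It then remains only to compare the message sent in the current round. If both states are $\good$, both $\va_i^*$ and $\va_i'$ must send a valid monitoring message, which has a fixed size and hence identical cost; if both are $\bad$, the prescribed $\va_i^*$ sends nothing while $\va_i'$ can only send a message of non-negative cost. In either sub-case $\va_i^*$ costs no more, so $\Delta^t \geq 0$.

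For Case~2 ($\stat_i^*(i)=\good$, $\stat_i'(i)=\bad$) I would bound the two cost components separately. The dissemination contribution is exactly Lemma~\ref{lemma:present-future:costs}(3): following $\va_i^*$ keeps $i$ forwarding, whereas after $\va_i'$ player $i$ stops, so this contributes at least $-(q(f)+\xi)\gamma f$ to $\Delta^t$. For the monitoring cost, note that under $\va_i^*$ player $i$ continues to send the fixed-size monitoring messages of the current round and of every subsequent report round, while switching to $\bad$ suppresses all of them; hence the cost saved by $\va_i'$ is at most the total expected monitoring cost of the stage. That cost is dominated by the single accusation round (messages about at most $n$ nodes, each of cost $\alpha^*$) together with the $\nseq$ report rounds, each requested independently with probability $\pmon$, giving $n\alpha^*(1+\nseq\pmon)/\nevt$ after normalisation. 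Summing the two contributions yields $\Delta^t \geq - n\alpha^* (1+ \nseq \pmon)/\nevt - (q(f) + \xi)\gamma f$.

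The main obstacle I anticipate is the benefit-invariance step: a deviation in a monitoring round may send false accusations or withhold reports, changing \emph{who} is punished in stage $t$ and thus, a priori, how many events reach $i$. The resolution is that punishments preserve dissemination reliability (the Strategic Monitoring property, formalised through~M4), so $i$'s own reception is unchanged; this is precisely what Lemma~\ref{lemma:present-future:bens} packages, and leaning on it lets me avoid re-deriving reliability invariance by hand. The only remaining care is bookkeeping: applying the correct negative-cost sign convention of Lemma~\ref{lemma:present-future:costs}, and checking that the savings from stopping can never exceed the $1+\nseq\pmon$ expected number of monitoring messages.
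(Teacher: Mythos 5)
Your proposal is correct and follows essentially the same route as the paper: benefits cancel via Lemma~\ref{lemma:present-future:bens} (every identifier of stage $t$ is introduced after a monitoring-round history, and M4 keeps the punishment probability action-independent), dissemination costs are handled by the corresponding cases of Lemma~\ref{lemma:present-future:costs}, and the monitoring-cost saving in Case~2 is bounded by the current round's at most $n\alpha^*$ plus the expected $n\nseq\pmon$ report messages, giving $n\alpha^*(1+\nseq\pmon)/\nevt$. The only cosmetic difference is that you attribute the ``$1$'' term to the accusation round rather than to whichever monitoring round the deviation occurs in, which does not affect the bound.
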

\begin{proof}
Fix $h \in \hist$ such that $\mu^{\vsigma}(h|h_i) > 0$, $\va^* = (\va_i^*,\va_{-i}^*)$, $\va' = (\va_i',\va_{-i}^*)$, $h^* = (h,\va^*)$, and $h' = (h,\va')$,
where $\va_{-i}^*$ is prescribed by $\vsigma^f$ for every player other than $i$. We analyse each case separately.

\vspace{0.05in}
\textbf{Case 1.} First, consider that $\stat_i^*(i) = \stat_i'(i) = \bad$. By definition of $\sigma_i^f$ and Proposition~\ref{prop:facts}, $i$ does not send messages in $\va_i^*$ nor after $h^*$ and $h'$.
Thus, expected costs are at least as high when following $\va_i'$ as when following $\va_i'$. By Lemma~\ref{lemma:present-future:bens}, expected benefits are the same after $h^*$ and $h'$.
Now, suppose $\stat_i^*(i) = \stat_i'(i) = \good$. Since this is a monitoring round, it holds $\miss_i^* = \miss_i' = \emptyset$. By Lemmas~\ref{lemma:present-future:bens}
and~\ref{lemma:present-future:costs}, expected benefits and costs of receiving and forwarding events are the same after $h^*$ and $h'$. By Proposition~\ref{prop:facts},
$i$ forwards every requested monitoring message after $h^*$ and $h'$. Since mediator requests each report with fixed probability $p^*$, expected monitoring costs are the same.
Finally, $i$ sends at least as many messages in $\va_i'$ as in $\va_i^*$, implying that $\Delta^t \geq 0$.

\vspace{0.05in}
\textbf{Case 2.} Lemmas~\ref{lemma:present-future:bens} and~\ref{lemma:present-future:costs} imply that the contribution of the expected benefits
and costs of dissemination to $\Delta^t$ is at least $- (q(f) + \xi)\gamma f$. Regarding monitoring, $i$ sends in $\va_i^*$ all requested messages, which
cost at most $n \alpha^*$. By Proposition~\ref{prop:facts} and the definition of $\sigma_i^f$, $i$ does not send any monitoring message after $h'$, and
sends an expected number of $n \nseq \pmon$ messages after $h^*$: for each node $j$, $i$ sends reports to the mediator per sequence of identifiers with independent probability $\pmon$;
thus, $i$ sends reports relative to $j$ regarding an expected number of sequences $\nseq \pmon$. This adds at least the value $-n \alpha^*(1+ \nseq \pmon)/\nevt$ to $\Delta^t$.
\end{proof}

\subsection{Main Result}
\label{sec:res}
We now prove our main result. First, we establish sufficient conditions on the parameters $\npseq$, $\nseq$, and $\pmon$.
Then, we prove that $(\vsigma^f,\mu^f)$ is a SE. We conclude the section by proving our central claim.

\subsubsection{Conditions on Parameters}
The following conditions on $\npseq$, $\nseq$, and $\pmon$ are sufficient for our purposes:

\vspace{0.01in}
\textbf{C1.} $\npseq \log(\nevt) = o(\nevt)$.

\vspace{0.01in}
\textbf{C2.} $\pmon \log(\nevt) = o(1)$.

\vspace{0.01in}
\textbf{C3.} $\pmon \nevt = \Omega(\npseq)$.

\vspace{0.01in}
\textbf{C4.} For some constant $c> 0$, $\lim_{\nevt \to \infty}(1 - \pmon)^{\nseq} = c.$

C1 and C2 are required for keeping the expected monitoring costs sub-linear on $\nevt$. C3 and C4 guarantee
that the immediate gain of deviating does not out-weigh the future loss of being punished.
One definition that fulfils all conditions is $\npseq = \nseq = \sqrt{\nevt}$ and $\pmon = 1/\sqrt{\nevt}$.

\subsubsection{Equilibria Existence}
In Theorem~\ref{theorem:se}, we show that, for any fanout $f > 0$, $(\vsigma^f,\mu^f)$ is Sequentially Rational,
as long as $\delta$ and $\nevt$ are sufficiently large and the benefit $\beta$ per received event is sufficiently larger than
the cost $\gamma f$ of forwarding each event to $f$ nodes. Since $(\vsigma^f,\mu^f)$ is consistent by definition, it follows that $(\vsigma^f,\mu^f)$ is a SE.
In the proof, we use the fact that $\alpha^* = 4\alpha \npseq \log(\nevt)$, shown as follows. 
$\alpha^*$ comprises the cost of $i$ sending to the mediator $2 \npseq$ reports from $\se_i(j)$ and $\re_i(j)$.
Each report has a round and event identifier, corresponding to $2\log(\nevt)$ bits.

\begin{theorem}
\label{theorem:se}
For every $f>0$, there exist constants $c > 0$, $\bar{\delta} \in (0,1)$, and $\bar{\nevt} \in \mathbb{N}$, such that,
for every $\delta \in (\bar{\delta},1)$ and $\nevt > \bar{\nevt}$, if $\beta > c \gamma f$, then $(\vsigma^f,\mu^f)$ is Sequentially Rational.
\end{theorem}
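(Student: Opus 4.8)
The plan is to invoke the One-deviation Property, so that it suffices to fix any player $i \in \pset$, private history $h_i \in \hist_i$ from a stage $t$, and actions $\va_i^*,\va_i' \in \act_i$ with $\sigma_i^f(\va_i^*|h_i) > 0$, and to prove $\eu{i}{\vsigma^f,\mu^f}{h_i,\va_i^*} \geq \eu{i}{\vsigma^f,\mu^f}{h_i,\va_i'}$. Expanding each side as the discounted stage-by-stage sum and cancelling the common contributions of all stages $t' > t+1$ by Lemma~\ref{lemma:future:diff}, this difference collapses to
$$\eu{i}{\vsigma^f,\mu^f}{h_i,\va_i^*} - \eu{i}{\vsigma^f,\mu^f}{h_i,\va_i'} = \Delta^t + \delta\,\Delta^{t+1},$$
where $\Delta^t$ is the short-term difference bounded in Lemmas~\ref{lemma:present-dissem:diff} and~\ref{lemma:present-monitor:diff}, and $\Delta^{t+1}$ is the medium-term difference bounded in Lemma~\ref{lemma:next:diff}. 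The entire argument then reduces to verifying that this sum is nonnegative, which I would do by pairing the applicable case of the short-term lemma with the applicable case of the medium-term lemma.

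Since $\va_i^*$ is prescribed, Proposition~\ref{prop:facts} fixes $\stat_i^*(i) = \stat_i^{h_i}$ and $\miss_i^* = \miss_i^{h_i}$, so the cases are driven by the state $(\stat_i'(i),\miss_i')$ reached under the deviation, and I would first split on whether the current round is a monitoring or a dissemination round. The easy cases are those in which the deviation triggers no additional punishment: when $\stat_i^*(i) = \stat_i'(i)$ and $\miss_i^* = \miss_i'$, the short-term lemma gives $\Delta^t \geq 0$ and the medium-term lemma gives $\Delta^{t+1} \geq 0$ (using that in a monitoring round $\miss_i^* = \miss_i' = \emptyset$, so the two punishment probabilities coincide and $\Delta^{t+1} \geq 0$), whence the sum is nonnegative with no condition on $\beta$. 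The binding cases are those where the deviation either adds missing sequences ($\stat_i^*(i)=\stat_i'(i)=\good$, $\miss_i^* \subset \miss_i'$) or sends an invalid message ($\stat_i^*(i)=\good$, $\stat_i'(i)=\bad$); here $\Delta^t$ is negative, reflecting a short-term saving, while $\Delta^{t+1}$ is strictly positive, reflecting the punishment that is thereby avoided, and everything hinges on the latter outweighing the former once discounted.

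The hard part is the quantitative comparison in these deviating cases, and this is exactly where conditions C1--C4 enter. For the invalid-message cases, Lemma~\ref{lemma:next:diff} gives $\Delta^{t+1} \geq (q(f)-\xi)\beta(1-\pr(|\miss_i^*|))$; since $|\miss_i^*| \leq \nseq$ and $1-\pr(k)=(1-\pmon)^{k}$, condition C4 yields $1-\pr(|\miss_i^*|) \geq (1-\pmon)^{\nseq} \to c' > 0$, while the competing short-term loss is $O(\gamma f) + O(1/\nevt)$ plus, in a monitoring round, accusation and report costs $n\alpha^*(1+\nseq\pmon)/\nevt$ that vanish by C1, C2 and the boundedness of $\nseq\pmon$ from C4. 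For the missing-sequence case, writing $m = |\miss_i'| - |\miss_i^*| \geq 1$ (bounded by the constant $\dens$, as at most $\dens$ identifiers are active per round), the gain is
$$\Delta^{t+1} \geq (q(f)-\xi)\beta\big((1-\pmon)^{|\miss_i^*|} - (1-\pmon)^{|\miss_i'|}\big) \geq (q(f)-\xi)\beta\,(1-\pmon)^{\nseq+\dens}\,m\,\pmon,$$
which by C4 is $\Theta(m\,\pmon)$, whereas the short-term loss from Lemma~\ref{lemma:present-dissem:diff} is $O(m\,\npseq\gamma f/\nevt) + O(1/\nevt)$; condition C3 makes $\npseq/\nevt = O(\pmon)$, so both sides are $\Theta(m\,\pmon)$, the factor $m$ cancels, and the sign is decided purely by a ratio between $\beta$ and $\gamma f$. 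In each binding case I would fix $\xi$ small, let $\delta \to 1$ and $\nevt \to \infty$ to discard the $o(1)$ terms, and read off a threshold of the form $\beta > c\,\gamma f$ with $c$ depending on $q(f)$, $\dens$, $c'$, and the C3 constant; taking $c$ to be the largest such threshold over all cases and then choosing $\bar\delta$ close enough to $1$ and $\bar\nevt$ large enough to absorb the slack makes $\Delta^t + \delta\,\Delta^{t+1} \geq 0$ uniformly, which by the One-deviation Property establishes Sequential Rationality. I expect the missing-sequence case to be the main obstacle: there the incentive margin is only $\Theta(\pmon)$, so it survives solely because C3 and C4 impose the precise balance between sub-linear monitoring overhead and punishment probability that keeps both sides of the comparison of the same order.
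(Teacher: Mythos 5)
Your proposal is correct and follows essentially the same route as the paper's proof: the One-deviation Property, the collapse to $\Delta^t + \delta\Delta^{t+1}$ via Lemma~\ref{lemma:future:diff}, the same case split on monitoring versus dissemination rounds and on the resulting $(\stat_i'(i),\miss_i')$, and the same use of C1--C4 to show that the $\Theta(\pmon)$ punishment margin in the missing-sequence case dominates the $O(\npseq/\nevt)$ short-term saving. The only cosmetic difference is that you track the increment $m=|\miss_i'|-|\miss_i^*|$ explicitly and cancel it, whereas the paper lower-bounds $\pr(x')-\pr(x^*)$ using only $x'-x^*\geq 1$ and absorbs the bound on $m$ into the constant $c$ of Lemma~\ref{lemma:present-future:costs}.
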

\begin{proof}
We use the One-deviation Property to show the result. More precisely, we fix any player $i \in \pset$ and $h_i \in \hist_i$ from any stage $t$,
and we calculate, for any actions $\va_i^*,\va_i' \in \act_i$ such that $\sigma_i^f(\va_i^*|h_i) > 0$, the value:
$$\Delta = \eu{i}{\vsigma^f,\mu^f}{h_i,\va_i^*} -  \eu{i}{\vsigma^f,\mu^f}{h_i,\va_i'}.$$
The goal is to prove that $\Delta \geq 0$ for $\nevt$ and $\delta$ sufficiently large. We have $\Delta = \sum_{t' = t}^{\infty} \delta^{t'-t} \Delta^{t'}$, where:
$$\Delta^{t'} = \eu{i}{\vsigma^f,\mu^f,t'}{h_i,\va_i^*} - \eu{i}{\vsigma^f,\mu^f,t'}{h_i,\va_i'}.$$
By Lemma~\ref{lemma:future:diff}, it holds $\Delta^{t'} = 0$ for every $t' > t+1$, thus, we have $\Delta = \Delta^t + \delta \Delta^{t+1}$.
We need to compute $\Delta$ for the cases where $h_i$ precedes monitoring and dissemination rounds.

\vspace{0.1in}
\textbf{Monitoring Round.} When $\stat_i'(i) = \stat_i^*(i)$, $\Delta \geq 0$ by 1 of Lemmas~\ref{lemma:next:diff}
and~\ref{lemma:present-monitor:diff}. Else, by 2 of the same lemmas, we have for an arbitrarily small $\xi > 0$:
$$\Delta \geq - n\alpha^* (1+ \nseq \pmon)/\nevt  - (q(f) + \xi) \gamma f + \delta (q(f) - \xi) \beta.$$
Since $\nseq \npseq = \nevt$, by~C1 and~C2:
$$
\begin{array}{ll}
n\alpha^* (1+ \nseq \pmon)/\nevt  & = 4n \alpha \npseq \log(\nevt) (1 + \nseq \pmon)/\nevt\\
 & = O(\pmon\log(\nevt) + \npseq \log(\nevt)/\nevt) = o(1).
\end{array} 
$$
Therefore, for $\nevt$ sufficiently large, there exists $\epsilon$ arbitrarily small such that:
$$\Delta \geq  q(f)(\delta \beta -  \gamma f) - \epsilon.$$
If $\beta > \gamma f$ and $\delta$ is sufficiently large, then $\Delta \geq 0$.

\vspace{0.1in}
\textbf{Dissemination Round.} We need to consider the three cases where: 1) $\stat_i^*(i) = \stat_i'(i) = \good$ and $\miss_i^* = \miss_i'$ or $\stat_i^*(i) = \stat_i'(i) = \bad$;
2) $\stat_i'(i) = \stat_i^*(i) = \good$ and $\miss_i^* \subset \miss_i'$; and 3) $\stat_i^*(i) = \good$ and $\stat_i'(i) = \bad$.

\vspace{0.05in}
\textbf{Case 1.} By 1) of Lemmas~\ref{lemma:next:diff} and~\ref{lemma:present-dissem:diff}, $\Delta \geq 0$.

\vspace{0.05in}
\textbf{Case 2.} By 2) of Lemmas~\ref{lemma:next:diff} and~\ref{lemma:present-dissem:diff}, we have for some constant $c$ and an arbitrarily small $\xi > 0$:
$$
\begin{array}{ll}
\Delta \geq &- o(\nevt)/\nevt - (q(f) + \xi)c (\npseq/\nevt) \gamma f + \\
   & +\delta (q(f) - \xi) \beta (\pr(x') - \pr(x^*)),
\end{array}
$$
where $x^* = |\miss_i^*|$ and $x' = |\miss_i'|$. Since $x' - x^* \geq 1$:
$$
\begin{array}{lll}
& \pr(x') - \pr(x^*) &\geq\\
\geq&(1-(1-\pmon)^{x^*+1}) - (1-(1-\pmon)^{x^*}) &= \\
=&\pmon(1 - \pmon)^{x^*} \geq \pmon(1-\pmon)^{\nseq}.
\end{array}
$$ 
If $\nevt$ is sufficiently large, then for an arbitrarily small $\epsilon$:
$$\Delta \geq q(f)(\delta \beta \pmon(1-\pmon)^{\nseq} - c(\npseq/\nevt)\gamma f) - \epsilon.$$
If we divide both sides by $k = \pmon(1-\pmon)^{\nseq}$, then by~C3 and C4, there exists $c' > 0$ such that for an arbitrarily small $\epsilon > 0$:
$$\Delta/k \geq q(f)(\delta \beta - c'\gamma f) - \epsilon.$$
If $\beta > c' \gamma f$, then for $\delta$ sufficiently close to $1$ and $\epsilon$ sufficiently small we have $\Delta \geq 0$.

\vspace{0.05in}
\textbf{Case 3.} By 3) of Lemmas~\ref{lemma:next:diff} and~\ref{lemma:present-dissem:diff}, for an arbitrarily small $\xi > 0$:
$$
\Delta \geq - o(\nevt)/\nevt - (q(f) + \xi)\gamma f + \delta (q(f) - \xi) \beta (1 - \pr(x^*)),
$$
where $x^* = |\miss_i^*|$. Since $1 - \pr(x^*) \geq (1-\pmon)^{\nseq}$, as in case 2), there exists a constant $c>0$ such that
for an arbitrarily small $\epsilon$:
$$\Delta \geq q(f)(\delta \beta - c\gamma f) - \epsilon.$$
If $\beta > c\gamma f$, for $\delta$ sufficiently close to $1$, it holds $\Delta \geq 0$.
\end{proof}

\subsubsection{Central Claim}
We conclude with the proof of Theorem~\ref{theorem:folk}. The result follows trivially for $f=q(f)= 0$. Thus, fix any $f  > 0$. 
We have already defined a monitoring mechanism and a pair $(\vsigma^f,\mu^f)$ that is a SE of the induced game, as long as $\delta$ and $\nevt$ are sufficiently large,
and $\beta = \Omega(\gamma f)$. By M1, M3, and~D6, the expected utility of forwarding and receiving events  per stage is $\tilde{u}(f)$
fulfilling $|\tilde{u}(f) - \bar{u}(f)| < \epsilon$ for an arbitrarily small $\epsilon$. This yields an average utility of:
$$
(1-\delta)\sum_{t = 0}^{\infty} \delta^t (-A/\nevt + \tilde{u}(f)) =- A/\nevt + \tilde{u}(f),
$$
where $A \leq n\alpha^*(1+\pmon \nseq)$.
By~C1 and~C2, we have $A = o(\nevt)$. For $\nevt$ sufficiently large, $A/\nevt$ is arbitrarily small,
and the average utility is arbitrarily close to $\bar{u}(f)$.

\section{Discussion}
\label{sec:discuss}
The main step towards improving our result lies in distributing the role of the mediator, possibly having a different
mediator per node. This goal poses two main challenges: 1) how to ensure that every node agrees on whether to punish any $i$
and on the key $\key_i$; and 2) how to distribute the seeds. The first challenge can be addressed by having nodes reaching an agreement 
regarding the set of punishments and keys for each stage, in a similar fashion to~\cite{Dolev:10}. The second challenge can be addressed using techniques similar
to those from~\cite{Abraham:13}, by having nodes committing to random numbers without revealing each seed $\seed_i$ to any node other than $i$.

\section*{Acknowledgements}
This work was partially supported by Funda\c{c}\~{a}o para a Ci\^{e}ncia e Tecnologia (FCT) 
via the INESC-ID multi-annual funding through the PIDDAC Program fund grant, under 
project PEst-OE/ EEI/ LA0021/ 2013, and via the project PEPITA (PTDC/EEI-SCR/2776/2012).

\bibliographystyle{abbrv}

\bibliography{biblong}

\appendix
\section{Correctness Proofs}
\label{app:corr}
\vspace{0.1in}
{\large \textbf{Proof of Lemma~\ref{lemma:dissem-corr1}}}

\textbf{D1.} We use induction to show that the property holds for every history $h'=h$ or succeeding $h$, and preceding $h^*$. For the base case, we have $h' = h$. Since $\id$ has not been disseminated
prior to $h$, no node $j$ has $\id$ in $\pend_j$. For the induction step, suppose that D1) holds for some history $h'$, and fix any $j$. If $(\id,e') \in \pend_j^{h'}$,
then the result follows by the hypothesis. If $j$ is the source and disseminates $\id$ when $h'$ is observed, then only $j$ adds $(\id,e)$ to $\pend_j$ after $h'$.
Else, consider that $j$ receives $(\id,e')$ from $l \in \pset$. By the hypothesis, we have $e' = g^h(g^h(e,l),j)$, and $j$ adds $(\id,e'')$ to $\pend_j$ 
with $e'' = g^h(e',l)$. By Commutativity, $e'' = g^h(e,j)$.

\vspace{0.05in}
\textbf{D2.}
Let $\va^* = (\va_i^*,\va_{-i})$ and $\va' = (\va_i',\va_{-i})$ be the action profiles followed after $h$ when players different from $i$ abide to $\vsigma^f$.
Fix any two histories $h^*$ and $h'$ from stage $t'$, immediately preceding the last monitoring round of stage $t'$, and 
succeeding $(h,\va^*)$ and $(h,\va')$, respectively.
By the fact that we initialise $\stat_j(j) = \good$ and $\miss_j = \emptyset$ in stage $t'$, and by Proposition~\ref{prop:facts},
it holds $\stat_j^{h^*}(j) = \stat_j^{h'}(j) = \good$ and $\miss_j^{h^*} = \miss_j^{h'} = \emptyset$ for every $j \in \pset$.
Thus, $j$ abides to $\vsigma^f$ throughout stage $t'$ by forwarding a tuple containing $\id$ once after its first reception, to the $f$
nodes from $\prng_j^f(\seed_j,\id)$. Thus, if the selection of seeds after $h^*$ is the same as after $h'$, then $i$ receives a tuple containing $\id$ after $h^*$ iff $i$ receives $\id$ after $h'$.
Since the probability distribution of the selection of seeds is the same after $h^*$ and $h'$, the result follows.

\vspace{0.05in}
\textbf{D3.} Let $\va^* = (\va_i^*,\va_{-i})$ and $\va' = (\va_i',\va_{-i})$ be the action profiles followed after $h$ when players different from $i$ abide to $\vsigma^f$.
First, suppose that $h$ precedes a dissemination round. Since the mediator is trusted, regardless of the action followed by $i$, the value of $\seed_j$ is fixed for every $j \in \pset$.
If $j \neq i$, then by Proposition~\ref{prop:facts} whether $i$ follows $\va_i^*$ or $\va_i'$ the values $\stat_j(j)$ and $\miss_j$ in any dissemination round are equal to $\stat_j^h(j)$ and $\miss_j^h$, respectively.
Therefore, the behaviour of $j$ after $i$ follows $\va_i^*$ is the same as after $i$ follows $\va_i'$: $j$ either forwards $\id$ after its reception according to $\prng_j^f(\seed_j,\id)$,
or $j$ does not forward $\id$ to any node. Thus, $i$ receives a tuple containing $\id$ after following $\va_i^*$ iff $i$ receives $\id$ after following $\va_i'$. 
Now, suppose that $h$ precedes a monitoring round and let $h^*$ and $h'$ be any histories immediately preceding the last monitoring round
and succeeding $(h,\va^*)$ and $(h,\va')$, respectively. As in Property~D2, for the same selections of seeds after $h^*$ and $h'$, 
$i$ receives $\id$ after $h^*$ iff $i$ receives $\id$ after $h'$. Since the probability distribution over
the seeds is the same, $i$ receives $\id$ with the same probability.

\vspace{0.05in}
\textbf{D4.}  Let $\va^* = (\va_i^*,\va_{-i})$ and $\va' = (\va_i',\va_{-i})$ be the action profiles followed after $h$ when players different from $i$ abide to $\vsigma^f$.
$i$ receives $\id$ in $(h,\va^*)$ iff the same holds in $(h',\va')$. Now, we show using induction that, for every $j \neq i$ and histories $h^*$ and $h'$ 
succeeding or equal to $(h,\va^*)$ and $(h,\va')$ respectively: 1) $j$ receives $\id$ in $h^*$ iff $j$ receives $\id$ in $h'$; and 2) $j$ forwards $\id$ immediately after $h^*$ iff $j$ forwards $\id$ immediately after $h'$.
This proves that $i$ receives $\id$ after $\va^*$ iff $i$ receives $\id$ after $\va'$.

\vspace{0.01in}
\textbf{Base case.} 1) holds if $j$ receives $\id$ in $h$. Otherwise, $j$ receives $\id$ in $\va^*$ iff $j$ receives $\id$ in $\va'$, proving 1). By construction,
$\stat_j^{h^*}(j) = \stat_j^{h'}$ and $\miss_j^{h^*} = \miss_j^{h'}$. This means that $j$ has $\stat_j^{h^*}(j) = \bad$
or $\seq(\id) \in \miss_i^{h^*}$ iff $\stat_j^{h'}(j) = \bad$ or $\seq(\id) \in \miss_i^{h'}$. This implies that $j$ forwards $\id$ immediately after $h^*$
iff the same holds after $h'$, proving 2). 

\vspace{0.01in}
\textbf{Induction step.} Consider that the hypothesis holds for any $h^*$ and $h'$. It holds that $j$ receives $\id$ in $h^*$
iff $j$ receives $\id$ in $h'$, implying 1. In this case, $j$ does not forward $\id$ one round after $h^*$ and $h'$, which implies 2). 
Now, by 2) of the hypothesis, $j$ first receives $\id$ from some $l$ after $h^*$ iff $j$ first receives $\id$ from $l$ after $h'$, proving 1). By Proposition~\ref{prop:facts}, 
$\stat_j^{h^*}(j) = \stat_j^{h'}(j)$ and $\miss_j^{h^*} = \miss_j^{h'}$, proving 2).

\vspace{0.05in}
\textbf{D5.} Fix any $z^*$ and $z'$. Let $\va^* = (\va_i^*,\va_{-i})$ and $\va' = (\va_i',\va_{-i})$ be the action profiles immediately following $h$ in $z^*$ and $z'$, respectively.
$i$ receives $(\id,e')$ in $(h,\va^*)$ iff $i$ receives $(\id,e')$ in $(h,\va')$, whether $i$ can retrieve $e$ from $e'$. Thus, if $(\id,e) \in \rec_i(z')$
because $i$ received $(\id,e')$ in $(h,\va')$, then it also holds $(\id,e) \in \rec_i(z^*)$. 

Now, suppose that for all $(\id,e')$ received in $(h,\va')$ and $(h,\va^*)$ it holds that $i$ cannot retrieve $e$ from $e'$.
Suppose also that $(\id,e) \in \rec_i(z')$. Here, some $j \neq i$ sends $(\id,e'')$ to $i$ after $(h,\va')$ such that $i$ can retrieve $e$ from $e''$ or $g^h(e'',j)$. This occurs after some history $\bar{h}$,
such that $(\id,e') \in \pend_j^{\bar{h}}$ with $e' = g^h(e'',i)$, and $i$ is able to retrieve $e$ from $g^h(g^h(e',j),i)$. In this case,
we say that $i$ can retrieve $e$ from $e'$ in $j$. 

We prove two sub-properties. First, we show using induction that, 
for any histories $h'$ and $h^*$ succeeding or equal to $(h,\va')'$ and $(h,\va^*)$ respectively, any $j \neq i$ forwards $\id$ immediately after $h^*$ only if $j$ forwards $\id$ immediately after or in $h'$. 
Then, we use this fact to prove through induction that, for any histories $h'$ and $h^*$ succeeding $(h,\va')$ and $(h,\va^*)$ respectively, and any $j \neq i$, if $i$ can retrieve $e$ from $e'$ in $j$ then:
1) $(\id,e') \in \pend_j^{h'}$ only if $(\id,e') \in \pend_j^{h^*}$; and 2) $j$ forwards $(\id,e')$ one round after $h'$ only if $j$ forwards $(\id,e')$ one round after $h^*$. Given that
these properties hold, $i$ receives $(\id,e')$ from any $j$ after $(h,\va')$ only if $i$ receives $(\id,e')$ from $j$ after $(h,\va')$. Thus,
if $(\id,e) \in \rec-i(z')$, then $(\id,e) \in \rec_i(z^*)$.

\vspace{0.02in}
\textbf{Property 1.}
\emph{Base case.} It is true that if $j$ has received $\id$ in $h$, then $j$ never forwards $\id$ after $(h,\va^*)$ and $(h,\va')$.
In addition, $j$ receives $\id$ in $\va^*$ for the first time only if $j$ receives $\id$ in $\va'$ for the first time. That is, $j$ receives $\id$ from any $l\neq i$ in $\va^*$
iff the same holds in $\va'$, whereas $j$ never receives $\id$ in $\va^*$ from $i$ while not receiving it from $i$ in $\va'$. 
By the preservation of $\stat_j(j)$ and $\miss_j$, it holds that $j$ forwards $\id$ immediately after $\va^*$ only if $j$ forwards $\id$ after $\va'$. 

\emph{Induction step.} Suppose the hypothesis holds for two histories $h^*$ and $h'$. Let $h^1 = (h^*,\va^1) \in z^*$
and $(h',\va^2) \in z'$ be the histories immediately succeeding $h^*$ and $h'$, respectively.
Fix any $j \neq i$. If $j$ already received and forwarded $\id$ in $h^1$, then by the hypothesis
$j$ already forwarded $\id$ in $h^2$, and does not forward $\id$ after $h^2$.
Now, consider that $j$ receives $\id$ for the first time immediately after $h^*$, and forwards $\id$ immediately after $h^1$. By construction,
it must hold that $\stat_j^{h^1}(j) = \good$ and $\seq(\id) \notin \miss_j^{h^1}$.
By Proposition~\ref{prop:facts}, $\stat_j^{h^1}(j) = \stat_j^{h^2}(j)$ and $\miss_j^{h^1} = \miss_j^{h^2}$.
By the hypothesis, either $j$ forwarded $\id$ to $j$ in $h^2$ or $j$ receives $\id$ immediately after $h'$.
In the former case, the result follows. In the latter case, since we also have $\stat_j^{h^2}(j) = \good$ and $\seq(\id) \notin \miss_j^{h^2}$,
$j$ forwards $\id$ immediately after $h^2$. This proves the result.

\vspace{0.02in}
\textbf{Property 2.} \emph{Base case.} Suppose $j$ receives $(\id,e'')$ from $l$ with $e'' = g^h(e',l)$, such that $l$ is the node
with the smallest identifier among the nodes sending $\id$ to $i$ in $\va'$, and adds $(\id,e')$ to $\pend_j^{h'}$.
We cannot have $l = i$, or else $i$ sends $(\id,e'')$ to $j$ before receiving this tuple and such that $i$ can retrieve $e$ from $e''$,
implying that $(\id,e) \notin \rec_i(z')$. Therefore, $l$ also sends $(\id,e'')$ to $j$ in $\va^*$. The set of nodes sending $(\id,e'')$ to $j$ in $\va^*$ is a subset of those
that send $\id$ in $\va'$. Thus, $j$ also receives $(\id,e'')$ from $l$, $l$ also has the smallest identifier, and $j$ adds $(\id,e')$ to $\pend_j^{h^*}$,
proving the base case.

\emph{Induction step.} Suppose the hypothesis holds for $h'$ and $h^*$. Let $h^1 = (h^*,\va^1) \in z^*$
and $(h',\va^2) \in z'$ be the histories immediately succeeding $h^*$ and $h'$, respectively.
Fix any $j \neq i$ and $e'$ such that $i$ can retrieve $e$ from $e'$ in $j$.
By the hypothesis, if $(\id,e') \in \pend_j^{h'}$, then $(\id,e') \in \pend_j^{h^*}$; if $j$ forwards $\id$ immediately after $h'$,
then $j$ forwards $\id$ immediately after $h^*$. Either way, $j$ does not forward $\id$ after $h^1$ and $h^2$, proving the result.
Now, suppose that $(\id,e') \notin \pend_j^{h'}$ and that $j$ receives $\id$ for the first time immediately 
after $h'$ from a set of nodes $S'$. Let $l \in S'$ be the node with the smallest identifier, which may be $i$.

If $l$ sends $(\id,e'')$ with $e'' \neq g^h(e',l)$, then $j$ never adds $(\id,e')$ to $\pend_j$ and never forwards this tuple after $h'$. Otherwise, $(\id,e') \in \pend_j^{h^2}$
and $(\id,e^1) \in \pend_l^{h'}$, where $e^1 = g^h(e'',j)$. Since $i$ can retrieve $e$ from $e'$ in $j$, $i$ can retrieve $e$ from $e^1$ in $l$:
$i$ can retrieve $e$ from $g^h(e^2,i)$ where $e^2 = g^h(e',j)$; by Commutativity, we have $e^1 = g^h(g^h(e',l),j) = g^h(e^2,l)$; thus, $i$ can retrieve $e$ from $g^h(g^h(e^1,l),i) = g^h(e^2,i)$.
Therefore, we can apply 2 of the hypothesis to conclude that $(\id,e^2) \in \pend_l^{h^*}$, and $l$ also sends $(\id,e'')$ immediately after $h^*$ to $j$.
Let $S^*$ be the set of nodes sending $\id$ to $j$ immediately after $h^*$. It must hold that $j$ receives $\id$ for the first time after $h^*$, or else we would reach a contradiction
with Property 1): if $j$ had received $\id$ in $h^*$, then he would also have received $\id$ in $h'$, contradicting the supposition that $j$ receives $\id$ for the first time after $h'$.
It thus follows from Property~1) that, if any node $x$ sends $\id$ to $i$ immediately after $h^*$, then $x$ sends $\id$ immediately after $h'$, i.e., $S^* \subseteq S'$. Since $l \in S^*$, 
$l$ is the node with the smallest identifier in $S^*$. Thus, $j$ adds $(\id,e')$ to $\pend_j^{h^1}$, proving 1. In this case, by Proposition~\ref{prop:facts},
$\miss_j^{h^*} = \miss_j^{h'}$ and $\stat_j^{h^*}(j) = \stat_j^{h'}(j)$, and $j$ forwards $\id$ immediately after $h^1$ iff $j$ forwards $\id$ after $h^2$.
This proves the result.

\vspace{0.1in}
{\large \textbf{Proof of Lemma~\ref{lemma:dissem-corr2}}}
Properties~D6) and D7) follow from Lemmas~\ref{lemma:reliable-future-equiv} and~\ref{lemma:reliable-present-equiv}, respectively,
proven in Appendix~\ref{sec:dissem-model}.

\section{Pseudo Random Number Generator}
\label{sec:prng}
We consider a pseudo-random number generator function defined as follows. The definition used in~\cite{Abraham:13}
considers a pseudo-random number generator function $G$ defined as a sequence of $m(k)> k$ bits generated from a seed $s \in \{0,1\}^k$ for some security parameter
$k$. $G$ is said to be a pseudo-random number generator if no probabilistic time machine can distinguish between the outcome of $G$ and truly random sequences
of bits. We formalise this intuition in a way that is useful for our purposes as follows. For all $\xi > 0$ and polynomial $m$,
there exists $k$ such that for every sequence of bits $\vec{b} \in \{0,1\}^{m(k)}$:
\begin{equation}
\label{eq:prng}
\left| \frac{1}{2^{m(k)}} - \frac{|S|}{2^{k}}\right| < \xi,
\end{equation}
where $S$ is the largest set of seeds such that for every $s \in S$ we have $\vec{b} = G(s)$.

For each $i \in \pset$, we define $\prng_i^f$ from $G$ as follows. Let $x = \binom{n-1}{f}$ be the number of different subsets of nodes to whom $i$ may send any event.
We let $m(k) = \nevt \cdot \phi(k) \cdot \lceil\log(x)\rceil$ for some polynomial $\phi(k)$. Any generated stream represents a sequence of $\nevt$ numbers. Given the $\id$-th number $b$, 
$\prng_i^f(\seed_i,\id)$ returns the $y$-th subset, where $y \equiv b \mbox{ (mod $x$)}$. By the definition of $G$, the probability of selecting a particular
sequence of numbers is arbitrarily close to $1/x^{\nevt}$, shown as follows. By using the modulo operation, we ensure that there exist
two integers $(d,r)$ such that: i) $r \in \{0 \ldots x-1\}$; ii) $m(k) = x d + r$; iii) $r$ subsets of each $\id$ are mapped into $d+1$ numbers; and iv) $x-r$ subsets are mapped into $d$ numbers.
The probability of each sequence of subsets being selected is arbitrarily close to selecting for each $\id$ a number that yields the corresponding subset. Since for each $\id$
and subset, there are $d$ or $d+1$ different numbers yielding that subset, by~\ref{eq:prng}, the probability of selecting each subset is either arbitrarily close to $d/m(k)$ or $(d+1)/m(k)$. Thus, the probability
of selecting the sequence is arbitrarily close to a value lying in the interval:
$$\left(\left(\frac{d}{xd + r}\right)^\nevt,\left(\frac{d+1}{xd + r}\right)^{\nevt}\right).$$ 
By taking the limit $\phi(k) \to \infty$, we also have $d \to \infty$, and the probability of each sequence of subsets being selected converges to $1/x^\nevt$.
Thus, $\prng_i^f$ fulfils Property PRNG1 as shown by Proposition~\ref{prop:prng1}. Here, we can adjust the constant $\xi$ by adjusting $k$ and $\phi(k)$.

\begin{proposition}
\label{prop:prng1}
Fix any node $i \in \pset$ and sequence of subsets $\vec{S} = (S_{\id})_{\id \in \ids}$ with $S_{\id} \subseteq \pset \setminus \{i\}$ and $|S_{\id}| = f$ for every $\id \in \ids$.
For any constant $\xi > 0$, there exist $k$ and $\phi(k)$ such that for every $\id \in \ids$:
$$\left|\pra{\vec{S}}{\prng_i^f}{\vec{S}_{-\id}} - \frac{1}{\binom{n-1}{f}}\right| \leq \xi.$$
\end{proposition}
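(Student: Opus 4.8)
The plan is to reduce the conditional probability to a ratio of two seed-counts and then to recognise that ratio as the frequency with which a single uniformly drawn number, reduced modulo $x = \binom{n-1}{f}$, lands on the residue coding $S_{\id}$. First I would fix notation for the construction: write $L = \phi(k)\lceil\log x\rceil$ for the number of bits consumed per number, so that $m(k) = \nevt L$, each of the $\nevt$ numbers ranges over $\{0,\dots,N-1\}$ with $N = 2^L$, and $\prng_i^f(\seed_i,\id)$ reads off the $\id$-th number and reduces it modulo $x$. Writing $N = xd + r$ with $0 \le r < x$, exactly $r$ residues are hit by $d+1$ of the values in $\{0,\dots,N-1\}$ and the remaining $x-r$ residues by $d$ values; let $c(S)$ be the number of values whose residue codes $S$, so that $c(S)\in\{d,d+1\}$ and $\sum_S c(S) = N$. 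Because the reduction rule is identical for every position, $c(S)$ does not depend on $\id$.

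Next I would evaluate the quantity for a genuinely uniform bit source as a reference. Under uniform bits the $\nevt$ numbers are mutually independent and each is uniform on $\{0,\dots,N-1\}$; hence conditioning on the numbers at every position other than $\id$ leaves the $\id$-th number uniform, and the conditional probability of its residue coding $S_{\id}$ is exactly $c(S_{\id})/N$, which lies in $[d/N,(d+1)/N]$. To transfer this to $\prng_i^f$, I would write both relevant events as unions of fibres of $G$: the numerator event, that $G$ yields a bitstring coding exactly $\vec{S}$, and the conditioning event, that it codes $\vec{S}_{-\id}$ on the remaining positions. Approximating $\pr(G\in\mathcal{B}) \approx |\mathcal{B}|/2^{m(k)}$ for each such set $\mathcal{B}$ via~\ref{eq:prng}, and observing that the numerator fibre-count equals $\big(\prod_{\id'\ne\id} c(S_{\id'})\big)\,c(S_{\id})$ while the conditioning count equals $\big(\prod_{\id'\ne\id} c(S_{\id'})\big)\,N$, the common factor $\prod_{\id'\ne\id} c(S_{\id'})$ cancels in the ratio and leaves exactly $c(S_{\id})/N$, up to the error introduced by the approximation.

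It then remains to drive both approximation layers below $\xi$. Taking $\phi(k)\to\infty$ sends $N$ and $d$ to infinity, so that $d/N$ and $(d+1)/N$ both converge to $1/x = 1/\binom{n-1}{f}$, squeezing $c(S_{\id})/N$ to the claimed value; choosing $k$ large then makes the pseudorandomness error as small as desired, and a triangle-inequality combination yields the stated bound for suitable $k$ and $\phi(k)$. The main obstacle is precisely the error control in this last step: the numerator and denominator are each sums over exponentially many (in $\nevt$) fibres of $G$, so the per-string slack of~\ref{eq:prng} could in principle accumulate and swamp the sums. I would resolve this by invoking~\ref{eq:prng} in its relative form, bounding each fibre's weight within a factor $(1\pm\xi')$ of the uniform weight $2^{-m(k)}$; this factor then passes through the sums, so the numerator and denominator each carry a multiplicative error in $[1-\xi',1+\xi']$, and in the ratio these errors combine only into a bounded band $\tfrac{1+\xi'}{1-\xi'}$ around $c(S_{\id})/N$, which is driven to $1$ as $\xi'\to 0$.
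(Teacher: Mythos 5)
Your proposal follows essentially the same route as the paper's proof: both reduce the conditional probability $\pra{\vec{S}}{\prng_i^f}{\vec{S}_{-\id}}$ to a ratio of seed-counts, marginalize the denominator over the choice of the $\id$-th subset, and then drive both the modular-rounding error ($d/N$ versus $1/x$) and the pseudorandomness error to zero by taking $\phi(k)$ and then $k$ sufficiently large. The one point where you go beyond the paper is in explicitly tracking how the per-string slack of~\ref{eq:prng} accumulates over the exponentially many fibres coding $\vec{S}$ — the paper simply asserts the bound $|1/x^{\nevt} - \pr^{\prng_i^f}(\vec{S})| < \xi$ without this bookkeeping — and your multiplicative reading of the indistinguishability property is the natural way to make that asserted step airtight.
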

\begin{proof}
Fix any $\xi > 0$. By the definition of $\prng_i^f$, we can define $k$ and $\phi(k)$ such that:
\begin{equation}
\label{eq:prng1}
|1/x^{\nevt} - \pr^{\prng_i^f}(\vec{S})| < \xi,
\end{equation}
where $\pr^{\prng_i^f}(\vec{S}) = \#S/2^k$ is the probability of selecting a seed such that $\prng_i^f$ yields $\vec{S}$.
We can write:
$$\pra{\vec{S}}{\prng_i^f}{\vec{S}_{-\id}} = a/b = (a/2^k)/(b/2^k) = \pr^{\prng_i^f}(\vec{S})/\pr^{\prng_i^f}(\vec{S}_{-\id}),$$ 
where $a$ is the number of seeds that yield $\vec{S}$, and $b$ is the number of seeds that yield $\vec{S}_{-\id}$,
independently of $S_\id$. In particular, by~\ref{eq:prng}, we can define $\prng_i^f$ such that:
$$
\begin{array}{lll}
& \pr^{\prng_i^f}(\vec{S}_{-\id}) &= \\
=& \sum_{S_\id' \subseteq \pset \setminus \{i\}: |S_\id'| = f} \pr^{\prng_i^f}(S_\id',\vec{S}_{-\id})& >\\
> & \sum_{S_\id' \subseteq \pset \setminus \{i\}: |S_\id'| = f} (1/x^\nevt - \xi) = x*(1/x^\nevt - \xi).
\end{array}
$$
Similarly:
$$\pr^{\prng_i^f}(\vec{S}_{-\id}) < x*(1/x^\nevt + \xi).$$
Therefore, we can write:
$$
\begin{array}{lll}
& \pr^{\prng_i^f}(\vec{S})/\pr^{\prng_i^f}(\vec{S}_{-\id}) &>\\
>& (1/x^\nevt - \xi)/(x(1/x^\nevt+ \xi)) & = \\
=& 1/(x(1 + x^\nevt\xi)) - \xi x^{\nevt}/(x(1+ x^\nevt \xi)) & =\\
=& 1/x - 2\xi x^{\nevt}/(x(1 + x^\nevt\xi)).
\end{array}
$$
Thus, we can define $\prng_i^f$ such that for an arbitrarily small $\xi'>0$ it holds
$\pr^{\prng_i^f}(\vec{S})/\pr^{\prng_i^f}(\vec{S}_{-\id}) > 1/x - \xi'$. Using an identical reasoning, we can also 
define $k$ and $\phi(k)$ such that $\pr^{\prng_i^f}(\vec{S})/\pr^{\prng_i^f}(\vec{S}_{-\id}) < 1/x + \xi'$. This
shows that $|\pra{\vec{S}}{\prng_i^f}{\vec{S}_{-\id}} - 1/x| < \xi'$ for an arbitrarily small $\xi'$, as we intended to prove.
\end{proof}

\section{Dissemination Model}
\label{sec:dissem-model}
For any fanout $f$, the reliability of dissemination $q(f)$ is the probability of any node $i$ receiving an event, and is 
determined using a model similar to the SIR model\,\cite{Abbey:52}. More precisely, the source introduces $e \in \evt^t$ with identifier $\id$ in 
round $r^* = \id + \nrndm-1$ of stage $t$. In every round $r$, $Q^{\id,f}(S,I,R,r)$ is the probability that nodes from the set $S$ have received and 
forwarded a message containing $\id$ by round $r$, $I$ contains the nodes that received a message containing $\id$ and will forward it in the next round, 
and $R$ contains the nodes that have not received $\id$, such that $(S,I,R)$ forms a partition of $\pset$.
This probability is defined recursively as $Q^{\id,f}(\emptyset,\emptyset,\pset,r) = 1$ for $r<r^*$,
$Q^{\id,f}(\emptyset,\{s\},\pset \setminus \{s\},r^*) = 1$, where $s$ is the source, and for any $r > r^*$, partition $(S,I,R)$, and $I' \subseteq R$:
$$
\begin{array}{ll}
  &Q^{\id,f}(S,I,R,r+1) =\\
 =&\sum_{I' \subseteq S} Q^{\id,f}(S \setminus I',I',R \cup I,r) P^{\id,f}(I | I', R),
 \end{array}
 $$
where $P^{\id,f}(I | I',R)$ is the probability of exactly all nodes in $I$ among $R \cup I$ receiving a dissemination message containing $\id$ from some node in $I$,
given that each node forwards $\id$ to a subset of $f$ neighbours chosen uniformly at random. 
The dissemination ends in round $r = \nrndm + \id + \dens$. Then, $q(f)$ is the probability of any node $i$ receiving $\id$ during the present stage and is defined as:
\begin{equation}
\label{eq:sir}
q(f) = \sum_{S \subseteq \pset : i \in S} Q^{\id,f}(S,\emptyset,\pset \setminus S,r).
\end{equation}

\subsection{Auxiliary Proofs}
We show that disseminating events using the pseudo-random number generator approximates the dissemination model for any fanout $f$. 
Namely, for each partition $(S,I,R)$ of $\pset$, we denote by $\hist^{\id,t}(S,I,R,r)$ the set of histories $h$ of stage $t$ such that $\rnd^h = r$
and $S$ equals to the set of nodes that already received $\id$, while $I$ is the set of nodes that will forward $\id$ in the next round:
$$S = \{j \in \pset | \exists_{r'} \exists_{o \in \pset \setminus \{j\}} (\id,r') \in \re_j^h(o)\}.$$
$$I = \{j \in \pset \setminus S | \exists_{e \in \evt} (\id,e) \in \pend_j^h\}.$$

Lemma~\ref{lemma:reliable-future-equiv} demonstrates that every player $i$ expects every event to be disseminated in an approximate fashion to the dissemination model
corresponding to any fanout, in future stages.

\begin{lemma}
\label{lemma:reliable-future-equiv}
Fix any fanout $f$, stage $t \in \mathbb{N}$, terminal history $z \in \term^{t}$, player $i \in \pset$, and identifier $\id \in \ids$.
For every $r \in \{\nrndm + \id - 1 \ldots \nrnd\}$, $\xi > 0$, and partition $(S,I,R)$ of $\pset$ such that $i \in R$, there exists a function $\prng_i^f$ such that:
\begin{equation}
\label{eq:equiv}
\left|\sum_{h \in \hist^{\id,t+1}(S,I,R,r)} \pra{h}{\vsigma^f}{z} - Q^{\id,f}(S,I,R,r)\right| \leq \xi.
\end{equation}
\end{lemma}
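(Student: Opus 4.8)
The plan is to prove the bound by induction on the round number $r$, establishing it simultaneously for every partition $(S,I,R)$ of $\pset$ with $i \in R$. Write $\hat{Q}(S,I,R,r) = \sum_{h \in \hist^{\id,t+1}(S,I,R,r)} \pra{h}{\vsigma^f}{z}$ for the probability, under $\vsigma^f$, that the dissemination of $\id$ in stage $t+1$ has reached state $(S,I,R)$ by round $r$; the goal is to show $|\hat{Q}-Q^{\id,f}|$ can be driven below $\xi$ by a suitable choice of the PRNG functions (PRNG1 supplies such a function for every node). For the base case $r = \nrndm + \id - 1 = r^*$, the source is acquiescent and introduces $\id$ exactly as prescribed by $\vsigma^f$, so with probability $1$ the round-$r^*$ state is $(\emptyset,\{s\},\pset\setminus\{s\})$; hence $\hat{Q}$ agrees with $Q^{\id,f}$ on every partition (both equal $1$ on this state, which has $i \in R$ since $i \neq s$, and $0$ elsewhere), and the difference is exactly $0$.

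For the inductive step, I would decompose $\hat{Q}(S,I,R,r+1)$ by conditioning on the state one round earlier. Every $h \in \hist^{\id,t+1}(S,I,R,r+1)$ immediately succeeds a unique history $h'$ whose round-$r$ state is $(S\setminus I',\,I',\,R\cup I)$, where $I' \subseteq S$ is the set of nodes that forwarded $\id$ at the transition; by Proposition~\ref{prop:facts} and the definition of $\vsigma^f$, these are exactly the nodes that first received $\id$ in round $r$ and forward it to $\prng_j^f(\seed_j,\id)$ in round $r+1$. Since $i \in R$ we have $i \in R\cup I$ and $i \notin I' \subseteq S$, so every predecessor state keeps $i$ in the not-yet-received component and the induction hypothesis applies to each. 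Thus
\begin{equation*}
\hat{Q}(S,I,R,r+1) = \sum_{I' \subseteq S}\ \sum_{h'} \pra{h'}{\vsigma^f}{z}\, p_{h'}(I \mid I', R),
\end{equation*}
where $p_{h'}(I\mid I',R)$ is the probability, given $h'$, that the forwarding of the nodes in $I'$ infects exactly $I$ among $R\cup I$.

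The crux is to show that each $p_{h'}(I\mid I',R)$ is within an arbitrarily small $\xi_1$ of the idealised $P^{\id,f}(I\mid I',R)$, uniformly in $h'$. Here I invoke PRNG1: conditioning on $h'$ fixes the forwarding subsets of every node for all identifiers other than $\id$, and the seeds are drawn independently across nodes, so the joint law of $\{\prng_j^f(\seed_j,\id)\}_{j\in I'}$ is within $\xi_1$, measured on the event ``newly infected $= I$'' (a fixed function of these subsets), of the product of independent uniform $f$-subsets --- precisely the law defining $P^{\id,f}(I\mid I',R)$. Granting this, since all quantities lie in $[0,1]$ and $\sum_{h'}\pra{h'}{\vsigma^f}{z}=\hat{Q}(S\setminus I',I',R\cup I,r)$ is within the induction error $\xi_0$ of $Q^{\id,f}(S\setminus I',I',R\cup I,r)$, each $I'$-term differs from the matching term of the recursion by at most $\xi_0+\xi_1$; summing the matched terms reproduces exactly $Q^{\id,f}(S,I,R,r+1)$, giving $|\hat{Q}-Q^{\id,f}| \le 2^{|S|}(\xi_0+\xi_1) \le 2^n(\xi_0+\xi_1)$ at round $r+1$.

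The main obstacle is controlling the accumulation of error across rounds together with the PRNG approximation. I would use two facts: the dissemination of $\id$ spans at most $\dens$ rounds (the state freezes once $I=\emptyset$, after round $\nrndm+\id+\dens$) and $\dens$ does not grow with $\nevt$, so there are only constantly many nontrivial inductive steps; and $n$ is fixed. With the per-step blow-up $2^n$ and step count $\dens$ both constant, the recursion $\xi_{k+1}\le 2^n(\xi_k+\xi_1)$ started from $\xi_0=0$ yields a final error bounded by a constant multiple of $\xi_1$, which PRNG1 lets us make smaller than any target $\xi$. The delicate point throughout is that it is the \emph{conditional-independence} clause of PRNG1 --- not merely near-uniformity of a single subset --- that licenses treating the forwarding of $\id$ as independent of the already-disseminated identifiers recorded in $h'$ and independent across the forwarding nodes, which is exactly what the idealised transition $P^{\id,f}$ assumes.
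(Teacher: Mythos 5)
Your proposal is correct and takes essentially the same approach as the paper: induction on the round number, with the source's deterministic introduction of $\id$ as the base case and the conditional-independence clause of PRNG1 used to approximate each one-round forwarding transition by the idealised $P^{\id,f}(I\mid I',R)$ of the SIR recursion. Your version is in fact slightly tidier than the paper's --- you decompose backward over predecessor states so the sum matches the recursion defining $Q^{\id,f}$ term by term, and you make explicit the error accumulation ($2^{n}$ terms per step, at most $\dens$ nontrivial steps, all constants) that the paper leaves implicit in its chain of ``$\approx$'' signs.
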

\begin{proof}
We show the hypothesis using induction on $r$. 

\vspace{0.1in}
\textbf{Base case.} We have $r = \nrndm + \id - 1$. Fix any $h \in \hist$ such that $\pra{h}{\vsigma^f}{z} > 0$ with $\rnd^h = r$.
By construction, we have $(\id,e) \in \pend_s^h$, with $s$ being the source and $e$ the event disseminated with identifier $\id$.
Since no other node disseminates events, it holds that $\hist^{\id}(\emptyset,\{s\},\pset \setminus \{s\},r)$ includes every history leading to round $r$. Thus, it holds:
$$\sum_{h \in \hist^{\id,t}(\emptyset,\{s\},\pset \setminus \{s\},r)} \pra{h}{\vsigma^f}{z} = 1 =  Q^{\id,f}(\emptyset,\{s\},\pset \setminus \{s\},r).$$
This proves the base case.

\vspace{0.1in}
\textbf{Induction step.} Suppose that the hypothesis holds for some $r$. By construction, we have $\stat_i^h(i) = \good$ and $\miss_i^h = \emptyset$ for every $i \in \pset$ and history $h$ with $\rnd^h = r$.
Fix any history $h^1$ immediately preceding the last monitoring round of stage $t+1$. For any partition $(S,I,R)$ and $h \in \hist^{\id,t}(S,I,R,r)$ succeeding $h^1$, 
there exists exactly one history $h^*$ per set $I' \subseteq R$ succeeding $h$ where exactly the nodes from $I'$ receive $\id$ from some node in $I$ after $h$:
$$h^* \in \hist^{\id,t}(S \cup I, I', R \setminus I', r+1).$$ 
For each such $I'$, it holds:
\begin{equation}
\label{eq:rfe}
\cup_{j \in I} \prng_j^f(\seed_j^h,\id) \setminus S = I'.
\end{equation}
Since seeds are determined in the last monitoring round, if we add over every $h \in \hist^{\id}(S,I,R,r)$ where nodes of $I$ to exactly select nodes from $I'$, 
then we are adding over all the seeds selected by the mediator for nodes in $I$ after $h^1$, such that~\ref{eq:rfe} holds. 
By~PRNG1, when every seed is generated independently and uniformly at random,
the probability of such even occurring is arbitrarily close to that when nodes in $I$ select the subset of neighbours to whom they forward the
event uniformly at random. The latter event occurs with probability $\pr^{\id,f}(I' | I, R)$.
That is, an approximate fraction $\pr^{\id,f}(I' | I, R)$ of the histories from $\hist^{\id,t}(S,I,R,r)$
has nodes from $I$ selecting those from $I$. By the hypothesis, for each $I'$, there exists $\prng_i$ such that:
$$
\begin{array}{llr}
  & \sum_{h \in \hist^{\id,t}(S \cup I,I',R \setminus I',r+1)} \pra{h}{\vsigma^f}{h^1}& \approx \\
\\
\approx & \sum_{h \in \hist^{\id,t+1}(S,I,R,r)} \pra{h}{\vsigma^f}{h^1} \pr^{\id,f}(I' | I, R) & \approx\\
\\
\approx & Q^{\id,f}(S,I,R,r) \pr^{\id,f}(I' | I, R) & =\\
\\
 = & Q^{\id,f}(S \cup I,I',R \setminus I',r+1).
\end{array}
$$
Since this holds for any $h^1$, the result follows.
\end{proof}

\begin{lemma}
\label{lemma:reliable-present-equiv}
Fix any fanout $f$, stage $t \in \mathbb{N}$, player $i \in \pset$, private history $h_i \in \hist_i$, and identifier $\id \in \ids$ not 
yet introduced by the source in $h_i$. For every $r \in \{\nrndm + \id - 1 \ldots \nrnd\}$, $\xi > 0$, 
and partition $(S,I,R)$ of $\pset$ such that $i \in R$, there exists a function $\prng_i^f$ such that:
\begin{equation}
\label{eq:equiv2}
\sum_{h \in \hist^{\id,t}(S,I,R,r)} \pra{h}{\vsigma^f,\mu}{h_i}  - Q^{\id,f}(S,I,R,r) \leq \xi.
\end{equation}
\end{lemma}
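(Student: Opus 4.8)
The plan is to prove Lemma~\ref{lemma:reliable-present-equiv} by the same inductive skeleton used for Lemma~\ref{lemma:reliable-future-equiv}, inducting on the round number $r \in \{\nrndm+\id-1 \ldots \nrnd\}$ and replacing the two places where that proof exploited full observability of the past by arguments based on $i$'s belief $\mu$ and the conditional-independence guarantee PRNG1. First I would set up the base case exactly as before: for $r = \nrndm+\id-1$ the source $s$ is the only node holding $\id$, so every history $h$ with $\pra{h}{\vsigma^f,\mu}{h_i} > 0$ lies in $\hist^{\id,t}(\emptyset,\{s\},\pset\setminus\{s\},r)$, matching $Q^{\id,f}(\emptyset,\{s\},\pset\setminus\{s\},r)=1$.

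For the inductive step I would push probability mass one round forward: to reach $(S,I,R)$ at round $r+1$ the dynamics must sit in some state $(S\setminus I',I',R\cup I)$ at round $r$ and the infectious set $I'$ must forward so that exactly the nodes of $I$ become newly infected. The crucial difference from the future case is how the forwarding targets $\prng_j^f(\seed_j,\id)$ are distributed under $\mu$. Here I would invoke Consistency of $(\vsigma^f,\mu^f)$ together with PRNG1: since $\id$ has not yet been introduced in $h_i$, player $i$ has observed at most the subsets $\prng_j^f(\seed_j,\id')$ for previously introduced $\id'\neq\id$, and by the conditional-independence bound PRNG1 the belief over $\prng_j^f(\seed_j,\id)$ stays within $\xi$ of uniform regardless of whether $\seed_j$ was already fixed in the last monitoring round. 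Consequently the belief probability that $I'$ infects a given set is within $\xi$ of the idealized transition $P^{\id,f}(\cdot\mid I',R)$, and the induction hypothesis lets me substitute the round-$r$ terms by their $Q^{\id,f}$ counterparts up to $\xi$.

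The second, genuinely new ingredient — and the source of the one-sided bound — is that, unlike in a fresh future stage, an infectious node $j$ may have deviated before $h_i$ so that $\stat_j^h(j)=\bad$ or $\seq(\id)\in\miss_j^h$; by the definition of $\sigma_j^f$ and Proposition~\ref{prop:facts} such a node drops $\id$ rather than forwarding it. Under the natural coupling to the idealized model this only deletes forwarding edges, so relative to $Q^{\id,f}$ the belief can only place less mass on configurations in which many nodes (in particular $i$) have received $\id$; this is exactly why the estimate is one-sided ($\leq\xi$) rather than the two-sided bound of Lemma~\ref{lemma:reliable-future-equiv}. Accumulating the per-step $\xi$-errors over the finitely many rounds and absorbing them into a single constant (achievable by sharpening the PRNG parameter $k,\phi(k)$) yields the claimed bound, and summing over the terminal partitions delivers the reception estimate $q(f)+\xi$ that D7 needs.

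The step I expect to be the main obstacle is the belief argument for the hidden seeds: I must show that conditioning on everything $i$ observes in $h_i$ — in particular the forwarding patterns $i$ has already seen for other identifiers of the same sequences, and any deviations that force some $\seq(\id)\in\miss_j$ — still leaves the conditional law of $\prng_j^f(\seed_j,\id)$ within $\xi$ of uniform and approximately independent across $j\in\pset\setminus\{i\}$. This is precisely where the conditional-independence formulation of PRNG1, rather than mere marginal uniformity, is indispensable, and where Consistency is required to guarantee that $i$'s belief coincides with this conditional law in the $\epsilon\to 0$ tremble limit defining $\mu^{\vsigma}$.
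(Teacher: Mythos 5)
Your proposal is correct and follows essentially the same route as the paper's proof: induction on the round with the same base case at the source, an exchange/conditional-independence argument via PRNG1 showing that, conditional on everything observed about other identifiers, the hidden value $\prng_j^f(\seed_j,\id)$ remains near-uniform, and the one-sided bound arising from nodes $j\in I$ with $\stat_j(j)=\bad$ or $\seq(\id)\in\miss_j$ dropping $\id$ (the paper's restriction to the subset $I^*\subseteq I$ of actual forwarders). The only difference is presentational: you phrase the one-sidedness as a coupling that deletes forwarding edges, where the paper directly asserts $\pr^{\id,f}(I'\mid I^*,R)\leq\pr^{\id,f}(I'\mid I,R)$.
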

\begin{proof}
The proof is identical to Lemma~\ref{lemma:reliable-future-equiv}, except that now $i$ has acquired some knowledge
regarding the outputs of $\prng_j^f$ for each $j \neq i$, after the observation of $h_i$.

\vspace{0.1in}
\textbf{Base case.} Identical to Lemma~\ref{lemma:reliable-future-equiv}.

\vspace{0.1in}
\textbf{Induction step.} Suppose that the hypothesis holds for some $r$. For any partition $(S,I,R)$, fix any $h \in \hist^{\id,t}(S,I,R,r)$ such that $\pra{h}{\vsigma^f,\mu}{h_i} > 0$,
and fix any $I' \subseteq R$. Fix $\vec{S} = (S_{\id'}^j)$, with one subset $S_{\id'}^j$ per node $j \neq i$ and identifier $\id'$ such that $|S_{\id'}^j| = f$
and $S_{\id'}^j = \prng_j^f(\seed_j^h,\id')$. Fix $h^1$ preceding $h$ and immediately preceding the last monitoring round from stage $t$.
Since the actions observed between $h^1$ and $h$ do not depend on $\prng_j^f(\seed_j^h,\id)$, for each $(L_{\id}^j)_{j \in I}$ with $|L_{\id}^j| = f$,
there exists a history $h' \in \hist^{\id,t}(S,I,R,r)$ such that $\pra{h}{\vsigma^f,\mu}{h^1} = \pra{h'}{\vsigma^f,\mu}{h^1}$ and $h'$ may differ from $h$
exactly in that $\prng_j^f(\seed_j^{h'},\id) = L_{\id}^j$ for each $j \in I$. In particular, this holds for any history $h'$ such that every node in $I'$
receives $\id$ from some node in $I$, i.e.:
\begin{equation}
\label{eq:rfe2}
\cup_{j \in I^*} \prng_j(\seed_j^{h'},\id) \setminus S = I',
\end{equation}
where $I^* \subseteq I$ is the largest subset such that for every $j \in I^*$ we have $\stat_j^{h}(j) = \good$ and $\seq(\id) \notin \miss_j^h$.
If we add over every such history $h'$, then we are adding over all the seeds for nodes in $I^*$ such that~\ref{eq:rfe2} holds.
By PRNG1, when every seed among the set that yields $S_{\id'}^j$ for every $\id' \neq \id$ is generated uniformly at random, 
this is approximately close to nodes in $I^*$ selecting the subset of neighbours to whom they forward the event uniformly at random, with an error lower than $\xi$.
This event occurs with probability $\pr^{\id,f}(I' | I^*, R) \leq \pr^{\id,f}(I' | I, R)$, since $I^* \subseteq I$. For every such $h'$, the history immediately
succeeding belongs to $\hist^{\id,t}(S \cup I,I',R \setminus I',r+1)$. Since the above holds for any $h$, an approximate fraction $\pr^{\id,f}(I' | I^*, R)$ 
of the histories from $\hist^{\id,t}(S,I,R,r)$ has the nodes from $I^*$ forwarding $\id$ to those from $I'$. Hence, we can write:
$$
\begin{array}{llr}
  & \sum_{h \in \hist^{\id,t}(S \cup I,I',R \setminus I',r+1)} \pra{h}{\vsigma^*}{z}& \leq \\
\\
\leq & \sum_{h \in \hist^{\id,t}(S,I,R,r)} \pra{h}{\vsigma^*}{z} (\pr^{\id,f}(I' | I^*, R) + \xi) & \leq\\
\\
\leq & \sum_{h \in \hist^{\id,t}(S,I,R,r)} \pra{h}{\vsigma^*}{z} (\pr^{\id,f}(I' | I, R) + \xi).
\end{array}
$$

By the hypothesis, for each $I'$, there exists $\prng_i^f$ such that for an arbitrarily small $\xi' > 0$:
$$
\begin{array}{llr}
  & \sum_{h \in \hist^{\id,t}(S \cup I,I',R \setminus I',r+1)} \pra{h}{\vsigma^*}{z}& \leq \\
\\
\leq & (Q^{\id,f}(S,I,R,r)+ \xi')( \pr^{\id,f}(I' | I, R) + \xi)& =\\
\\
\leq & Q^{\id,f}(S,I,R,r) \pr^{\id,f}(I' | I, R) + &\\
  & + \xi'(\xi +\pr^{\id,f}(I' | I, R) + Q^{\id,f}(S,I,R,r))& =\\
\\
 = & Q^{\id,f}(S \cup I,I',R \setminus I',r+1) + \xi'',
\end{array}
$$
where:
$$\xi'' = \xi'(\xi +\pr^{\id,f}(I' | I, R) + Q^{\id,f}(S,I,R,r)).$$
Since $\xi''$ is also arbitrarily small, the result holds.
\end{proof}

\end{document}